\documentclass[a4paper,11pt,twoside]{article}



\usepackage[utf8]{inputenc}
\usepackage[T1]{fontenc}
\usepackage[english]{babel}


\usepackage{charter}

\frenchspacing
\usepackage{indentfirst}
\usepackage{xcolor}
\usepackage{graphicx}
\usepackage{verbatim}

\usepackage{caption,subcaption} 
\usepackage{multirow}
\usepackage{booktabs}
\parskip 1.0ex

\usepackage{hyperref}

\usepackage{pifont}
\newcommand{\cmark}{\textrm{\ding{52}}}%
\newcommand{\xmark}{\textrm{\ding{56}}}%

\usepackage[top=3.cm, bottom=4.0cm, left=2.2cm, right=2.2cm]{geometry}
\usepackage{newtxtext,newtxmath,amsmath}

\usepackage{amsthm}	
\usepackage{amsfonts}	

\usepackage{amssymb	
           ,bbm
            ,units 
           }

\usepackage{enumerate}


\usepackage[square,numbers,sort&compress]{natbib}
\bibliographystyle{abbrvnat}




\usepackage{
            ulem		
           ,soul		
} \normalem

\numberwithin{equation}{section}
\numberwithin{figure}{section}
\numberwithin{table}{section}
 \usepackage[nodayofweek]{datetime}


\title{Forward utility and market adjustments in relative investment-consumption games of many players}
\author{
 	\normalsize Gon\c calo dos Reis\footnote{G.~dos Reis acknowledges support from the \emph{Funda{\c c}$\tilde{\text{a}}$o para a Ci$\hat{e}$ncia e a Tecnologia} (Portuguese Foundation for Science and Technology) through the project UIDB/00297/2020 (Centro de Matem\'atica e Aplica\c c$\tilde{\text{o}}$es CMA/FCT/UNL)} \\[8pt]
           \footnotesize University of Edinburgh\\
           \footnotesize School of Mathematics \\
           \footnotesize Edinburgh, EH9 3FD, UK\\  
           \footnotesize and \\
	         \footnotesize Centro de Matem\'atica e Aplica\c c$\tilde{\text{o}}$es \\
	       \footnotesize (CMA), FCT, UNL, Portugal \\
          \footnotesize  G.dosReis@ed.ac.uk
 \and
\normalsize Vadim Platonov\\[8pt] 
         \footnotesize University of Edinburgh\\ 
         \footnotesize   School of Mathematics \\
         \footnotesize   Edinburgh, EH9 3FD, UK\\    
         \footnotesize     \\
		     \footnotesize   \\
         \footnotesize         \\
        \footnotesize  v.d.platonov@sms.ed.ac.uk
}

\date{ \currenttime, \ddmmyyyydate\today\qquad{(File: \tt \jobname.tex})}


\theoremstyle{plain}
\newtheorem{theorem}{Theorem}[section]

\newtheorem{proposition}[theorem]{Proposition}
\newtheorem{corollary}[theorem]{Corollary}

\newtheorem{definition}[theorem]{Definition}

\newtheorem{remark}[theorem]{Remark}

\newtheorem{assumption}[theorem]{Assumption}


\newcommand{\bE}{\mathbb{E}}
\newcommand{\bF}{\mathbb{F}}

\newcommand{\bP}{\mathbb{P}}

\newcommand{\bR}{\mathbb{R}}

\newcommand{\cA}{\mathcal{A}}

\newcommand{\cF}{\mathcal{F}}

\newcommand{\cZ}{\mathcal{Z}}



\definecolor{darkgreen}{rgb}{0,0.35,0}


\newcommand{\geqs}{\geqslant}
\newcommand{\leqs}{\leqslant}




\hyphenation{nu-me-ri-cal te-cno-lo-gia a-ppro-xi-ma-ting}
\begin{document}

\maketitle

\begin{abstract}
    We study a portfolio management problem featuring many-player and mean field competition, investment and consumption, and relative performance concerns under the forward performance processes (FPP) framework. We focus on agents using power (CRRA) type FPPs for their investment-consumption optimization problem under a common noise Merton market model. We solve both the many-player and mean field game providing closed-form expressions for the solutions where the limit of the former yields the latter. \\
    In our case, the FPP framework yields a continuum of solutions for the consumption component as indexed to a market parameter we coin ``market-risk relative consumption preference''. The parameter permits the agent to set a preference for their consumption going forward in time that, in the competition case, reflects a common market behaviour.
    We show the FPP framework, under both competition and no-competition, allows the agent to disentangle her risk-tolerance and elasticity of intertemporal substitution (EIS) just like Epstein-Zin preferences under recursive utility framework and unlike the classical utility theory one. This, in turn, allows a finer analysis on the agent's consumption ``income'' and ``substitution'' regimes, and, of independent interest,  motivates a new strand of economics research on EIS under the FPP framework.\\
    We find that competition rescales the agent's perception of consumption in a non-trivial manner. We provide numerical illustrations of our results. 
\end{abstract}

{\bf Keywords:} 
optimal investment and consumption, 
forward performance criteria, 
mean field games, Relative performance, common noise Merton problem, elasticity of intertemporal substitution
\vspace{0.2cm}

\noindent
{\bf 2020 AMS subject classifications:} \\
Primary: 
91G10, 
91A30, 
91G80,  
49N80.  
Secondary: 
60G60, 
91A06,  
35Q93 
\vspace{0.2cm}

\noindent{\bf JEL subject classifications:}\\
G11, C73, P46, 
C68	


 \newpage
\section{Introduction}

We study many-player games of investment-consumption optimisation under the forward performance framework in a Merton market model featuring common-noise. In each game, the agents trade only in a specific stock affected by a common market signal and seek to optimise their wealth and consumption process while having relative concerns towards the average wealth and consumption of the other agents. The core features of our model are (a) relative consumption concerns, (b) relative wealth concerns, (c) asset specialisation, (d) game competition (finite-player and mean field games) and (e) the forward performance processes (FPP) view. The former (a)--(d) have been analysed in  \cite{lackersoret2020many} through the lens of the classical utility framework. In this work we subscribe to the elegantly argued (and supported by empirical evidence) FPP paradigm of \cite{AnthropelosGengZariphopoulou2020} and view  \cite{lackersoret2020many} through its light. The arguments we make using FPPs, competition and the presence of consumption make the analysis involved and reveal elements not present in \cite{lackersoret2020many} (despite the similarity to (a)--(d)) or \cite{karatzas1997explicit}.

Throughout we consider fund managers that trade between their individual stock and common riskless asset, the so-called problem of \emph{asset specialisation}.  This problem was addressed initially by \cite{brennan1975optimal,merton1969lifetime} and further developed in \cite{van2009information},   \cite{kacperczyk2005industry},  \cite{LackerZariphopoulou2017, lackersoret2020many},  \cite{AnthropelosGengZariphopoulou2020}.
These works report on asset specialisation for various reasons: asset familiarity, trading costs and constraints, reduction of learning costs or industry specialisation. Competition games of finite-player or mean field type in the context of asset specialisation has received much attention recently, and we defer to  \cite{LackerZariphopoulou2017,lackersoret2020many,platonov2020forward,AnthropelosGengZariphopoulou2020} for an overview that relates to our context. For a long-view of mean field games theory and its applications we refer to the monographs \cite{carmona2018probabilistic}.

The literature on portfolio management for agents with utility preferences under performance concerns is a growing one. Benchmarking is a feature of human nature and critical for fund managers who need to keep the fund competitive. We refer to the excellent economic and finance motivations found in \cite{AnthropelosGengZariphopoulou2020} and also  \cite{BielagkLionnetDosReis2017,LackerZariphopoulou2017,lackersoret2020many}. In this work, we build on the structure proposed firstly in \cite{EspinosaTouzi2015} and then in \cite{FreiDosReis2011} and  \cite{BielagkLionnetDosReis2017,LackerZariphopoulou2017,deng2020relative,lackersoret2020many}. Additionally, we point the reader to the beautiful introductions of \cite{LackerZariphopoulou2017,lackersoret2020many} who brought those concepts to the framework of mean field games. Further, those works also make for an excellent review of mean field games in the context of the Merton problem. Investment under performance concerns has been taken up in many variants, most recently: \cite{deng2020relative} analyse a partial information finite-player CARA utility game with performance concerns employing Forward Backward SDE (FBSDE) machinery (market model with stochastic coefficients), see \cite[Table 1]{deng2020relative}. \cite{fu2020mean} works in the same context but addresses the mean field game in the full non-Markovian framework using mean field FBSDEs, see also \cite[Table 1]{fu2020mean}. Aspects of investment under relative consumption are much less explored, only in \cite{lackersoret2020many}.

At the core of our work are the ideas underpinning the \textit{forward investment performance criteria} introduced by Musiela and Zariphopoulou \cite{musiela2006investments} and  \cite{henderson2007horizon} as a way to solve portfolio optimisation problems without the specific drawbacks of the classical utility theory. 
In classical utility theory when entering the market, investors prescribe their risk profile at horizon time and therefore cannot adapt it to changes in market conditions or update risk preferences; additionally, the investment time horizon is fixed, and the portfolio is derived in respect to this temporal reference point. 
In opposition, the forward criteria (portfolio) optimisation problem is a maximisation of a conditional expectation of a certain \textit{stochastic utility function}. Here, investors need only to state their risk profile for the market entry time $t=0$. 
Hence, time-horizons are arbitrary which is more realistic for practitioners. In essence, the forward \textit{dynamic} performance map is built with the property of dynamic consistency taken as the starting point.

\textit{Forward performance processes} (FPP), as defined in Musiela and Zariphopoulou
\cite{MusielaZariphopoulou2009}, capture the time evolution of such stochastic utility functions and since then much progress has been done towards their characterisation in a variety of settings \cite{berrier2011forward}, \cite{KarouiMrad2013}, \cite{Zitkovic2009},
\cite{elkaroui2018consistent}, 
\cite{KarouiHillairetMrad2019}, 
\cite{chong2018optimal}, 
\cite{kallblad2020black}, 
\cite{li2020game}. 
For excellent literature overviews on the developments of FPP we point the reader to \cite{chong2019pricing}, \cite{avanesyan2020construction}, \cite{matoussi2020dynamic} and \cite{kallblad2020black}. 
The latter provides a near taxonomic breakdown of the literature of the forward criteria with respect to their behaviour in time along with approaches used. Here, as in \cite{kallblad2020black} or \cite{avanesyan2020construction}, we focus on the class of forward investment-consumption criteria with the specific property that they are differentiable in time (a dynamics without volatility). 
In \cite{avanesyan2020construction} the authors study an optimal portfolio selection problem (no consumption, no competition) under the FPP of power form in an incomplete market. Their arguments make use of HJB machinery and, like in this work or \cite{chong2018optimal}, an assumption of ``separable power factor form'' for the FPP is made (Assumption \ref{ass:form-of-U-V-mean-field-g-with-f} below). 
The study of consumption with FPPs (or forward utility) was considered in \cite{berrier2011forward,elkaroui2018consistent,chong2018optimal}; interesting is \cite{chong2018optimal} who manage the difficulty associated to having a FPP dynamics featuring a volatility component by assuming the volatility to be exogenously postulated by the agent as a market perception. Their FPP consumption optimisation problem yields non-uniqueness.

Competition within the framework of FPPs is in its infancy. The concept originated in \cite{geng2017passive} for a two-player game in the CRRA context with a Merton market model. Its vision is expanded upon in \cite{AnthropelosGengZariphopoulou2020}, still within a two-player game but allowing for random coefficients in an incomplete market model. 
Inspired by \cite{geng2017passive} and \cite{LackerZariphopoulou2017}, the formulation of \textit{mean field forward performance games} and the concept of \textit{mean field forward equilibrium} was proposed by \cite{platonov2020forward}. 
From a competition perspective, the FPP approach reflects that agents need not all have the same time-horizons since, under asset specialisation, different industries have different timeframes (although quarterly reporting are common points of reference), see \cite{AnthropelosGengZariphopoulou2020} for a full discussion. 

The FPP framework is not the only theory seeking to overcome the limitations of the classical utility theory. An alternative from the 90s is the so-called Epstein-Zin preferences within the recursive utility framework \cite{epstein1989substitution,weil1989equity} and expanding on the theoretical framework of Kreps-Porteus \cite{krepsporteus1978}. 
For the consumption problem, the classic CRRA utility optimisation yields a strict relation between the agent's risk tolerance $\delta$ and \textit{Elasticity of Intertemporal Substitution} (EIS), namely that $\textrm{EIS}^{\textrm{classic, \cite{lackersoret2020many}}}=\delta$ 
(e.g., see \cite{lackersoret2020many}) and disentangling both was at the heart of the Epstein-Zin preferences.  
As discussed in \cite{thimme2017intertemporal} regarding EIS, higher interest rates increase the overall wealth of the consumer due to higher cash-flows in future periods. 
The effect where consumers spend a part of this higher future income today is called ``income effect''. 
On the other hand, with higher interest rates, one needs to save
a smaller fraction of today's consumption to have an additional consumption unit tomorrow. 
This motivation to save more today and postpone today's consumption is called the ``substitution effect''. 
Consumers with a high EIS are more willing to substitute consumption over time, which directly impacts the ``substitution effect''. Now, classic theory yields $\textrm{EIS}^{\textrm{classic, \cite{lackersoret2020many}}}=\delta$. However, the risk tolerance is atemporal, relating how a consumer substitutes consumption across different states of the world. In opposition, EIS is intertemporal, relating how a consumer substitutes consumption between now and later \cite{cundy2018essays}. Thus the classic utility framework cannot capture how agent competition, with or without performance concerns, changes EIS. 

Time-continuous stochastic preferences capturing the \textit{intertemporal substitution} of Hindy-Huang-Kreps \cite{hindy1992intertemporal} have been used to understand consumption optimisation \cite{benth2001optimal} and Arrow-Debreu equilibria \cite{bank2001existence}. More recently, \cite{aase2016recursive} revisited the EIS discussion using \cite{duffie1992stochastic}'s stochastic recursive utility in continuous time specified to the Kreps-Porteus \cite{krepsporteus1978} family. Forthcoming is \cite{bismuthgueant2019} who recast the MFG of \cite{LackerZariphopoulou2017} under the Epstein-Zin/Kreps-Porteus recursive utility framework. Nonetheless, inspecting this reference one sees several of the criticisms aimed at the standard utility framework appearing again (see \cite{AnthropelosGengZariphopoulou2020}): \cite{campani2019approximate} emphasizes the dependence on investment's horizon (``The investor's horizon also plays a crucial role in optimal policies'') and that the underlying model is fixed throughout the investment time frame. In a nutshell, the FPP works with the forward point of view to investment (risk profile is prescribed at $t=0$) while the recursive utility still works within the backward one (risk profile is prescribed at a horizon time $T>0$.

With this work, we investigate an $n$-player and mean field game for asset specialised agents who optimise their investment-consumption under relative wealth-consumption concerns through the lens of the FPP framework. The tractability of our setting yields findings not seen in the classic utility theory. Namely, \textbf{our contributions are:}

    (I) We expand the concept of CRRA FPP to a game-theoretical framework both for finite-player games and mean field games within the common-noise Merton market model. The MFG approach is based on the simpler concept work \cite{platonov2020forward} for CARA FPP (wealth optimisation only), and \cite{lackersoret2020many} for the classical CRRA investment-consumption utility problem. For both games, we provide explicit expressions for the quantities of interest. From a methodological point of view, we solve the control problem by combining convex duality with HJB-type arguments.
    
    (II) With the inclusion of consumption, the CRRA FPP wealth-consumption problem does not have a unique solution but a class of them indexed by a specific parameter $\kappa\in\bR$ interpreted as \emph{market-risk relative consumption preference}. 
This parameter appears in both game settings and in the single-agent optimisation problem (no competition/performance concerns). Under competition, $\kappa$ is common to all players and reflects how the environment weights the utility from consumption in respect to the one from wealth. 
    The tractability of our setting, exposes the $\kappa$ parameter to clearly enable an extra layer of interpretable market modelling (see Section \ref{sec:examplesInterpretations}) and, critically, when $\kappa=0$ we recover classic utility theory results  \cite{lackersoret2020many} also \cite{karatzas1997explicit}.

    (III) We show that the CRRA FPP framework encapsulates the same key feature of the Epstein-Zin preferences recursive utility even without competition. 
It breaks the strict relation between risk tolerance $\delta$ and \textit{Elasticity of Intertemporal Substitution} (EIS) of the classical utility theory and at the same time keeps its core features of independence from the investment time-horizon and flexibility towards updating risk preferences. 
Namely, $\textrm{EIS}^{\textrm{no competition} (\theta = 0)} = \kappa \delta$, with $\kappa$ spanning another dimension of the agent's risk preference.
    
    (IV) We show that performance concerns ($\theta\in(0,1]$) re-scale the agent's perception of consumption. Namely, for $t\geqs 0$
    \begin{align*}
                \textrm{EIS}^{\textrm{with competition} (\theta\neq 0)}_t 
                = P_t\times {\textrm{EIS}^{\textrm{no competition} (\theta=0)}}, 
                \ \textrm{ where }\ \textrm{EIS}^{\textrm{no competition} (\theta=0)}=\kappa\delta,
    \end{align*}
    where $P_t$ is a random stochastic process depending on risk-competition preferences, the market-risk relative consumption preference $\kappa$ (that is uniform for the entire population of competitors) and equilibrium consumption.  The  time-dependence of $P$ is intimately related to the use of the FPP framework with competition and discussed in detail in Section \ref{sec:examplesInterpretations} (see Equation \eqref{eq:EIS-for-our-setting}). Similar results using Epstein-Zin's recursive utility exist \cite{bismuthgueant2019}.

\smallskip
\textbf{Organisation of the paper.}  
In Section \ref{sec2} we introduce the financial market and notation. In Sections \ref{sec:FRPCwithaverage} and \ref{sec:MFG-forward-consumption} we study the finite-player and mean field game respectively. The interpretations of equilibria, parameter relations and the models are left for the final Section \ref{sec:examplesInterpretations}. 
Many proofs are omitted and can be found in the supplementary material.

%
%
%

\section{Asset specialisation and relative performance concerns}
\label{sec2}

We formulate our working market models mimicking the presentations of  \cite{LackerZariphopoulou2017,lackersoret2020many,platonov2020forward}. 

\textbf{The market.} The market environment has one riskless asset and $n$ risky securities which serve as proxies for two distinct asset classes. We assume their prices to be of log-normal type, each driven by two independent Brownian motions. Precisely, the price $(S^i_t)_{t\geqslant 0}$ of stock $i$ traded exclusively by the $i$-th agent solves 
\begin{align}
\label{stock}
\frac{dS_{t}^{i}}{S_{t}^{i}} 
& 
=\breve\mu_{i}dt+\nu_{i}dW_{t}^{i}+\sigma _{i}dB_{t},\quad S^i_0 = s^i_0 > 0,
\end{align}
with constant parameters $\breve\mu_i\in \bR,~\sigma_i\geqs 0$ and $\nu_i\geqs 0$ with $\sigma_i+\nu_i>0$. We refer the reader to \cite{LackerZariphopoulou2017,lackersoret2020many} for an in-depth motivation of the model. The one-dimensional standard Brownian motions $B,W^1,\dots,W^n$ are independent. When $\sigma_i > 0$, the process $B$ induces a correlation between the stocks, and thus we call $B$ the \emph{common noise} and $W^{i}$ an \emph{idiosyncratic noise}. The independent Brownian motions $B,W^1,\dots,W^n$ are defined on a probability space $(\Omega, \bF, \cF, \bP)$ endowed with the natural filtration $\bF=(\cF_t)_{t\geqs 0}$ generated by them and satisfies the usual conditions. 

We recall the case of \textit{single common stock}, where for any $i  = 1,\dots,n ,~ (\breve\mu_i,\sigma_i) = (\breve\mu,\sigma),~ \nu_i = 0,$ for some $\breve\mu \in \bR,~ \sigma >0$ and independent of $i$. 

We contribute to the literature on mean field
games and FPP by also providing an explicitly solvable example. As argued in \cite{lackersoret2020many}, outside linear-quadratic structures such is very rare, and it is one of these rarities we bring here. We work with the very tractable model \eqref{stock} and include common noise, heterogeneous agents, a mean field interaction through the controls in addition to the state processes and FPPs.

\textbf{Agents' wealth.} Each agent $i=1,\ldots,n$ trades using a self-financing strategy, $(\pi _{t}^{i})_{t \geqs 0}$, representing the fraction of wealth invested in the $i$-th stock and consumption policy, $(c^i_t)_{t \geqs 0}$, representing the instantaneous rate of consumption per unit of wealth. The $i$-th agent's wealth dynamics $(X^i_t)_{t\geqs 0}$ is given by  
\begin{equation*}
    dX_{t}^{i} =
    r X_t^idt + \pi_t^iX_t^{i}\Big(\mu_idt+\nu _{i}dW_{t}^{i}+\sigma _{i}dB_{t}\Big)- c^i_t X^i_t dt, \quad \textrm{with}\quad X_0^i=x_0^i > 0, ~\mu_i = \breve \mu_i - r.
\end{equation*}
We interpret $\mu_i$ as an excess return.
A portfolio investment-consumption strategy is deemed \textit{admissible} if it belongs to the admissibility set $\cA^i$,
\begin{align*}
    \cA^i 
    = 
    \Big\{
    (\pi^i,c^i)&: 
    \textrm{$\bF$-progressively measurable $\bR \times (0,\infty)$}- \textrm{valued process }(\pi^i_t,c^i_t)_{t \geqs 0},
    \\
    & \qquad 
    \textrm{such that }
    \bE\Big[\int_0^t (|\pi^i_s|^2 + |c^i_s|^2 ) ds \Big]< \infty,\
     \textrm{for any } t>0
    \Big\}.
\end{align*}
As in \cite{lackersoret2020many} we do not allow a consumption rate of zero. 
It is also clear that for any admissible strategy we have $X^i_t>0$ for all $t\geqs 0$.

\textbf{The agents' interaction and relative performance concerns.} Each manager measures the performance of her strategy taking into account the policies of the others. Each agent engages in a form of social interaction {(in the sense of \cite{EspinosaTouzi2015,BielagkLionnetDosReis2017})} that affects that agent's perception of wealth, all in a multiplicative fashion modelled through the geometric average wealth of all the agents (excluding themselves). The \textit{relative performance wealth process} of manager $i \in \{ 1, \cdots ,n\},$ denoted $\widehat X^i$ is defined to be
\begin{align}
\label{eq:xminusaverage-lessi-consumption}
    \widehat X^i = \frac{X^i}{{\Big(\widetilde{X}^{(-i)}}\Big)^{\theta_i}},
    \qquad\textrm{where}\qquad
    \widetilde{X}^{(-i)} = \Big(\prod_{k \neq i}^nX^k\Big)^{\frac{1}{n-1}}. 
\end{align}
In the same vein we introduce the \emph{relative consumption metric}
\begin{align}
\label{eq:c-hat-i}
 \hat{c}^i = \dfrac{c^i}{\Big(\tilde{c}^{(-i)}\Big)^{\theta_i}}, \quad \text{where} \quad \tilde{c}^{(-i)} = \Big(\prod_{k \neq i}^{n} c^k\Big)^{\frac1{n-1}}.
\end{align}
We emphasise that both relative metrics (hat and tilde) can be equivalently reformulated such that the respective geometric average include the agent themselves. This becomes an equivalent problem that can be reduced to the original one by rescaling the parameters, see \cite[{Remark 3.3}]{LackerZariphopoulou2017},  \cite[Section 2]{lackersoret2020many} or \cite[Section 2]{platonov2020forward}

By an application of It\^o's formula we obtain the dynamics for \emph{the average performance wealth} $Y = \widetilde X^{(-i)}$ as follows,
\begin{align*}
\frac{d Y_t}{ Y_t} 
&
= \left(r 
+ \overline{\mu\pi}^{(-i)}_t 
-  \frac{1}{2}\left(\overline{\Sigma \pi^{2}_t}^{(-i)}
-\big(\overline{\sigma \pi_t}^{(-i)}\big)^{2}
-\frac{1}{n-1} \overline{(\nu \pi_t)^{2}}^{(-i)}\right) - \bar{c}_t^{(-i)}\right)  d t 
\\
&\qquad +\frac{1}{n-1} \sum_{k \neq i}^n \nu_{k} \pi^k_{t} d W_{t}^{k}
+\overline{\sigma \pi_t}^{(-i)} d B_{t}, \quad Y_{0}=\bigg( \prod_{k \neq i}^n x_{0}^{k}\bigg)^{\frac1{n-1}},
\end{align*}
where we define the helpful auxiliary quantities for $t\geqs 0$
\begin{align*}
    &\overline{\mu \pi_t}^{(-i)} = \frac1{n-1}\sum_{k \neq i}^{n} \mu_k \pi^k_t, \quad 
    \overline{(\nu \pi_t)^{2}}^{(-i)} = \frac1{n-1} \sum_{k\neq i}^{n} (\nu_k\pi^k_t)^2,\quad
    \overline{\sigma \pi_t}^{(-i)} = \frac1{n-1} \sum_{k \neq i}^{n} \sigma_k \pi^k_t 
    \\
    &\overline{\Sigma \pi^2_t}^{(-i)} = \frac1{n-1}\sum_{k \neq i}^{n} \Sigma_k (\pi^k_t)^2,\quad
    \bar{c}^{(-i)}_t = \frac1{n-1} \sum_{k \neq i}^{n} c^k_t, \quad  \Sigma_k = \sigma_k^2 + \nu_k^2.
\end{align*} 
Via It\^o's formula one finds the dynamics of \emph{relative performance wealth} $\widehat X^i$ to be
\begin{align}
    \label{eq:widehat-x-consumption}
    \frac{d \widehat X^i_t}{\widehat X^i_t} 
    = \xi_i dt 
    &-  \big(c^i_t - \theta_i\bar{c}_t^{(-i)}\big) dt 
    \\ \nonumber 
    & + \bigg(\nu_i\pi^i_tdW^i_t - \theta_i\Big( \frac1{n-1}\sum_{k \neq i}^{n} \nu_k \pi^k_t dW^k_t\Big)\bigg) + \bigg(\sigma_i \pi^i_t - \theta_i \overline{\sigma \pi_t}^{(-i)}\bigg) dB_t,
\end{align}
where 
\begin{align*}
    \xi_i = r(1-\theta) + \mu_i \pi^i_t 
    &
    - \theta_i \overline{\mu \pi_t}^{(-i)} + \frac{\theta_i}2\overline{\Sigma\pi_t^2}^{(-i)} 
    \\
    &- \frac{\theta_i^2 }{2}\bigg(\big(\overline{\sigma\pi_t}^{(-i)}\big)^2 + \frac{1}{n-1}\overline{(\nu\pi_t)^2}^{(-i)}\bigg) - \theta_i \sigma_i\pi^i_t \overline{\sigma \pi_t}^{(-i)} .
\end{align*}

%
%
%
%
%

%
%
%

\section{The finite player forward optimisation game}
\label{sec:FRPCwithaverage}
\subsection{Forward relative performance}

Each manager $i\in\{1,\dots,n\}$ measures the output of her relative performance metric using a forward relative utility as modelled by an $\cF_t$-progressively measurable random field $Q^i:\Omega \times (0,\infty) \times [0,\infty) \to \bR$ for $i\in\{1,\dots,n\}$. The below criteria follows those proposed in \cite{geng2017passive} (power-type) and later in \cite{platonov2020forward} (exponential-type) for a forward-performance game.

The formulation here is inspired in the first step of the usual strategy of solving Nash equilibria games, namely, the best response of an agent to the actions of all other agents. 
This version of a relative criterion is (implicit and) endogenously parametrised by the policies of all other managers $j\neq i$ over which no assumption on their optimality is made. 
\begin{definition}[Forward relative performance for the manager]
\label{def:ForwardUtilityBestResponse-consumption}
Each manager $i\in\{1,\dots,n\}$ satisfies the following conditions: let for any $j\neq i,~(\pi^j,c^j)\in\cA^j$ be arbitrary but fixed and admissible policies, in other words, the other managers have fixed their investment-consumption admissible strategies. 

For $(\pi^i,c^i) \in \cA^i$ and subjective discount factor $\rho_i \geqs 0$, define the $\bF$-progressively measurable random field $Q^i:\Omega\times (0,\infty) \times [0,\infty) \to \bR$
\begin{align}
\label{eq:structure-Q-n-player}
Q^i(x,t):= e^{-\rho_i t} U^{i}(x,t) + \int_0^t e^{-\rho_i s} V^{i}(\hat c_s^i x,s)ds,
\end{align}
where $\hat c^i$ is given by \eqref{eq:c-hat-i} and 
$U^i,V^i:\Omega\times (0,\infty)\times [0,\infty)\to \bR$ are two other $\bF$-progressively measurable random fields \footnote{The first term of \eqref{eq:structure-Q-n-player} corresponds to the utility that the agent derives at time $t$ from owning the amount of wealth $x$. The second term captures the utility the agent accumulates from time $0$ up to time $t$ from consuming at the rate of $\hat c^i x$. 
Hereinafter, and for simplicity, we call $U$ the utility from wealth and $V$ the utility from consumption.}.

The random field $Q^i$ is a \emph{forward relative performance process for $i$-th manager} if, for all $t\geqs 0$, the following conditions hold:
\begin{itemize}
	\item The mappings $x \mapsto U^{i}(x,t)$ and $x \mapsto V^{i}(x,t)$ are $\bP$-a.s.~strictly increasing and strictly concave;
    \item 
    For any $(\pi^{i},c^{i})\in \cA^i,~Q^i(\widehat X^i_t,t)$ is a (local) supermartingale, where $\widehat X^i$ is the relative performance wealth process given in \eqref{eq:xminusaverage-lessi-consumption};
    \item There exists $(\pi^{i,*},c^{i,*})\in \cA^i$ such that $Q^i(\widehat X^{i,*}_t,t)$ is a (local) martingale, where $\widehat X^{i,*}$ solves \eqref{eq:xminusaverage-lessi-consumption} with the strategies $(\pi^{i,*},c^{i,*})$ being used. The strategy $(\pi^{i,*},c^{i,*})$ is said to be optimal.
\end{itemize}
\end{definition}

In the above definition, we do not make explicit references to the initial conditions $U^i(x,0)$, $V^i(x,0)$ but we assume that admissible ($\cF_0$-measurable) initial data exists such that the above definition is viable. Contrary to the classical expected utility case, the forward performance process is a manager-specific input. Once it is chosen, the supermartingale and martingale properties impose certain conditions on the drift of the process. Under enough regularity, these conditions lead to the forward performance SPDE (see \cite{MusielaZariphopoulou2010,KarouiMrad2013}) which, in our case, reduces to a PDE with stochastic coefficients (see Proposition \ref{prop:BestResponses-n-playerGame-average-n-1-consumption} below).

Since we are working in a log-normal market, it suffices to study smooth relative performance criteria of zero volatility (of the FPP map). Such processes are extensively analysed in \cite{MusielaZariphopoulou2010-space-time-monotone} in the absence of relative performance concerns. There, a concise characterisation of the forward criteria is given along necessary and sufficient  conditions for their existence and uniqueness. In that setting, the zero-volatility forward processes are always time-decreasing processes. We point to the reader that this does not have to be case if relative performance concerns are present (see also \cite{geng2017passive,platonov2020forward}).
We now make a standing assumption regarding the regularity of the FPP map.
\begin{assumption}
    \label{ass:regularity-u-n-player-consumption}
    Assume the partial derivatives $U^i_t(x,t)$, $U^i_{x}(x,t)$, $U^i_{xx}(x,t)$ and $V^i_{x}(x,t)$, $V^i_{xx}(x,t)$ exist for all $t \geqs 0, ~x > 0,~\bP$-a.s. 
    
    The maps, $x \mapsto U^i(x,t)$ and $x \mapsto V^i(x,t)$ are strictly increasing ($U^i_x,V^i_x > 0$) and strictly concave ($U^i_{xx},V^i_{xx} < 0$) for any $t \geqs 0,~ x > 0,~ \bP$-a.s.. Furthermore, $\int_0^t |U^i(x,s)|^2 ds < \infty$, for any $x > 0, ~ t \geqs 0$, $\bP$-a.s. 
\end{assumption}
From Assumption \ref{ass:regularity-u-n-player-consumption}, for $i\in\{1,\dots,n\}$ the It\^o decomposition of the forward map is
\begin{align}
\label{eq:SDEforForwardUtilityField-consumption}
    d Q^i(x,t) = e^{-\rho_i t} U^i_t(x,t) dt - \rho_i  e^{-\rho_i t}U^i(x,t) dt + e^{-\rho_i t}V^i(\hat c_t^i x,t) dt,
    \quad Q^i(x,0) = u^i_0(x).
\end{align}

For posterior use, we recall the notion of Fenchel-Legendre transform applied to a random field $V$ under the above assumption.
\begin{definition}
Let $V:\Omega\times {(0,\infty)} \times [0,\infty)  \to \bR$ be a random field such that $x\mapsto V(x,t)$ is a $\bP$-a.s.~strictly concave function for all $t\geqs 0$. Define the random field $\widetilde V: \Omega\times {(0,\infty)} \times [0,\infty) \to \bR$ as $\widetilde V(x^\prime,t) := \sup_{x > 0} \big \{ V(x,t) - x^\prime x\big\}$ for any $x^\prime > 0$, $t\geqs 0$ and $\omega \in \Omega$.

Then, we call $\widetilde V$ the \emph{Fenchel-Legendre transform} of $V$.
\end{definition}

For a map $V$ satisfying Assumption \ref{ass:regularity-u-n-player-consumption}, its differentiability and strict concavity ensures $\widetilde V$ is always well-defined and can be computed (up to a closed form).

\subsection{Best responses}

Here we derive a PDE with random coefficients and an optimal investment-consumption strategy for a smooth relative performance criteria of zero-volatility of some agent $i$ assuming that all other agents $j\neq i$ have made their investment decisions.
\begin{proposition}[Best responses]
\label{prop:BestResponses-n-playerGame-average-n-1-consumption}
Fix $i\in\{1,\dots,n\}$.
Assume that each manager $j\neq i$ follows $(\pi^j,c^j)\in\cA^j$. Consider the PDE with stochastic coefficients for $(x,t)\in (0,\infty) \times [0,\infty)$ given by
\begin{align}
\label{eq:SPDE-step01-consumption}
\nonumber
U^{i}_t
= & \rho_i U^i + \bigg(\theta_i\overline{ \mu\pi_t}^{(-i)} -r(1-\theta) - \frac{\theta_i \sigma_i \overline{ \sigma\pi_t}^{(-i)} (\mu_i - \theta_i \sigma_i \overline{ \sigma\pi_t}^{(-i)})}{\nu_i^2+\sigma_i^2} - \frac{\theta_i}2 \overline{\Sigma \pi_t^2}^{(-i)} 
\\ 
& \hspace{2cm}
- \frac{\theta_i^2}{2} \big(\big(\overline{\sigma\pi_t}^{(-i)}\big)^2 + \frac1{n-1} \overline{(\nu\pi_t)^2}^{(-i)}\big)\bigg)xU^{i}_x
\\
\nonumber
& +  \frac{(\mu_i - \theta_i \sigma_i \overline{\sigma\pi_t}^{(-i)})^2}{2(\nu_i^2+\sigma_i^2)} \frac{(U^{i}_x)^2}{U^{i}_{xx}}
+  \frac12\bigg(  \Big(\theta_i\overline{\sigma\pi_t}^{(-i)}\Big)^2 \Big( \frac{ \sigma_i^2 }{\nu_i^2+\sigma_i^2} -1 \Big)-\frac{\theta_i^2}{n-1}\overline{(\nu\pi_t)^2}^{(-i)}\bigg) x^2 U^{i}_{xx}
\\
\nonumber
&  + \theta_i  \bar c_t^{(-i)}U_x^i - \widetilde {V}^i (U^i_x,t),
\end{align}
where $\widetilde V^i$ stands for Fenchel-Legendre transform of $V^i$ in variable $x$.  Assume that for admissible initial conditions $U^i(\cdot,0)=u^i_0(\cdot),~ V^i(\cdot,0)=v^i_0(\cdot)$, the PDE has a smooth solution $(U^i,V^i)$, that is not necessarily unique, but satisfy Assumption \ref{ass:regularity-u-n-player-consumption}.

Define the strategy $(\pi^{i,*}, c^{i,*})$
\begin{align}
\label{eq:best-responses-pi-n-player-consumption}
\pi^{i,*}_t 
&=\frac1{\nu_i^2+\sigma_i^2} \Big({\theta_i \sigma_i \overline{\sigma\pi_t }^{(-i)}} 
- \big(\mu_i-\theta_i \sigma_i \overline{\sigma\pi_t }^{(-i)}\big)\frac{U^{i}_x(\widehat X^{i,*}_t ,t)}{U^{i}_{xx}(\widehat X^{i,*}_t ,t)\widehat X^{i,*}_t}\Big), 
\\
\label{eq:best-responses-c-n-player-consumption}
c^{i,*}_t 
&= \dfrac{(V^i_x)^{-1}\Big(U^i_x(\widehat X^{i,*}_t,t) \big(\tilde{c}^{(-i)}_t\big)^{\theta_i},t\Big)\big(\tilde{c}^{(-i)}_t\big)^{\theta_i}}{\widehat X^{i,*}_t},
\end{align}
where $\widehat X^{i,*}$ solves \eqref{eq:widehat-x-consumption} with $(\pi^{i,*},c^{i,*})$ being used.

Then, in the sense of Definition \ref{def:ForwardUtilityBestResponse-consumption}, if $(\pi^{i,*}, c^{i,*})\in\cA^i$ and if $\widehat X^{i,*}$ is well-defined, then $Q^i(x,t)$ is a forward relative performance process for manager $i$ and, moreover, the policy $(\pi^{i,*},c^{i,*})$ is optimal.
\end{proposition}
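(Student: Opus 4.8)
The plan is to verify Definition \ref{def:ForwardUtilityBestResponse-consumption} directly, through an Itô expansion of the candidate forward process followed by a pointwise (HJB-type) optimisation of its drift, then to read off the two required conditions (super/true martingale) from the sign of that drift. Throughout I fix the admissible policies $(\pi^j,c^j)_{j\neq i}$, so that all barred averages $\overline{\mu\pi}^{(-i)}$, $\overline{\sigma\pi}^{(-i)}$, $\overline{\Sigma\pi^2}^{(-i)}$, $\overline{(\nu\pi)^2}^{(-i)}$, $\bar c^{(-i)}$ and $\tilde c^{(-i)}$ are treated as given $\bF$-progressively measurable coefficients.

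First I would compute $d\,Q^i(\widehat X^i_t,t)$. Under Assumption \ref{ass:regularity-u-n-player-consumption} the random field $Q^i$ has the zero-volatility dynamics \eqref{eq:SDEforForwardUtilityField-consumption}, so composing it with the semimartingale $\widehat X^i$ from \eqref{eq:widehat-x-consumption} via the (Itô--Wentzell) change-of-variable formula produces no cross variation between the field and the state, and the drift of $Q^i(\widehat X^i_t,t)$ equals
\[
\mathcal{D}_t = U^i_t + \widehat X^i_t U^i_x\big(\xi_i - c^i_t + \theta_i \bar c^{(-i)}_t\big) + \tfrac12 (\widehat X^i_t)^2 U^i_{xx}\,A_t + V^i\big(\hat c^i_t \widehat X^i_t,t\big),
\]
where all derivatives of $U^i$ are evaluated at $(\widehat X^i_t,t)$ and $A_t = (\nu_i\pi^i_t)^2 + \tfrac{\theta_i^2}{n-1}\overline{(\nu\pi)^2}^{(-i)} + (\sigma_i\pi^i_t - \theta_i\overline{\sigma\pi}^{(-i)})^2$ is the squared diffusion coefficient of $\widehat X^i$ read off \eqref{eq:widehat-x-consumption}; the stochastic integrals collect into a local martingale remainder. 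The one subtlety to pin down here is the meaning of the running consumption integral in the composite process, namely that it accumulates along the realised path as $\int_0^t V^i(\hat c^i_s\widehat X^i_s,s)\,ds$, so that its time derivative is exactly $V^i(\hat c^i_t\widehat X^i_t,t)$.

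Next I would optimise $\mathcal{D}_t$ pointwise over the controls $(\pi^i_t,c^i_t)$ at a frozen $(\widehat X^i_t,t)$. The dependence on $\pi^i_t$ is a concave quadratic (since $U^i_{xx}<0$), gathering the $\pi^i_t$-terms of $\xi_i$ and $A_t$; completing the square yields the maximiser \eqref{eq:best-responses-pi-n-player-consumption}, and substituting it back produces exactly the $\tfrac{(U^i_x)^2}{U^i_{xx}}$ term and the $x^2U^i_{xx}$ term of \eqref{eq:SPDE-step01-consumption}, while the remaining $\pi^i$-independent pieces of $\xi_i$ assemble the $xU^i_x$ coefficient there. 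The dependence on $c^i_t$ is $-\widehat X^i_t U^i_x c^i_t + V^i(\hat c^i_t\widehat X^i_t,t)$; here I use convex duality, changing variable to $z=\hat c^i_t\widehat X^i_t$ via \eqref{eq:c-hat-i} and recognising the supremum as a Fenchel--Legendre transform $\widetilde V^i$ evaluated at the marginal $U^i_x$, with maximiser given by the inverse marginal $(V^i_y)^{-1}$, which is precisely \eqref{eq:best-responses-c-n-player-consumption}. By construction the PDE \eqref{eq:SPDE-step01-consumption} is then the statement $U^i_t + \sup_{(\pi^i,c^i)}\{\mathcal{D}_t - U^i_t\}=0$.

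Finally, the two required properties follow from the sign of the drift. For an arbitrary admissible $(\pi^i,c^i)$ the PDE forces $\mathcal{D}_t\le 0$, so $Q^i(\widehat X^i_t,t)$ is a local supermartingale; for the candidate $(\pi^{i,*},c^{i,*})$ the suprema are attained (strict concavity of $U^i,V^i$ guaranteeing the first-order conditions give global maxima and unique optimisers), hence $\mathcal{D}_t\equiv0$ and $Q^i(\widehat X^{i,*}_t,t)$ is a local martingale. Since Definition \ref{def:ForwardUtilityBestResponse-consumption} asks only for the local versions, no integrability beyond $(\pi^{i,*},c^{i,*})\in\cA^i$ and well-posedness of $\widehat X^{i,*}$ is needed, and both are assumed. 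I expect the main obstacle to be purely computational: the coupled $(\pi^i,c^i)$ optimisation, and in particular matching the completed square and the barred averages to the precise coefficients of \eqref{eq:SPDE-step01-consumption}, together with carefully justifying the field-composition step (exploiting the zero field volatility) and the duality identification of the consumption supremum.
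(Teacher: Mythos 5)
Your proposal is correct and follows essentially the same route as the paper's proof: an It\^o expansion of $Q^i(\widehat X^i_t,t)$ exploiting the zero-volatility field dynamics \eqref{eq:SDEforForwardUtilityField-consumption}, pointwise first-order optimisation of the drift in $(\pi^i,c^i)$ (concave quadratic/completed square for the investment part, Fenchel--Legendre duality for the consumption part), identification of \eqref{eq:SPDE-step01-consumption} as the vanishing of the optimised drift, and the sign of the drift delivering the local supermartingale/martingale dichotomy. The paper likewise combines the HJB-type argument with convex duality precisely because $V^i$ is not known in closed form, so there is nothing methodologically distinct to flag.
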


Let us recall the concept of a CRRA utility map. Following \cite[{Section 5}]{MusielaZariphopoulou2009}, we say a utility map $U$ is of \textit{Constant Relative Risk Aversion} (CRRA) type if the \emph{local risk tolerance function} $r:\Omega\times (0,\infty) \times [0,\infty) \to \bR$, given by the quotient $r(x,t)=-U_x(x,t)/U_{xx}(x,t)$ is linear in space, i.e., $r(x,t) = \delta x,$ for any $\delta > 0$ and $t \geqs 0,~x > 0$. This is the case for the classical power utility function, see Section \ref{sec:FPPinitapowerrisk} next. 

By direct inspection of the expression for $\pi^{i,*}$ in \eqref{eq:best-responses-pi-n-player-consumption}, if the local risk tolerance function satisfies $r^i(x,t)= \delta_i x$, for any $t > 0$ (e.g., a CRRA utility) then the optimal investment strategy is constant throughout time if additionally all other agents also choose a constant investment strategy. 

\begin{corollary}[Constant investment strategies under CRRA]
Assume $U^i$ and $V^i$ to be of CRRA type, i.e., the \emph{local risk tolerance function} is linear in space uniformly. Assume further that all agents $j\neq i$ invest according to constant investment strategies $\pi^j\in\bR$. Then, $\pi^{i,*}$ is constant.

\end{corollary}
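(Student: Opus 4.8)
The plan is to start from the explicit best-response formula \eqref{eq:best-responses-pi-n-player-consumption} and argue that, under the stated hypotheses, every time- and randomness-dependent ingredient in it collapses to a constant. The expression for $\pi^{i,*}_t$ has exactly two potential sources of such dependence: the averaged quantity $\overline{(\sigma\pi )}^{(-i)}_t$ coming from the other agents, and the risk-tolerance ratio $U^{i}_x(\widehat X^{i,*}_t,t)/\big(U^{i}_{xx}(\widehat X^{i,*}_t,t)\,\widehat X^{i,*}_t\big)$ evaluated along the optimal wealth trajectory.

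First I would dispatch the ratio term using the CRRA assumption. By definition the local risk tolerance is $r^i(x,t)=-U^{i}_x(x,t)/U^{i}_{xx}(x,t)$, and the CRRA hypothesis is precisely $r^i(x,t)=\delta_i x$ for a constant $\delta_i>0$. Hence $U^{i}_x(x,t)/U^{i}_{xx}(x,t)=-\delta_i x$, and dividing by $x$ gives $U^{i}_x(x,t)/\big(U^{i}_{xx}(x,t)\,x\big)=-\delta_i$ for every $(x,t)$, $\bP$-a.s. In particular, evaluating at $x=\widehat X^{i,*}_t$ makes this ratio equal to the constant $-\delta_i$ irrespective of the value taken by the state process. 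This is the crux of the argument, as it severs the dependence of $\pi^{i,*}_t$ on the random, time-varying optimal wealth $\widehat X^{i,*}_t$.

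Next I would treat the averaged term. Since each agent $j\neq i$ uses a constant strategy $\pi^j\in\bR$, the quantity $\overline{(\sigma\pi )}^{(-i)}_t=\frac1{n-1}\sum_{k\neq i}^n\sigma_k\pi^k$ is a deterministic constant, independent of $t$. Substituting both observations into \eqref{eq:best-responses-pi-n-player-consumption} yields
\[
\pi^{i,*}_t=\frac1{\nu_i^2+\sigma_i^2}\Big(\theta_i\sigma_i\,\overline{(\sigma\pi )}^{(-i)}+\delta_i\big(\mu_i-\theta_i\sigma_i\,\overline{(\sigma\pi )}^{(-i)}\big)\Big),
\]
whose right-hand side depends neither on $t$ nor on $\omega$. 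Therefore $\pi^{i,*}$ is constant, as claimed.

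I expect no genuine obstacle here beyond bookkeeping; the only point requiring care is the a priori worry that the formula is implicit, since $\pi^{i,*}_t$ is defined through $\widehat X^{i,*}_t$, which in turn is driven by $\pi^{i,*}$ via \eqref{eq:widehat-x-consumption}. The CRRA structure is exactly what resolves this apparent circularity: because the risk-tolerance function is linear in space, the state $\widehat X^{i,*}_t$ cancels from the ratio and the best response is pinned down without any knowledge of the trajectory. It then remains only to note that this constant candidate lies in $\cA^i$ and that the associated $\widehat X^{i,*}$ is well-defined, so that Proposition \ref{prop:BestResponses-n-playerGame-average-n-1-consumption} applies and confirms optimality.
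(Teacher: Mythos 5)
Your proposal is correct and coincides with the paper's own argument, which establishes the corollary precisely ``by direct inspection'' of the best-response formula \eqref{eq:best-responses-pi-n-player-consumption}: the CRRA hypothesis makes the ratio $U^{i}_x/(U^{i}_{xx}\,x)$ identically $-\delta_i$, removing the dependence on $\widehat X^{i,*}_t$, while constancy of the $\pi^j$ makes $\overline{(\sigma\pi)}^{(-i)}$ constant. Your additional remark that the CRRA structure is what breaks the apparent circularity between $\pi^{i,*}$ and $\widehat X^{i,*}$ is a worthwhile clarification of a point the paper leaves implicit, but the route is the same.
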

At this point it is not clear that if the agents $j\neq i$ use deterministic continuous consumption strategies $c^j$ then the optimal consumption strategy $c^{i,*}$ in \eqref{eq:best-responses-c-n-player-consumption} is also so. 
We next prove the ``best responses'' result, Proposition \ref{prop:BestResponses-n-playerGame-average-n-1-consumption}.
\begin{proof}[Proof of Proposition \ref{prop:BestResponses-n-playerGame-average-n-1-consumption}]
From \eqref{eq:xminusaverage-lessi-consumption} we have access to the dynamics of $d(X^i_t\big(\overline{X}^{(-i)}_t\big)^{-\theta_i})$ and $d\widehat X^i$. Using \eqref{eq:SDEforForwardUtilityField-consumption}, we apply the It\^o formula to $Q^i(\widehat X^i_t ,t) = Q^i\big(X^i_t \big(\overline{X}^{(-i)}_t\big)^{-\theta_i} ,t\big)$, and obtain using the notation set before \eqref{eq:widehat-x-consumption}
\begin{align}
\label{eq:generalSDEafterItoWentzell-consumption}
\nonumber
d Q^i(\widehat X^i_t ,t) 
&= e^{-\rho_i t} \Big[ U^{i}_t(\widehat X^i_t ,t)dt  - \rho_i U^i(\widehat X^i_t, t) dt + U^{i}_x(\widehat X^i_t ,t) d \widehat X^i_t + \frac12 U^{i}_{xx}(\widehat X^i_t ,t) d \langle \widehat X^i_t \rangle \\
\nonumber
& \qquad + V^{i}(\hat c^i_t\widehat X_t^i, t) dt \Big] 
\\ 
\nonumber
&
=e^{-\rho_i t} \bigg[ U^{i}_t(\widehat X^i_t ,t)dt - \rho_i U^i(\widehat X^i_t, t) dt
+ U^{i}_x(\widehat X^i_t ,t)\bigg(\mu_i \pi^i_t - \theta_i \overline{\mu \pi_t}^{(-i)} + r(1-\theta) \\
\nonumber
& \qquad + \frac{\theta_i}2\overline{\Sigma\pi_t^2}^{(-i)} + \frac{\theta_i^2}{2}\Big(\big(\overline{\sigma\pi_t}^{(-i)}\big)^2 + \frac{1}{n-1}\overline{(\nu\pi_t)^2}^{(-i)}\Big)- \theta_i \sigma_i\pi^i_t \overline{\sigma \pi_t}^{(-i)}\bigg)\overline{X}^i
_t dt 
\\
& \qquad
+ U^{i}_x(\widehat X^i_t ,t)\big(\sigma_{i}\pi_t^i - \theta_i\overline{\sigma\pi_t}^{(-i)} \big)\widehat{X}^i
_tdB_t 
\\
\nonumber
& \qquad + U^{i}_x(\widehat X^i_t ,t) \widehat{X}^i_t \bigg(\nu_i\pi^i_tdW^i_t - \theta_i\Big( \frac1{n-1}\sum_{k \neq i}^{n} \nu_k \pi^k_t dW^k_t\Big)\bigg) 
\\
\nonumber
&\qquad + \frac12 U^{i}_{xx}(\widehat X^i_t ,t)\bigg( (\nu_i\pi^i_t )^2 
+\frac{\theta_i^2}{n-1}\overline{(\nu\pi_t )^2}^{(-i)}
+ \Big(\sigma _{i}\pi_t^i- \theta_i\overline{\sigma\pi_t}^{(-i)}\Big)^2
\bigg)\big(\widehat{X}^i_t\big)^2dt\\
\nonumber
& 
\qquad - U^{i}_x(\widehat X^i_t ,t)\big(c^i_t - \theta_i\bar{c}^{(-i)}_t \big)\widehat X_t^i dt + V^{i}(\hat c^i_t\widehat X_t^i, t) dt \bigg],
\end{align}
with $Q^i(\widehat X^i_0 ,0)=Q^i\big(x^i_0(\overline x_0^{(-i)})^{-\theta_i},0\big)=u^i_0\big(x^i_0(\overline x_0^{(-i)})^{-\theta_i}\big)$ and that the $B,W^j$ are all i.i.d.

By Definition \ref{def:ForwardUtilityBestResponse-consumption}, the process $Q^i(\widehat X^i_t,t)$ becomes a martingale at the optimal $(\pi^{i,*},c^{i,*})$, hence, direct computations using first order conditions ($\partial_{\pi^i} \textrm{``drift''}= \partial_{c^i} \textrm{``drift''}=0$) yield
\begin{align}
\label{eq:optimstrategy-consumption}
\begin{cases}
 \pi^i_t 
&=
\dfrac1{\nu_i^2+\sigma_i^2} \Big({\theta_i \sigma_i \overline{\sigma\pi_t}^{(-i)}} - \big(\mu_i - \theta_i \sigma_i \overline{\sigma \pi_t}^{(-i)}\big)\dfrac{U^i_x(\widehat X^i_t ,t)}{U^i_{xx}(\widehat X^i_t ,t)\widehat X^i_t}\Big), 
\\
c^i_t &= \dfrac{(V^i_x)^{-1}\Big(U^i_x(\widehat X^i_t,t) \big(\tilde{c}^{(-i)}_t\big)^{\theta_i},t\Big)\big(\tilde{c}^{(-i)}_t\big)^{\theta_i}}{\widehat X_t^i}.
\end{cases}
\end{align}
Injecting $\pi^i_t $ in the drift term of \eqref{eq:generalSDEafterItoWentzell-consumption} and simplifying we arrive at the consistency condition \eqref{eq:SPDE-step01-consumption}, we do not carry out this step explicitly, nonetheless, using that the pair $(U^{i},V^{i})$ solves \eqref{eq:SPDE-step01-consumption}, equation \eqref{eq:generalSDEafterItoWentzell-consumption} simplifies to
\begin{align}
\label{eq:SDEafterItoWentzell and choice of Ut-average-n-1-consumption}
\nonumber
    d Q^i(\widehat X^i_t ,t) 
    &
    = e^{-\rho_i t} \bigg[ U^{i}_x(\widehat X^i_t ,t) \widehat X^i_t
    \Big(\nu _{i}\pi_t^idW_{t}^{i}-\theta_i\big(\frac1{n-1}\sum_{k\neq i}^n \pi_t^k\nu_{k}dW_{t}^{k}\big)\Big)
    \\ \nonumber
    &\quad 
    + U^{i}_x(\widehat X^i_t ,t)\Big(\sigma_{i}\pi_t^i - \theta_i\overline{\sigma\pi_t}^{(-i)} \Big)\widehat X^i_t dB_t\\ 
    &
    \quad +\frac12 U^{i}_{xx}(\widehat X^i_t ,t)\frac1{\nu_i^2+\sigma_i^2} \bigg| {\pi^i_t}(\nu_i^2+\sigma_i^2) -\Big( \theta_i \sigma_i \overline{ \sigma\pi_t}^{(-i)} - \mu_i\frac{U^i_x(\widehat X^i_t ,t)}{U^i_{xx}(\widehat X^i_t ,t)\widehat X^i_t} \Big)\bigg|^2(\widehat X^i_t)^2dt 
    \\
    \nonumber 
    &
    \quad+ V^{i}(\hat c^i_t\widehat X_t^i, t) dt - U^{i}_x(\widehat X^i_t ,t)c^i_t \widehat X_t^i dt -  \widetilde V^i(U^i_x(\widehat X_t^i,t),t)dt \bigg].
\end{align}
The concavity assumption on $U^{i}(x,t)$ implies that the third term of the expression above is non-positive and vanishes when \eqref{eq:optimstrategy-consumption} holds. At the same time by the definition of $\widetilde V^i$, the Fenchel-Legendre transform of $V^i$, we notice that 
$$
\widetilde V^i\big(U^i_x(\widehat X^i_t,t),t\big) 
= V^i\big(\hat c^{i,*}_t\widehat X_t^i,t\big)-U^i_x\big(\widehat X_t^i,t\big)c^{i,*}_t \widehat X^i_t,
$$
yields the non-positivity and extremality of the last line of \eqref{eq:SDEafterItoWentzell and choice of Ut-average-n-1-consumption} when $c^i = c^{i,*}$.  We can conclude that, if $(\pi^{i,*}_t,c^{i,*}_t)= (\pi^i_t,c^i_t)\in \cA^i$ and the associated process $\widehat X^{i,*}$ is well-defined (solution to \eqref{eq:widehat-x-consumption} with $(\pi^{i,*},c^{i,*})$), the process $U^i(\widehat X^{i,*}_t,t)$ is a local-martingale, otherwise it is a local supermartingale. The result concludes.
\end{proof}
In contrast with \cite{lackersoret2020many} we do not know the explicit form of the utility as it is part of the solution. Hence, we exploit convex duality properties to prove the extremality argument in combination with the HJB-type methodology. In \cite{geng2017passive} or \cite{platonov2020forward} the authors only explored the investment problem and this issue does not appear. 

\subsection{Forward performance with initial CRRA power preference}
\label{sec:FPPinitapowerrisk}

For $i\in\{1,\dots,n\}$, the dynamics of $Q^i$ is given by \eqref{eq:SDEforForwardUtilityField-consumption}.
Then, the solution to the PDE given in Equation \eqref{eq:SPDE-step01-consumption} has the following form
\begin{align}
\label{eq:Solution Postulation-consumption} 
U^i(x,t) & = \begin{cases}
    \frac1{1-\frac1{\delta_i}} x^{1-\frac1{\delta_i}}f_i(t), & \delta_i \neq 1,~ \delta_i >0,\\
    \log(x)f_i(t) + h_i(t), & \delta_i = 1,
    \end{cases}
\end{align}
where $x > 0$ and $(f_i(t))_{t\geqs 0},(h_i(t))_{t\geqs 0}$ are the maps independent of $x$ satisfying $f_i(0)=1$ and $h_i(0) = 0$, are sufficiently integrable and $t\mapsto f_i(t),~ t\mapsto h_i(t)$ are differentiable.
\begin{remark}
\label{rem:skipping delta=1 case}
For the sake of simplicity, we omit the presentation for the logarithmic utility and emphasise that the optimal policy for this case coincides with the general power case with $\delta_i = 1$ plugged inside.
\end{remark}
The structure of the  consistency condition \eqref{eq:SPDE-step01-consumption} implies $V^i$ (see \cite[Proposition 4.5]{elkaroui2018consistent}) as
\begin{align}
\label{eq:Solution Postulation-consumption-V} 
    V^i(x,t) & = \frac1{\epsilon_i}\frac1{1-\frac1{\delta_i}} x^{1-\frac1{\delta_i}} g_i(t),
\end{align}
where $x > 0$ and $(g_i(t))_{t\geqs 0}$ is a map independent of $x$ satisfying $g_i(0)=1$ and sufficiently integrable. We refer to the constants $\delta_i>0,~ \delta_i \neq 1,$ and $\theta_i\in[0,1]$ as \emph{personal risk tolerance} and \emph{competition weight}, respectively, whereas $\epsilon_i > 0$ captures the \emph{relative importance the agent assigns to the wealth compared to consumption}. Finally, $\rho_i$ stands for \emph{personal discount rate} of the agent.

In this case of initial power preferences we have that the \emph{local risk tolerance function} for $U^i$ satisfies $r^i=-U^i_x/U^i_{xx}=\delta_ix$, and hence
\begin{align*}
    \pi^{i,*}_t &=\frac1{\nu_i^2+\sigma_i^2} \bigg({\theta_i \sigma_i \overline{ \sigma\pi_t}^{(-i)}} + \left(\mu_i - \theta_i \sigma_i \overline{\sigma\pi_t }^{(-i)}\right)\delta_i\bigg).
\end{align*}
We now proceed to find this case's optimal consumption map $c^i$ and utility fields. Injecting the expressions above for $U^i(x,t),~V^i(x,t)$ in \eqref{eq:SPDE-step01-consumption} together with the optimal consumption policy given by \eqref{eq:optimstrategy-consumption} yields an ODE system for $(f_i, g_i, c^i)$. We have
\begin{align}
\label{eq:system-equations-n-player-consumption}
\begin{cases}
    c^i_t = \big(\tilde c_t^{(-i)}\big)^{\theta_i(1-\delta_i)}\epsilon_i^{-\delta_i} \bigg( \dfrac{g_i(t)}{f_i(t)}\bigg)^{\delta_i}, \\
    f^\prime_i(t) + \Big(\eta_i + \big(1-\frac1{\delta_i}\big)\theta_i \bar{c}_t^{(-i)}\Big) f_i(t) + \frac{\epsilon_i^{-\delta_i}}{\delta_i} \big(\tilde c_t^{(-i)}\big)^{\theta_i(1-\delta_i)}  f_i(t)
\bigg(\dfrac{g_i(t)}{f_i(t)}\bigg)^{\delta_i} = 0~,
\end{cases}
\end{align}
where
\begin{align}
    \label{eq:lambda-consumption}
    \eta^i_t &=  \Big(1-\frac1{\delta_i}\Big) \bigg(\frac{\delta\big(\mu_i-\theta_i\sigma_i\overline{\sigma\pi_t}^{(-i)}(1-\frac{1}{\delta_i})\big)^2}{2(\nu_i^2+\sigma^2_i)}- r(1-\theta_i)-\theta_i\overline{\mu\pi_t}^{(-i)} + \frac{\theta_i}2 \overline{\Sigma \pi_t^2}^{(-i)}
    \\ \nonumber
    &\qquad\qquad + \frac{\theta_i^2}{2} \Big(\big(\overline{\sigma\pi_t}^{(-i)}\big)^2 + \frac1{n-1} \overline{(\nu\pi_t)^2}^{(-i)}\Big) \big(1-\frac1{\delta_i}\big) \bigg) - \rho_i.
\end{align}
We have two equations for three unknowns, now we need one further assumption for the nature of $f_i$ and $g_i$ in order to solve the system.
\begin{assumption}
    \label{ass:form-of-g-n-player-example}
    Let $g_i(t) = f_i(t)^{1 - \kappa_i}$, where $\kappa_i \in \bR$.
\end{assumption}
\begin{remark}
    Here, as opposed to \cite{lackersoret2020many} or \cite{karatzas1997explicit} where one is naturally led to $\kappa_i=1$, we find a non-trivial time-dependence structure of the consumption utility. We highlight that the FPP approach naturally supports the time dynamics of the utilities in contrast to the HJB approach that does not additionally weight the integrand with a function of time.
\end{remark}
We have two distinct cases.

\emph{Case 1: Let $\kappa_i \neq 0$.} 
    We obtain and solve a classical ODE of \textit{Bernoulli equation} type (we omit the dependence in $t$ for simplification)
\begin{align*}
    f^\prime_i + a_i(t)f_i + b_i(t)f_i^{1 - \kappa_i\delta_i} = 0
    \ \textrm{ for }\ 
        a_i(t) = \eta_t^i - \theta_i \bar{c}^{(-i)}_t 
        \ \textrm{ and }\
        b_i(t) = \frac{\epsilon_i^{-\delta_i}}{\delta_i} \big(\tilde c_t^{(-i)}\big)^{\theta_i(1-\delta_i)}.
\end{align*}
Substituting $k_i(t) := f_i(t)^{-\kappa_i\delta_i}$ we find the ODE for $k_i(t)$, 
\begin{align*}
    \frac1{\kappa_i\delta_i}k^\prime_i - a_i(t)k_i - b_i(t)= 0.
\end{align*}
Hence, solving the ODE we have 
\begin{align*}
k_i(t)=e^{\kappa_i\delta_i\int_0^t a_i(s) ds} + \kappa_i\delta_i \int_0^te^{\kappa_i\delta_i\int_s^t a(r)dr} b_i(s) ds,
\end{align*}
and $f_i(t) = k_i(t)^{\frac1{\kappa_i\delta_i}}$, $g_i(t) = 
f_i(t)^{1 - \kappa_i} =
k_i(t)^{\frac{1-\kappa_i}{\kappa_i\delta_i}}$. Plugging the $f_i,g_i$ expressions in the first equation of \eqref{eq:system-equations-n-player-consumption} along with the general form of the consumption strategy we can extract the form of optimal consumption map.

\emph{Case 2. Let $\kappa_i = 0$.} In this case, the optimal consumption policy of agent $i$ simplifies considerable. More precisely, from \eqref{eq:system-equations-n-player-consumption} we obtain
\begin{align*}
    c^i_t = \big(\tilde c_t^{(-i)}\big)^{\theta_i(1-\delta_i)} \epsilon_i^{\delta_i}.
\end{align*}

\subsection{The Forward Nash equilibrium}

In view of the \emph{best responses} of Proposition \ref{prop:BestResponses-n-playerGame-average-n-1-consumption} we now solve for the \emph{simultaneous best responses} as to establish the existence of a Nash equilibrium.
\begin{definition}[Forward Nash equilibrium]
\label{def:ForwardNashEquilibrium-consumption}
Let for any $i \in \{1,\dots, n\},~(\pi^{i,*},c^{i,*})\in\cA^i$ and $(\pi^{i,*},c^{i,*})$ is the optimal strategy in the sense of Proposition \ref{prop:BestResponses-n-playerGame-average-n-1-consumption}.
Let $\big(\pi^i,c^i\big) \in \cA^i$. Let  $Q^i$ be the $\bF$-progressively measurable random field satisfying 
$$Q^i(x,t):= e^{-\rho_i t}U^{i}(x,t) + \int_0^t e^{-\rho_i s}V^{i}(\hat c_s^i x,s)ds,$$ where $\hat c^i$ is given by \eqref{eq:c-hat-i} with $c^j = c^{j,*},~j\neq i$ and 
$U^i,V^i:\Omega\times (0,\infty) \times [0,\infty)\to \bR$ are two other $\bF$-progressively measurable random fields.

A forward Nash equilibrium consists of $n$-triples of $\cF_t$-adapted maps $(Q^i,\pi^{i,*},c^{i,*})$ with $i=1,\dots,n$ such that for any $t\geqs 0$ the following conditions hold.
\begin{itemize}
    \item[i)] The mappings $x \mapsto U^{i}(x,t)$ and $x \mapsto V^{i}(x,t)$ are $\bP$-a.s.~strictly increasing and strictly concave;
    
    \item[ii)] 
	Let managers $j\neq i$ acting according to $(\pi^{j,*},c^{j,*})$, manager $i$ acts according to $(\pi^i,c^i) \in \cA^i$, and take $\widehat X^{i}$ as the associated \emph{relative performance wealth process} \eqref{eq:widehat-x-consumption} and \emph{relative consumption metric} $\hat c^{i}$  given by \eqref{eq:c-hat-i}. 
	Then $Q^i(\widehat X^{i}_t,t)$ is a (local) supermartingale.
	\item[iii)] 
	Let \emph{all} managers acting according to $(\pi^{j,*},c^{j,*})$, and take $\widehat X^{i,*}$ as the associated \emph{relative performance wealth process} \eqref{eq:widehat-x-consumption} and \emph{relative consumption metric} $\hat c^{i,*}$  given by \eqref{eq:c-hat-i}. 
	Then $Q^i(\widehat X^{i,*}_t,t)$ is a (local) martingale.
\end{itemize}

\end{definition}

\subsection*{Equilibrium with FPPs of separable power factor form}

In order to obtain explicit results we focus on the \textit{separable power factor form} case of \eqref{eq:Solution Postulation-consumption}-\eqref{eq:Solution Postulation-consumption-V} for which $U^i_x/U^i_{xx}=V^i_x/V^i_{xx}=-\delta_i x$. Notably, at the level at which we have formulated our problem we recover the results of \cite[{Theorem 3}]{lackersoret2020many} for which one has $U^i_x/U^i_{xx}=-\delta_ix$, for all $t$ and $x > 0$ (note their Remark 5).
\begin{assumption}
\label{ass:form-of-g-n-player}
    Set $g_i(t) := f_i(t)^{1-\kappa}$ in \eqref{eq:Solution Postulation-consumption}-\eqref{eq:Solution Postulation-consumption-V} for some $\kappa\in \bR$ for any $i = 1,\dots,n$.
\end{assumption}

One can see that the PDE \eqref{eq:SPDE-step01-consumption} depends on the two unknown functions $U$ and $V$, and thus makes for an under-determined system.
Hence, Assumption \ref{ass:form-of-g-n-player} is essential to ensure that the equation is solvable for the unique Nash equilibrium. 
At the same time we have to assume $\kappa$ to be constant across the whole population in order to get a solution for the Nash equilibria.
\begin{theorem}
\label{theo:nPlayerForwardNashGame-consumption}
Let the conditions of Proposition \ref{prop:BestResponses-n-playerGame-average-n-1-consumption} hold for all agents $i\in\{1,\dots,n\}$. Assume furthermore that agents have separable power factor form forward maps $U^i,V^i$ with initial conditions.
\begin{align*}
U^i(x,0) 
= 
\epsilon_iV^i(x,0) 
= 
\frac1{1-\frac1{\delta_i}}x^{1-\frac1{\delta_i}},\quad
\epsilon_i>0,\ \delta_i>0,\ \delta_i \neq 1,
\end{align*}
as in \eqref{eq:Solution Postulation-consumption}-\eqref{eq:Solution Postulation-consumption-V}. Define the quantities 
\begin{align*}
&\varphi^\sigma_{n} =
\frac1n\sum_{k=1}^n  \delta_k \frac{\sigma_k \mu_k }{{\nu_k^2+\sigma_k^2 \big(1+\frac{\theta_k(1-\delta_k)}{n-1}\big) }}
,\qquad
\psi^\sigma_n = 
\frac1{n-1}\sum_{k = 1}^n  \theta_k(1-\delta_k) \frac{\sigma^2_k}{{\nu_i^2+\sigma_i^2 \big(1+\frac{\theta_k(1-\delta_k)}{n-1}\big) }},
\\
&\lambda_i = \epsilon_i^{-\frac{\delta_i}{1 + \frac{\theta_i}{n-1}(1-\delta_i)}} \Big(\widetilde{\epsilon^{\delta}}\Big)^{\frac{\theta_i(1-\delta_i)}{{(\overline{\theta(1-\delta)}-1) \big(1 + \frac{\theta_i}{n-1}(1-\delta_i)\big)}}}, 
\quad
\beta_i 
= \frac1{1 + \frac{\theta_i}{n-1}(1-\delta_i)}\bigg(\frac{\theta_i(1-\delta_i)}{\overline{\theta(1-\delta)}-1} \overline{\eta\delta}-{\eta_i\delta_i}\bigg),
\\
&\widetilde{\epsilon^{\delta}} 
= \Bigg(\prod_{k = 1}^n \epsilon_k^{\frac{\delta_k}{1 + \frac{\theta_k}{n-1}(1-\delta_k)}}\Bigg)^{\frac1n},
\quad 
\overline{\eta\delta} 
= \frac1n \sum_{k = 1}^n \frac{\eta_k \delta_k}{1 + \frac{\theta_k}{n-1}(1-\delta_k)}, 
\\
&
\textrm{and}\quad 
\overline{\theta(1-\delta)} = \frac1{n-1}\sum_{k = 1}^n \frac{\theta_k(1-\delta_k)}{1 + \frac{\theta_k}{n-1}(1-\delta_k)}.
\end{align*}
Let as well, for $i=1,\dots,n$, the map $\eta^i_t$ be defined as  
\begin{align}
    \label{eq:rho-n-player}
    \eta^i_t = \eta_i&:= \Big(1-\frac1{\delta_i}\Big) \bigg(\frac{\delta\big(\mu_i-\theta_i\sigma_i\overline{\sigma\pi}^{(-i)}(1-\frac{1}{\delta_i})\big)^2}{2(\nu_i^2+\sigma^2_i)}-r(1-\theta_i)-\theta_i\overline{\mu\pi}^{(-i)} + \frac{\theta_i}2 \overline{\Sigma \pi^2}^{(-i)}\\
    \nonumber
    &\qquad\qquad + \frac{\theta_i^2}{2} \Big(\big(\overline{\sigma\pi}^{(-i)}\big)^2 + \frac1{n-1} \overline{(\nu\pi)^2}^{(-i)}\Big) \big(1-\frac1{\delta_i}\big) \bigg) - \rho_i,
\end{align}
where the explicit expressions for $\overline {\sigma\pi}^{(-i)},~\overline {\mu\pi}^{(-i)},~\overline {(\nu\pi)^2}^{(-i)}$ and $\overline {\Sigma \pi^2}^{(-i)}$ are found in the supplementary material in \eqref{eq:pisigma-nplayers}, \eqref{eq:pimu-nplayers}, \eqref{eq:pinu-nplayers}, \eqref{eq:piSigma-nplayers} respectively.

If $\psi^\sigma_{n}\neq 1$ and Assumption \ref{ass:form-of-g-n-player} holds ($\kappa_i=\kappa$ for all $i$) then a unique optimal candidate strategy exists with the optimal $\pi^{i,*}$ and $c^{i,*}$ given by
\begin{align}
\label{eq:Nash-pi-n-player-consumption}
    \pi^{i,*}
    &=
    \frac1{{\nu_i^2+\sigma_i^2 \big(1+\frac{\theta_i(1-\delta_i)}{n-1}\big) }}
    \Big(\theta_i \sigma_i(1-\delta_i) \Big(1+\frac1{n-1}\Big) \frac{\varphi^\sigma_n}{1-\psi^\sigma_n}  +  \mu_i \delta_i \Big),
    \\
\label{eq:Nash-c-n-player-consumption}
    c^{i,*}_t &=   
    \begin{cases}
        \bigg(\frac{1}{\beta_i} + \Big(\frac1{\lambda_i} - \frac1{\beta_i}\Big)e^{\kappa\beta_it} \bigg)^{-1}, & \beta_i \neq 0,
        \\
        \big(-\kappa t + \frac1{\lambda_i}\big)^{-1}, & \beta_i = 0,
    \end{cases}
\end{align}
and $\eta^i_t$ in \eqref{eq:rho-n-player} is constant in time. Moreover, {optimal candidate strategy} is a \emph{forward Nash equilibrium strategy} if either $\kappa \leqs 0$ or $\kappa > 0, ~\beta_i > \lambda_i$.

The forward Nash equilibria is given by the $n$-triples $\{(Q^{i,*},\pi^{i,*},c^{i,*})\}_{i=1,\dots,n}$ where the $Q^{i,*}(x,t) = e^{-\rho_i t} U^{i,*}(x,t) + \int_0^t e^{-\rho_i s}V^{i,*}(\hat{c}^i_sx,s) ds$, with $U^{i,*},V^{i,*}$ the solution of \eqref{eq:SPDE-step01-consumption} (of the form \eqref{eq:Solution Postulation-consumption}-\eqref{eq:Solution Postulation-consumption-V}) with the optimal strategies $\pi^{\cdot,*},c^{\cdot,*}$ plugged-in. The maps $f_i,g_i$ can be determined up to a closed formula given below in \eqref{eq:f-final-form-n-player}.

If $\psi^\sigma_n=1$ {or $\kappa > 0, ~\beta_i \leqslant \lambda_i$} then there exists no Nash equilibrium.
\end{theorem}

\begin{proof}
The complete proof can be found in Section \ref{supsec:proof-N-game} of the supplementary material.
\end{proof}
The parameter $\kappa$ is interpreted as the \textit{market-risk relative consumption preference} and in Section \ref{sec:examplesInterpretations} below we discuss at length its economic features and comparative interpretation. 
The investment strategy  \eqref{eq:Nash-pi-n-player-consumption} is a classical expression matching that of \cite{LackerZariphopoulou2017,lackersoret2020many,platonov2020forward}. The expression for the consumption strategy \eqref{eq:Nash-c-n-player-consumption} has similarities to the results found in \cite[Equation (5)]{lackersoret2020many} with the crucial difference of $\kappa$. When  $\kappa = 1$ then the results coincide with \cite{lackersoret2020many}.



For some combination of parameters $\lambda_i, \beta_i$ and $\kappa,~(c^{i,*}_t)_{t\geqs 0}$ is not admissible due to the finite-time blow-up.  We comment further on the admissibility of $c^*$ in Section \ref{sec:examplesInterpretations}.


\begin{remark}[Open question] 
\label{rem:open-question-after-nash-n-player}
To obtain a Nash equilibria result, we are restricted to $\kappa$ being constant across the agents, and the same will happen for the MFG case in the next Section. The constraint is technical and stems from the difficulty in expressing the consistency equation for the averages of all $c^i$ (i.e., \eqref{eq:consistency-geom-aver-c-n-player} is not longer reachable) and without such an equation the system cannot be solved. 
At the same time, for a ``best response'' all $\kappa_i$'s may be distinct. 
The open question is then if one allows for different $\kappa_i$'s, how to carry out the aggregation procedure as to express the consistency equation for the averages of all $c^i$? 

In \cite{BielagkLionnetDosReis2017} 
An example of how this problem was overcome can be found . There, aggregation was achieved through the so-called \textit{weighted-dilated inf-convolution}, and the authors found the dynamic equation of the \textit{Representative agent} of their economy, which allowed them to solve their Nash-equilibrium problem. The \textit{representative agent} methodology remains unexplored in the context of the FPP. For instance, it is unclear how inf-convolution can be carried out in this framework as the FPP is not known in advance. It is endogenously found as part of the problem's solution in opposition to the classic utility game-type problems.  

\end{remark}

Lastly, if one sets $\kappa = 1$ then our FPP solution can be rewritten in respect to a time horizon $T$ to recover exactly the results of \cite{lackersoret2020many} (and \cite{LackerZariphopoulou2017} if $V^i=0$) and those of \cite{geng2017passive} for two players and no consumption. Such calculations are straightforward hence omitted. We close the section with a corollary for the single stock case.
\begin{corollary}
\label{cor:single-stock-n-player}
Let $\mu_i = \mu,~\sigma_i = \sigma,~\nu_i  = 0$, for all $i = 1,\dots, n$ and $\mu,\sigma > 0$. Then $\lambda_i,~\beta_i,~\theta_{\text{crit}},~ \overline{\rho\delta}, ~ \overline{\delta}$ are expressed as
\begin{align*}
    \lambda_i &= \epsilon_i^{-\frac{\delta_i}{1 + \frac{\theta_i}{n-1}(1-\delta_i)}} \Big(\widetilde{\epsilon^{\delta}}\Big)^{\frac{\theta_i(1-\delta_i)}{{(\overline{\theta(1-\delta)}-1) (1 + \frac{\theta_i}{n-1}(1-\delta_i))}}},
    \quad 
    \theta_{\textrm{crit}} = \dfrac{1-\frac1{n-1}\sum_{k=1}^n\frac{\theta_k(1-\delta_k)}{1+\frac{\theta_k}{n-1}(1-\delta_k)}}{\overline{\delta}},
    \\
    \overline{\delta} &= \frac1{n-1}\sum_{k=1}^n \frac{\delta_k}{1+\frac{\theta_k}{n-1}(1-\delta_k)},\quad\overline{\rho \delta} = \frac1{n}\sum_{k=1}^n \frac{\rho_k\delta_k}{1+\frac{\theta_k}{n-1}(1-\delta_k)}
    \\
    \beta_i &= \frac{\mu^2}{2\sigma^2\big(1+\frac{\theta_i}{n-1}(1-\delta_i)\big)}\Big(1-\frac{\delta_i}{1 + \frac{\theta_i}{n-1}(1-\delta_i)}\Big)\Big(1 -\frac{\theta_i}{\theta_{\textrm{crit}}\big(1+\frac{\theta_i}{n-1}\big)}\Big)\Big(\delta_i + \frac{\theta_i}{\theta_{\textrm{crit}}}(1-\delta_i) \Big)\\
    &\qquad + \frac{r}{ \big(1+\frac{\theta_i}{n-1}(1-\delta_i)\big)} (1-\delta_i) \left(1-\frac{\theta_i}{\theta_{\text{crit}}}\right) 
    + \frac1{ \big(1+\frac{\theta_i}{n-1}(1-\delta_i)\big)} \frac{\theta_i}{\theta_{\text{crit}}} \frac{\overline{\rho \delta}}{\overline{\delta}}(1-\delta_i) + \rho_i \delta_i.
\end{align*}

Then, if $\theta_{\text{crit}}\neq 1$ the optimal candidate strategy exists with the optimal $(\pi^{i,*},c^{i,*})$ 
\begin{align*}
    \pi^{i,*} &= \frac{\mu}{\sigma^2 \big(1+\frac{\theta_i}{n-1}(1-\delta_i)\big)} \Big(\delta_i + \frac{\theta_i}{\theta_{\textrm{crit}}}(1-\delta_i) \Big),
    \\
    c^{i,*}_t &=
    \begin{cases}
        \bigg(\frac{1}{\beta_i} + \Big(\frac1{\lambda_i} - \frac1{\beta_i}\Big)e^{\kappa\beta_i t} \bigg)^{-1}, & \beta_i \neq 0,\\
        \big(-\kappa t + \frac1{\lambda_i}\big)^{-1}, & \beta_i = 0.
    \end{cases}
\end{align*}
Furthermore, the optimal candidate strategy is a Nash equilibrium, if either $\kappa < 0$, or $\kappa >0,~ \lambda_i <  \beta_i$.
\end{corollary}
%
%
%
%
%

%
%
%
\section{The mean field forward optimisation game} \label{sec:MFG-forward-consumption}

Inspecting Theorem \ref{theo:nPlayerForwardNashGame-consumption} one sees that the optimal strategies and FPP of an agent depend on that agent's specific parameters (model parameters, initial wealth, risk tolerance and performance concern) and certain averages of the parameters of all agents. 
This motivates an MFG approach to the game. In this section, and inspired by \cite{platonov2020forward}, we formalise the concept of CRRA forward mean field Nash game. We use the concept of \emph{type distributions} introduced in \cite{LackerZariphopoulou2017,lackersoret2020many,platonov2020forward}. 

We focus on forward maps that at time $t=0$ are of power type, 
\begin{align*}
U^i(x,m,0) & = \epsilon_i V^i(x,m,0) = \begin{cases}
    \frac{1}{1-\frac1{\delta_i}}\big(x m^{-\theta_i}\big)^{1-\frac1{\delta_i}}, & \delta > 0,~\delta_i \neq 1,\\
    \log(xm^{-\theta_i}), & \delta_i=1,
\end{cases}
\end{align*}
where $x>0,m>0$ denote the wealth of agent and the average wealth of other agents respectively (for $U^i$) or the consumption of the agent and the average consumption of the other agents respectively (for $V^i$). 
We refer to $\delta_i>0$ and $\theta_i\in[0,1]$ as \emph{personal risk tolerance}, \emph{competition weight} parameters, respectively, whereas $\epsilon_i > 0$ captures the \emph{relative importance the agent assigns to the wealth compared to consumption}. 
Finally, for the admissible set of strategy pairs $(\pi^{i,*},c^{i,*}), ~ i = 1, \dots, n$, we recall the \textit{Forward relative performance process} $Q^i(x,t):= {e^{-\rho_i t}} U^{i}(x,t) + \int_0^t {e^{-\rho_i s}} V^{i}(\hat c_s^i x,s)ds$, where $c_s^i$ is given by \eqref{eq:c-hat-i}. We refer to $\rho_i$ as \emph{personal discount rate}.
Here, as in the previous section (see Remark \ref{rem:skipping delta=1 case}), we skip the logarithmic case.

For the $n$-agent game, we define for each agent $i=1,\dots,n$ the \emph{type vector}
\begin{equation*}
\label{eq:type-vector}
\zeta_i := (x^i_0,\delta_{i},\theta_{i},\epsilon_i,\rho_i, \breve\mu_{i},\nu_{i},\sigma_{i}),
\end{equation*}
which characterises uniquely each agent $i$.
These \emph{type vectors} induce an empirical measure, called the \emph{type distribution}, which is probability measure on the \emph{type space}
\begin{align}
\label{def:exp-type-space}
\cZ^e  := (0,\infty) \times (0,\infty) \times [0,1] \times (0,\infty) \times [0, \infty)\times (0,\infty) \times [0,\infty) \times [0,\infty), 
\end{align}
given by
\[
m_n(A) = \frac{1}{n}\sum_{i=1}^n1_A(\zeta_i), \ \text{ for Borel sets } A \subset \cZ^e .
\]
Assume that as the number of agents becomes large, $n\rightarrow\infty$, the above empirical measure $m_n$ has a weak limit $m$ in the sense that $\int_{\cZ^e } f\,dm_n \rightarrow \int_{\cZ^e } f\,dm$ for every bounded continuous function $f$ on $\cZ^e $. For example, this holds almost surely if the $\zeta_i$'s are i.i.d.\ samples from $m$. Let $\zeta=(\xi,\delta,\theta,\epsilon,\rho,\breve\mu,\nu,\sigma)$ denote a $\cZ^e$-valued random variable with  limiting distribution $m$.

The \emph{mean field game} (MFG) we define next allows us to derive the limiting strategy as the outcome of a self-contained equilibrium problem, which intuitively represents a game with a continuum of agents with type distribution $m$. Rather than directly modelling a continuum of agents, we follow the MFG paradigm of modelling a single generic agent who we view as randomly selected from the population. 
The probability measure $m$ represents the distribution of type parameters among the continuum of agents. Equivalently, the generic agent's type vector is a random variable with law $m$. Heuristically, each agent in the continuum trades in a single stock driven by two Brownian motions, one of which is unique to this agent and one of which is common to all agents. We extend the Forward Nash equilibrium of Definition \ref{def:ForwardNashEquilibrium-consumption} to the MFG setting below.

\subsection{Agents as type-distribution samples  and the market}
\label{subsec:agents-as-type}
Let $(\Omega, \cF,\bF = (\cF)_{t \geqs 0},\bP)$ be a stochastic basis supporting two independent Brownian motions $W = (W_t)_{t \geqs 0}$ and $B = (B_t)_{t \geqs 0}$ together with a random vector $\zeta$ having distribution $m$ and given by
\begin{align}
    \label{eq:type-vector-mfg}
    \zeta = (\xi,\delta,\theta,\epsilon,\rho,\breve\mu,\nu,\sigma),
\end{align}
with values in the space $\cZ^e $ defined in \eqref{def:exp-type-space}, and
independent of $W$ and $B$.
Let $\bF = (\cF_t)_{t \in [0,T]}$ denote the smallest filtration satisfying the usual assumptions for which $\zeta $ is $\cF^{\mathrm{MF}}_0$-measurable and both $W$ and $B$ are adapted. Let also $\bF^B=(\cF^B_t)_{t \in [0,T]}$ denote the natural filtration generated by the Brownian motion $B$.

The \emph{generic agent's} wealth process is
\begin{align}
\label{def:X-MFG-consumption}
dX_t = rX_t + \pi_tX_t\big((\breve\mu-r) dt + \nu dW_t + \sigma dB_t\big) - c_t X_t dt, \quad X_0 = \xi, 
\end{align}
for a self-financing strategy, $(\pi_{t})_{t \geqs 0}$, standing for the fraction of wealth invested in the risky asset and consumption policy $(c^i_t)_{t \geqs 0}$, representing the instantaneous rate of consumption per unit of wealth. Together they must belong to the admissible set 
\begin{align*}
    \cA_{\mathrm{MF}} = \Big\{(\pi,c)&: 
    \textrm{$\bF$-progressively measurable $\bR \times (0,\infty)$-valued process } (\pi_t,c_t)_{t \geqs 0},
    \\
    & \qquad
        \bE\Big[\int_0^t (|\pi_s|^2 + |c_s|^2 ) ds \Big]< \infty,\
     \textrm{for any } t>0
    \Big\}.
\end{align*}
The risk-free interest rate $r$ is deterministic and fixed for the entire population. The random variable $\xi$ is the initial wealth of the generic agent, whereas $(\breve\mu,\nu,\sigma)$ are the market parameters. As in the previous sections, we denote $\mu := \breve\mu - r$ as an excess return. In the sequel, the parameters $\delta,~\theta, ~\epsilon$ and $\rho$ will affect the risk and consumption preferences of the generic agent. Each agent among the continuum will have different preference parameters. Hence these eight parameters are $\cF_0$-random, and each has the same interpretation as in the $n$-player game of the earlier section.

\subsection{The mean field equilibrium}

The formulation of the forward Nash game of Section \ref{sec:FRPCwithaverage} drives the formulation of the mean field game we discuss here, see also \cite{lackersoret2020many,platonov2020forward}. Recall that in the MFG-formulation, the \emph{generic agent} does not influence the average wealth of the continuum of agents, as but one agent amid a continuum.

We next introduce the concept of \emph{mean field (MF)-forward relative performance}, $\pi^*,c^*$ is  the \emph{MF-equilibrium} and, the main object of interest the \emph{MF-Forward relative performance equilibrium}.
\begin{assumption}
\label{ass:regularity-u-consumption-mfg}
Assume the derivatives $U_t(x,t)$, $U_{x}(x,t)$, $U_{xx}(x,t)$ and $V_{x}(y,t)$, $V_{xx}(x,t)$ exist for all $t \geqs 0, ~x > 0,~\bP$-a.s. and furthermore, $x \mapsto U(x,t)$ and $x \mapsto V(x,t)$ are strictly increasing ($U_x,V_x > 0$) and strictly concave ($U_{xx},V_{xx} < 0$) for any $t \geqs 0,~ x > 0,~ \bP$-a.s.
\end{assumption}
Given this setup we next define our concept of equilibrium (see also  \cite{platonov2020forward}).
\begin{definition}[MF-CRRA-Forward relative performance equilibrium (for the generic manager)]
\label{def:MFG-Forward-problem-consumption}
Let $(\overline{X}_t)_{t\geqs 0}$ and $(\overline{\Gamma}_t)_{t\geqs 0}$ be $\bF^{B}$-adapted positive square integrable stochastic processes representing the geometric average wealth and geometric average consumption of the continuum of agents respectively. Let $(\pi,c)\in \cA^{\mathrm{MF}}$ and $X^{\pi,c}$ solve \eqref{def:X-MFG-consumption} with $\pi$ and $c$. 

Set a $\bF^{\mathrm{MF}}$-progressively measurable random field $(Q(x,t))_{t\geqs 0}$ having, for some $\rho \geqs 0$, the dynamics 
\begin{align}
\label{eq:StructureQgeneral}
    Q(x,t) & = e^{-\rho t} U(x,t) + \int_0^t e^{-\rho s} V(\hat c_s x,s)ds,
\end{align} 
where $\hat c_t = c_t \overline{\Gamma_t}^{-\theta}$ and 
$U,V:\Omega\times (0,\infty)\times [0,\infty)\to \bR$ are two other $\bF$-progressively measurable random fields.

The field $Q$ is an \emph{MF-forward relative performance} for the \emph{generic manager} if, for all $t\geqs 0$, the following conditions hold:
\begin{itemize}
	\item[i)] The mappings $x \mapsto U(x,t),~ x \mapsto V(x,t)$, are $\bP$-a.s.~strictly increasing and strictly concave;
	\item[ii)]  For any $(\pi,c) \in \cA^{\textrm{MF}}$, $Q( X_t^{\pi,c}\overline X_t^{-\theta},t)$ is a (local) supermartingale and $X^{\pi,c}$ is the \emph{generic agent's} wealth process solving \eqref{def:X-MFG-consumption} for the strategy $(\pi,c)$;
	\item[iii)] There exists $(\pi^{*},c^*)\in \cA^{\textrm{MF}}$ such that $Q( X^*_t \overline X_t^{-\theta},t)$ is a (local) martingale where $X^*$ solves \eqref{def:X-MFG-consumption} with $(\pi^{*},c^{*})$ plugged in as the strategy;
	\item[iv)] We call $(\pi^*,c^*)$ of iii) a \emph{MF-equilibrium} if 
	$$\overline X_t = \exp\bE[\log X^*_t| \cF^B_t]
	\quad\textrm{and}\quad 
	\overline \Gamma_t = \exp\bE[\log c^*_t| \cF^B_t]\quad \textrm{ for all $t\geqs 0$, $\bP$-a.s.,}
	$$
	where $X^*$ solves \eqref{def:X-MFG-consumption} with $(\pi^{*},c^*)$ plugged in as the strategy.
\end{itemize}
We denote the sextuple $(U,V,\pi^*,c^*,\overline X,\overline \Gamma)$ satisfying i)-iv) the \emph{MF-Forward relative performance equilibrium}.
\end{definition}

The last point can be understood as a fixed point argument which creates a compatibility condition between the generic agent within the continuum of agents. Conditionally on the Brownian motion $B$, each agent faces an independent noise $W$ and an independent type vector $\zeta$. As argued in  \cite{LackerZariphopoulou2017,lackersoret2020many,platonov2020forward}, conditionally on $B$, all agents faces i.i.d. copies of the same optimisation problem. The law of large numbers suggests that the geometric average terminal wealth of the whole population should be $\exp\bE[\log X_t^*|\cF^B_t]$.

Our construction allows us to identify $\exp\bE[\log X_t^*|\cF^B_t]$ with a certain dynamics and, in turn, treat this component as an additional uncontrolled state process. This avoids arguments using the master equation for models with different types of agents altogether. 

It is unclear if the definition for forward MFG we propose is sufficiently tractable (concretely point iv)) to address the case of random coefficients or the general setting of a forward utility map with a full It\^o-field representation (with volatility). This same issue is already present in the earlier works  \cite{LackerZariphopoulou2017,lackersoret2020many} if one wanted to generalise to random coefficients there. Nonetheless, the definition works very well within the Merton market model setup, yielding a particularly tractable framework of study to understand the inherent difficulties of the general case. Additionally, our goal is to interpret the results further as this is one of the first works that combine MFGs with forward performance criteria.

\subsection{Solving the optimisation problem}

We now present the main result of this section. The existence of a \emph{MF-Forward relative performance equilibrium} for the generic manager according to Definition \ref{def:MFG-Forward-problem-consumption} within the context of a structural assumption similar to \eqref{eq:Solution Postulation-consumption}.

From the methodological point of view, the problem is solved as before. Apply It\^o's formula to $Q(Z^{\pi,c}_t,t)$, determine the optimal strategy $(\pi^*,c^*)$ and the consistency condition (the PDE) for $Q$ such that the first three conditions of Definition \ref{def:MFG-Forward-problem-consumption} hold. The last condition, to show that $(\pi^*,c^*)$ is indeed the MFG Forward equilibrium will follow by construction as the whole formulation is built with the fixed-point condition embedded from the start. 

We next present the \textit{separable power factor form}  assumption for $U$ and $V$.
\begin{assumption}
\label{ass:form-of-U-V-mean-field-g-with-f}
Let $U(x,t),V(x,t)$ admit the following form, for $x > 0,~t\geqs 0$,
\begin{align*}
    U(x,t) = \frac1{1-\frac1{\delta}}x^{1-\frac1{\delta}}f(t), \qquad V(x,t) =\frac1{\epsilon} \frac1{1-\frac1{\delta}}x^{1-\frac1{\delta}}g(t),
\end{align*}
for $t \mapsto f(t)$ differentiable for any $t \geqs 0$ and $f(0) = 1$, for some random variable $\epsilon > 0$ and $g(t) = f(t)^{1-\kappa}$ for some $\kappa \in \bR$ (where we have $g(0)=1^{1-\kappa} = 1$). Moreover, let for any $t \geqs 0,~\int_0^t |f(s)|^2 ds < \infty,~\bP$-a.s.
\end{assumption}

The parameter $\delta>0,\delta\neq 1$ represents the agent's risk tolerance while $\epsilon>0$, associated to the consumption element, represents the relative importance the agent assigns to the consumption when compared to the wealth. The parameter $\kappa$ represent the \emph{market-risk relative consumption preference} (in Section \ref{sec:examplesInterpretations} we discuss its economic features). 

As in the previous section, this assumption is a natural one to make in order to solve \eqref{eq:MF-SPDE-consumption} below (see additionally \cite{elkaroui2018consistent}, \cite[Definition 2.9]{avanesyan2020construction} or \cite[Section 4]{chong2018optimal}).

\begin{theorem}
\label{theo:MFG-solution-consumption}
Assume that $\delta>0,~\theta\in[0,1],~\mu>0,~\sigma\geqs 0,~\nu\geqs 0$ such that $\sigma^2+\nu^2>0$. 
Define constants $\psi^\sigma,\varphi^\sigma,\psi^\mu,\varphi^\mu$, assume they are finite and have explicit form given by
\begin{align*}
\psi^\sigma&= \bE\Big[\theta(1-\delta)\frac{ \sigma^2}{\nu^2+\sigma^2}\Big],
\varphi^\sigma= \bE\Big[\delta \frac{\mu\sigma }{\nu^2+\sigma^2}\Big],
\psi^\mu
=\bE\Big[\theta(1-\delta)\frac{ \mu \sigma}{\nu^2+\sigma^2}\Big],
\varphi^\mu=\bE\Big[ \delta \frac{\mu^2 }{\nu^2+\sigma^2} \Big].
\end{align*}
Assume further $\psi^\sigma\neq 1$ and define for some $\kappa \in \bR$, 
\begin{align}
\label{eq:mfg-equilibrium-pi-c}
    \pi^* 
    &=
    \frac1{{\nu^2+\sigma^2}}\Big( \theta (1-\delta)\sigma \frac{\varphi^\sigma}{1-\psi^\sigma}  +  \mu \delta \Big),\quad
    c^*_t =  
    \begin{cases}
        \bigg(\frac{1}{\beta} + \Big(\frac1{\lambda} - \frac1{\beta}\Big)e^{-\kappa\beta t} \bigg)^{-1}, & \beta \neq 0,\\
        \big(\kappa t + \frac1{\lambda}\big)^{-1}, & \beta = 0,
    \end{cases}
\end{align}
where 
\begin{align*}
    \lambda&=\epsilon^{-\delta}\big(e^{\bE[\delta\log \epsilon]}\big)^{\frac{\theta(1-\delta)}{\bE[ \theta(1-\delta)]-1}}, 
    \qquad 
    \beta = \frac{\theta(1-\delta)}{\bE[ \theta(1-\delta)]-1}\bE[\eta \delta] -\eta\delta,
\end{align*}
and
\begin{align}
    \label{eq:rho-mfg}
    \eta =  \Big(1  
    -\frac1{\delta}\Big) & \bigg(\frac{\delta\big(\mu-\theta\sigma\overline{\sigma\pi^*}(1  -\frac{1}{\delta})\big)^2}{2(\nu^2+\sigma^2)} 
    -r(1-\theta)
    \\ \nonumber 
    &\qquad \qquad -\theta_i\overline{\mu\pi^*} + \frac{\theta}2 \overline{\Sigma (\pi^*)^2} + \frac{\theta^2}{2} \big(\overline{\sigma\pi^*}\big)^2 \big(1-\frac1{\delta}\big) \bigg) - \rho,
\end{align}
where $\overline{ \sigma \pi^*},~\overline{ \mu \pi^*}$ and $\overline{\Sigma(\pi^*)^2}$ are given by \eqref{eq:sigma-pi-pverline}, \eqref{eq:ConsistencyExpressions-alphasigma-alphamu} and \eqref{eq:ConsistencyExpressions-Sigmapi} respectively.

Take the field $Q$ of \eqref{eq:StructureQgeneral} with $U,V$ satisfying Assumption \ref{ass:regularity-u-consumption-mfg} and \ref{ass:form-of-U-V-mean-field-g-with-f} (for the $\kappa$ above), and the consistency PDE 
\begin{align}
\label{eq:MF-SPDE-consumption}
\nonumber
U_t 
&= 
\theta \bigg( 
\frac{\varphi^\sigma}{1- \psi^\sigma}\psi^\mu+\varphi^\mu
- \frac{\sigma \frac{\varphi^\sigma}{1-\psi^\sigma}(\mu - \theta\sigma \frac{\varphi^\sigma}{1-\psi^\sigma})}{\nu^2+\sigma^2} -r(1-\theta) - \frac\theta2 \overline{\Sigma(\pi^*)^2}-\frac{\theta}2\Big(\frac{\varphi^\sigma}{1-\psi^\sigma}\Big)^2
  \bigg) xU_x
\\
&
\qquad \qquad 
+  \rho U + \frac{\big(\mu -\theta\sigma \frac{\varphi^\sigma}{1- \psi^\sigma}\big)^2}{2(\nu^2+\sigma^2)}   \frac{(U_x)^2}{U_{xx}}
+ \frac12\theta^2 
\Big(\frac{\varphi^\sigma}{1-\psi^\sigma}\Big)^2
\Big(\frac{\sigma^2}{\nu^2+\sigma^2}-1 \Big)x^2U_{xx}
\\ \nonumber
& \qquad \qquad  + \theta U_x \bar c_t- \widetilde {V} (U_x,t),
\end{align}
where $\overline{\Sigma(\pi^*)^2}$ is given by \eqref{eq:ConsistencyExpressions-Sigmapi}, $\bar c_t = \bE[c^*_t]$, where $c^*$ is from \eqref{eq:mfg-equilibrium-pi-c} and $\widetilde{V}$ is Fenchel-Legendre transform of $V$ (with initial condition of power type according to Assumption \ref{ass:form-of-U-V-mean-field-g-with-f} with $t=0$).

Then, if $\kappa < 0$ or $\kappa > 0,~ \beta > \lambda$ there exists a unique (parameterised by $\kappa$) MF-Forward CRRA relative performance equilibrium in the sense of Definition \ref{def:MFG-Forward-problem-consumption}. The MF-equilibrium strategy is given by $(\pi^*,c^*)$ from \eqref{eq:mfg-equilibrium-pi-c} and MF-forward CRRA relative performance utility is given by $Q$, for $U,V$ satisfying \eqref{eq:MF-SPDE-consumption} and $\overline X_t = \exp{\bE \big[\log(X_t^{\pi^*,c^*})\big]},~\overline \Gamma_t = \exp{\bE [\log(c_t^*)]}$.
Finally, utility time dynamics $f$ and $g$ satisfy
\begin{align}
    \label{eq:f-g-final-form-mfg}
    f(t) = \begin{cases}
        \Big(\big(c_t\big)^{\frac1{\delta}}\big(\tilde{c}_t\big)^{\theta(\frac1{\delta}-1)}\epsilon\Big)^{-\frac1{\kappa}}, & \kappa \neq 0,\\
        \exp\Big\{\left(\theta(\frac1\delta - 1) \bE[\lambda] - \eta -\frac{\lambda}{\delta}\right) t\Big\}, &\kappa = 0,
    \end{cases} \quad \text{and} \quad g(t) = f(t)^{1-\kappa}. 
\end{align}
If $\psi^\sigma=1$ {or $\kappa > 0, \beta \leqslant \lambda$}, then there exists no MF-equilibrium.
\end{theorem}
\begin{proof}
The complete proof can be found in Section \ref{supsec:proofMFgame} of the  appendix.
\end{proof}



We address the admissibility of $c^*$ in Section \ref{sec:examplesInterpretations}.

We now provide the single stock case result.
\begin{corollary}[Single stock]
\label{cor:mf-single-stock}
Let $\mu,\sigma,\nu$ be deterministic with $\nu  = 0$ and $\mu,\sigma > 0$. Then $\lambda,\beta$ are expressed as 
\begin{align}
    \label{eq:beta-lambda-mfg-single-stock}
    \lambda&=\epsilon^{-\delta}\big(e^{\bE[\delta\log \epsilon]}\big)^{\frac{\theta(1-\delta)}{\bE[\theta(1-\delta)]-1}}, \\ \nonumber
    \beta &= \bigg(\frac{\mu^2}{2\sigma^2}\Big(\delta + \frac{\theta}{\theta_{\textrm{crit}}}(1-\delta)\Big) + r \bigg)(1-\delta)\big(1 -\frac{\theta}{\theta_{\textrm{crit}}}\big) +  \frac{\theta}{\theta_{\text{crit}}}\frac{\bE[\rho \delta]}{\bE[\delta]}(1-\delta) +\rho\delta ,
\end{align}
where $\theta_{\textrm{crit}} = \frac{1-\bE[\theta(1-\delta)]}{\bE[\delta]}$.

If $\bE[\theta(1-\delta)]\neq 1$ then a {optimal candidate strategy exists}, with the optimal strategy $(\pi^{*},c^{*})$ given by
\begin{align}
    \pi^{*} = \frac{\mu}{\sigma^2} \Big(\delta + \frac{\theta}{\theta_{\textrm{crit}}}(1-\delta) \Big)
    \ \textrm{ and }\
    \label{eq:mfg-single-stock-optimal-consumption}
    c^*_t =  
    \begin{cases}
        \bigg(\frac{1}{\beta} + \Big(\frac1{\lambda} - \frac1{\beta}\Big)e^{\kappa\beta t} \bigg)^{-1}, & \beta \neq 0,\\
        \big(-\kappa t + \frac1{\lambda}\big)^{-1}, & \beta = 0.
    \end{cases}
\end{align}
Furthermore, {the optimal candidate strategy} is a MF-equilibrium, if either $\kappa \leqs 0$, or $\kappa > 0,~ \lambda <  \beta$.
\end{corollary}

\begin{remark}[On convergence of the Nash equilibria of $n$-player game]
    We can see that the MF-equilibrium agrees with the limit of the $n$-player game equilibrium strategies, as $n \to +\infty,~\bP$-a.s.. The respective parameters and their functions converge to the averages of type distributions by the strong law of large numbers.
\end{remark}
%
%
\section{A discussion of the results}
\label{sec:examplesInterpretations}

In this section, we focus on the interpretation of the results obtained by way of analyzing the single stock case of the mean field game given as by Corollary \ref{cor:mf-single-stock} (compare with Corollary \ref{cor:single-stock-n-player}). This approach provides already valuable insights on the model, capturing the relations and their interpretation whilst avoiding the complex, intricate dependencies of the general case. This case also allows for easier comparison with the results from \cite{lackersoret2020many} who used standard utility maps.
For the sake of simplicity, if not specified, we write $\pi,c$ to refer to $(\pi^*,c^*)$ the mean field single stock {optimal candidate strategy} given by \eqref{eq:mfg-single-stock-optimal-consumption}. 
{For a full overview, we keep mention of the inadmissible equilibria cases, even though our discussion focuses on the MF-equilibrium strategy.}
We do not discuss the optimal investment strategy $\pi$ and omit some details on the optimal consumption strategy $c$ since these overlap with preceding work  \cite{lackersoret2020many,LackerZariphopoulou2017,platonov2020forward}. 
We restrict our attention to the \emph{market-risk relative consumption preference} $\kappa$ and its interplay with the other parameters in the scope of the optimal consumption $c$. The case $\kappa=1$ is just that of \cite[Section 4]{lackersoret2020many} and we point the reader to the discussion there. Before starting, we emphasize that the behaviours found here, when $\kappa\neq 1$, are not a symmetrised version of those of \cite{lackersoret2020many} -- this is easily seen in Figure \ref{fig:plots-of-c-for-various-parameters} and Table \ref{table:consumptionSignAsMapofParameters}.

\subsection*{Classical parameters}

\textit{Recovering classical results.} Under no performance concerns and no competition (e.g., $\theta = 0$) we recover the classical results by Merton \cite{merton1969lifetime} and Samuelson \cite{samuelson1969lifetime} in \eqref{eq:mfg-single-stock-optimal-consumption}: agents invest a constant fraction of wealth in the stock, the consumption strategy is time-dependent, and  both investment and consumption strategies are independent -- this holds for the classic utility results in \cite{LackerZariphopoulou2017}. Both here and in \cite{platonov2020forward} the results differ from the mentioned ones as the classical results encapsulate a dependence on the time horizon. When $\kappa=1$ we recover the results in \cite{lackersoret2020many}.

\smallskip
\textit{The market parameters $\beta, \lambda$ in \eqref{eq:beta-lambda-mfg-single-stock}.} As in \cite{lackersoret2020many} and Corollary \ref{cor:mf-single-stock} we rewrite $\beta$ in \eqref{eq:beta-lambda-mfg-single-stock} as
    \begin{align*}
        \beta 
        = \frac{\mu^2}{2\sigma^2}(1-\delta_{\text{eff}})\delta_{\text{eff}} &+ {r(1-\delta_{\text{eff}})} + {\rho \delta^\prime_{\text{eff}}},
        \\
        \quad\textrm{where}\quad 
        \delta_{\text{eff}} 
        = (1-\frac{\theta}{\theta_{\text{crit}}})\delta + \frac{\theta}{\theta_{\text{crit}}}
        \quad &\textrm{and}\quad 
        \delta^\prime_{\text{eff}}= (1-\frac{\theta}{\theta^\prime_{\text{crit}}})\delta + \frac{\theta}{\theta^\prime_{\text{crit}}}
        \\
       \quad\textrm{for}\quad 
       \theta_{\textrm{crit}} 
       = \frac{1-\bE[\theta(1-\delta)]}{\bE[\delta]} 
       \quad &\textrm{and}\quad 
       \theta^\prime_{\textrm{crit}} = \rho\frac{1-\bE[\theta(1-\delta)]}{\bE[\delta \rho]}.
    \end{align*}
    When $\rho$ is deterministic across the population we have $\delta_\text{eff} = \delta^\prime_{\text{eff}}$ and $\theta_{\textrm{crit}}=\theta^\prime_{\textrm{crit}}$. Furthermore, $\beta$ reduces to the following expression
    $\beta =\frac{\mu^2}{2\sigma^2}(1-\delta)\delta + r(1-\delta) + \rho \delta$, which is reminiscent of the classical Merton result.
    Additionally, setting $r=\rho=0$, we can rewrite $\beta$ as
    \begin{align*}
        \beta=\frac12\pi\mu (\delta-1)\Big(\frac{\theta}{\theta_{\textrm{crit}}}-1\Big),
    \end{align*}
    which shows that $\beta$ is, in fact, the expected return times some additional risk (on top of the log risk-tolerant investor) and a different competition proportion (on top of $\theta_{\text{crit}}$-competitive investor).  
    We hence interpret $\beta$ as the \emph{expected effective portfolio return}. 
    
    When it comes to $\lambda$ (see \eqref{eq:beta-lambda-mfg-single-stock}),
    \begin{align*}
            \lambda
            &
            =\frac{1}{\epsilon^{\delta}} 
            \times (H_{\textrm{population}})^{\theta(1-\delta)}
            \quad \textrm{where} \quad 
            H_{\textrm{population}}= \big(e^{\bE[\delta\log \epsilon]}\big)^{\frac{1}{\bE[\theta(1-\delta)]-1}},
    \end{align*}
    the main relation to notice is its inverse dependence on the \textit{agent's relative perception of wealth relative to consumption} $\epsilon$ and scaled with the risk tolerance $\delta$, and then adjusted 
    by the agent's specific weighting (associated to the agent's risk aversion and competition preferences) of the population parameter  $H_{\textrm{population}}$.
    Hence, we treat $\lambda$ as an ``effective'' $\epsilon$. As originally $\epsilon$ captures the linear relation between utility from wealth and utility from consumption in the general FPP process (see Assumption 
    \ref{ass:form-of-U-V-mean-field-g-with-f}), we characterise $\lambda$ as the \emph{effective relative perception of wealth with respect to consumption}.

\subsection*{The market-risk relative consumption preference $\kappa$}

Before fully diving into the analysis of $\kappa$, we differentiate between two settings and their interpretations. An agent in the absence of competition ($\theta=0$) or purely looking for a ``best response'' to the actions undertaken by the other agents (see Remark \ref{rem:open-question-after-nash-n-player}) can determine its own idiosyncratic $\kappa$. At the same time, for the Nash forward equilibrium, all players must interact through a common $\kappa$. 
In either case, we interpret $\kappa$ as a \emph{market-risk relative consumption preference} observing that when the agent is free to choose $\kappa$ for herself ($\theta=0$ case or just best response), the agent looks for the most suitable market investment-consumption environment via the choice of $\kappa$.

\subsubsection*{The concept of Elasticity of intertemporal substitution (EIS)}

We start by calculating one of the main parameters in models of dynamic choice of consumption in macroeconomics and finance \cite{thimme2017intertemporal}. The \emph{elasticity of intertemporal substitution} (EIS) measures the agent's willingness to substitute future consumption for present consumption in response to changes in investment opportunities and is given by \cite{hall1988-eis,Azar2018-nexus-eis}
\begin{align*}
    \text{EIS} :
    &
    = -\dfrac{ d\big(\partial_t{c}_t/c_t\big)}{d\big(\partial_t V_x(c_tz,t)/V_x(c_tz,t)\big)},
\end{align*}
for some\footnote{The formula presented has stated the agent's rate of consumption equal to $c_tz$, where $c_t$ is a rate of consumption for unit of wealth and $z > 0$ some fixed amount of wealth. On the one hand, fixing the amount of wealth helps us to purely investigate the relative effect of consumption, on the other, it allows to assess its time differential to calculate $\textrm{EIS}$, in contrast with $c_t X_t$ having an It\^o-differential form and thus not computable.} $z > 0$. 
EIS measures the response of consumption growth to the deviation of \emph{real interest rate}, the latter captured by the denominator of the above expression. 
If $\text{EIS} < 0$, the agent is subjected to \emph{income effect}. 
On the contrary, $\text{EIS} > 0$ is known as \emph{substitution effect} (see introduction). Finally, $\text{EIS} = 0$ corresponds to a case of constant consumption rate where the agent is indifferent about the real-time growth and would like to spend the constant fraction of income all over the time.

We introduce the reader to the new concept of \emph{Elasticity of Conformity}.
\begin{definition}[Elasticity of Conformity (EC)]
    Let $c_t$ and $\tilde c_t$ be the consumption of the generic agent and geometric average \eqref{eq:c-hat-i} respectively. 
    We define \emph{Elasticity of Conformity} $\gamma_t$ as 
    \begin{align}
    \label{eq:EC-elasticity-of-conformity}
        \gamma_t := \dfrac{d\left(\partial_t c_t/c_t\right)}{d\left(\partial_t \tilde c_t/\tilde c_t\right)}.
    \end{align}
\end{definition}
The \emph{Elasticity of Conformity (EC)} captures how the change in the agent's consumption is affected by the deviation in the consumption of the typical agent (as compared against the geometric average) across time. The EC's negative/positive sign reflects that the agent's decision aligns with the primary trend of investors, i.e.,  to increase/decrease consumption in response to the decaying geometric average consumption. Moreover, the choice of the agent to maintain a non-zero level of consumption is captured by their EIS asymptotically going to $0$ as $t \to \infty$ (see Figure \ref{fig:all-plots-c-f-EC-EIS} a) and c)).

We now calculate the EIS of the generic agent
\begin{align*}
        \textrm{EIS} := -\dfrac{d\big(\partial_t{c}_t/c_t\big)}{d\big(\partial_t V_x(c_tz,t)/V_x(c_tz,t)\big)} 
        &= 
        -\dfrac{d \big(\partial_t c_t/c_t\big)}{d \Bigg(\dfrac{V_{xx}(c_tz,t)\partial_t{c}_tz+V_x(c_tz,t)\frac{\partial_t{g}(t)}{g(t)}}{V_x(c_tz,t)}\Bigg)}
        \\
        &
        = \dfrac{d \big(\partial_t c_t/c_t\big)}{d \Big(\frac1{\delta}\big(\partial_t c_t/c_t\big)-\big(\partial_t{g}(t)/g(t)\big)\Big)},
    \end{align*}
given the CRRA property $V_{x}/V_{xx} =-\delta x$. 
Injecting \eqref{eq:f-g-final-form-mfg}, $g(t) = \Big(\big(c_t\big)^{\frac1{\delta}}\big(\tilde{c}_t\big)^{\theta(\frac1{\delta}-1)}\epsilon\Big)^{\frac{\kappa-1}{\kappa}}$, yields
    \begin{align*}
        \frac{\partial_t{g}(t)}{g(t)} &= \frac{\partial_t{c}_t}{c_t}\frac{\kappa - 1}{\kappa}\frac1{\delta} + \frac{\partial_t{\tilde{c}}_t}{\tilde{c}_t}\frac{\kappa -1}{\kappa}\theta\left(\frac1{\delta}-1\right) 
    = \frac{\partial_t{c}_t}{c_t}\frac{\kappa-1}{\kappa}\frac1{\delta} \left(1 +  \theta\big(1 - \delta \big)\frac1{G_t}\right),
    \end{align*}
    where $G_t:= \dfrac{\partial_tc_t/c_t}{\partial_t\tilde{c}_t/\tilde{c}_t}$.
    Returning to the EIS expression from before, we find 
    \begin{align*}
        \textrm{EIS} 
        &= \dfrac{d \big(\partial_t c_t/c_t\big)}{d \Big(\frac1{\delta}\big(\partial_t c_t/c_t\big)-\frac{\kappa - 1}{\kappa}\frac1{\delta} \Big(1 + \theta\big(1 - \delta\big)\frac1G_t\Big)\big(\partial_t c_t/c_t\big)\Big)}
        = \frac{\delta\kappa}{1 - \theta(1-\kappa)(\delta - 1)\frac1{\gamma_t}}.
    \end{align*}

In contrast with the classic utility optimisation approach where one has $\textrm{EIS}^{\textrm{classic}}=\delta$ the FPP approach allows an intrinsic time-dependence of $V$ and thus of the EIS. Namely
\begin{align}
\label{eq:EIS-for-our-setting}
    \textrm{EIS}^{(\theta\neq 0)}_t 
    = \frac{\textrm{EIS}^{(\theta=0)}}{1 - \theta(1-\kappa)(\delta - 1)\frac1{\gamma_t} },
    \quad \textrm{where}\quad
    \textrm{EIS}^{(\theta=0)} = \kappa\delta = \kappa \times \textrm{EIS}^{\textrm{classic}},
\end{align}
with $\gamma_t $ standing for the time-dependent \emph{Elasticity of Conformity (EC)}. 

The FPP here allows one to disentangle risk tolerance $\delta$ and Elasticity of intertemporal substitution (EIS) -- standard utility theory does not allow this. This same feature has been reported in \cite[Slide 22]{bismuthgueant2019} through the use of recursive utilities (in a more restricted Merton market framework than ours but within  MFGs) and which is close to our result \eqref{eq:EIS-for-our-setting}. In contrast with standard utility theory, the FPP setting captures the different dimensions of risk, represented by $\kappa$, that purely comes from the consumption environment and further scales the EIS.
When $\kappa = 1$, the case of standard CRRA utility framework and analysed in \cite{lackersoret2020many}, immediately yields $\text{EIS} = \delta=\textrm{EIS}^{\textrm{classic}}$.

\begin{figure}[htb!]
        \centering 
        \includegraphics[width=0.9\textwidth]{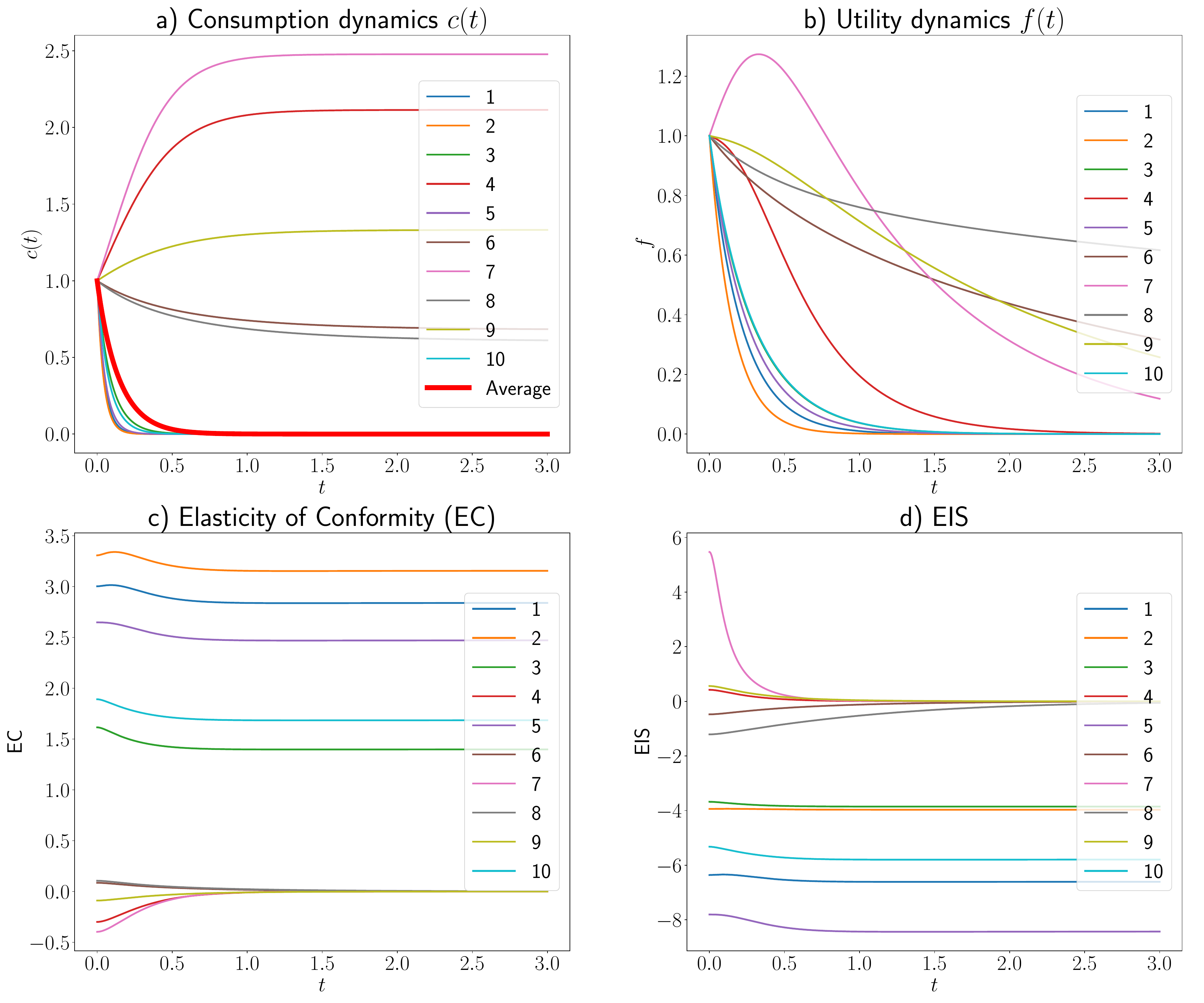}
        \caption{Simulated agents' types as a function of time: 
        a) Optimal consumption \eqref{eq:mfg-single-stock-optimal-consumption}; 
        b) Utility dynamics $f$ \eqref{eq:f-g-final-form-mfg}; 
        c) Elasticity of Conformity (EC) \eqref{eq:EC-elasticity-of-conformity}; 
        d) EIS \eqref{eq:EIS-for-our-setting}. 
        All the parameters are independent as a components of joint type's distribution and simulated as follows: $\delta/2 \sim Beta(1.5,1.5),~ \theta \sim Beta(3,5),~ \rho \sim Beta(1,19),~r = 0.05,~\lambda = 1,~ \mu = 5,~ \sigma = 1, ~\kappa = -2, ~ \theta_{\text{crit}} = 0.375$, where $Beta(\cdot,\cdot)$ refers to the Beta distribution.}
        \label{fig:all-plots-c-f-EC-EIS}
    \end{figure}

Figure \ref{fig:all-plots-c-f-EC-EIS} displays simulations of consumption, utility, EC and EIS for a collection of agents with a certain fixed distribution of the type vector. Looking into Figure \ref{fig:all-plots-c-f-EC-EIS}a) it is worth pointing out that, within our simulations, the average (geometric) consumption asymptotically decays to $0$ due to the behaviour of the majority of the agents being under the income effect. As $\beta$ is bounded from above but not from below, and given the symmetric distribution of type component for $\delta$ and $\theta$ around critical values of $1$ and $\theta_{crit}$ respectively, the simulated $\beta$ is much more likely to be negative and hence driving the consumption to $0$ on the long run.  
Agents that do not reduce their consumption to a zero constant level over time are indifferent to the main herd of investors, and one sees their Elasticity of Conformity (Figure \ref{fig:all-plots-c-f-EC-EIS}c)) converging to zero. 
Moreover, their consumption behaviour is becoming perfectly inelastic in the long run (EIS $\to 0$ as $t \to \infty$ in Figure \ref{fig:all-plots-c-f-EC-EIS}d)).

In contrast, agents that choose to gradually cut their consumption to $0$ (following the market trend) have a constant positive EC and a constant negative EIS in the long term.

\subsubsection*{The market-risk relative consumption preference $\kappa$ under no performance concerns: $\theta = 0$ (no competition)}

\emph{Choosing different EIS and Risk Tolerance.}
In the absence of competition, i.e., when $\theta = 0$,  \eqref{eq:EIS-for-our-setting} already yields a non-standard $\textrm{EIS}^{(\theta = 0)} = \kappa \delta$, whereas standard utility theory with CRRA type utilities yields $\textrm{EIS}^{\textrm{classic}} = \delta$. Without competition, as we no longer require $\kappa$ to be uniform and deterministic across the whole population, the agent can choose in which consumption environment he is willing to invest, hence, as opposed to \cite{lackersoret2020many}, the agent can determine her own risk tolerance $\delta$ and $\textrm{EIS}^{(\theta = 0)}$ independently.

For $\kappa > 0$ then $\textrm{EIS}^{(\theta = 0)} > 0$ and hence the agent is subjected to \emph{income} effect, otherwise, when $\kappa < 0$ then $\textrm{EIS}^{(\theta = 0)} < 0$ and the agent allows for a \emph{substitution} effect. Finally, $\kappa = 0$ reflects a constant consumption rate as $\textrm{EIS}^{(\theta = 0)} = 0$. This last scenario is particularly important as empirical research finds $\textrm{EIS}$ in general to be around $0$ for the average household (see \cite{hall1988-eis}, \cite{CampbellMankiw1991},  \cite{cundy2018essays}). In our framework, $\textrm{EIS}= 0$ may simply mean $\kappa= 0$ instead of a low risk tolerance $\delta = 0$.

{We do not present any formal proof but argue that $\kappa$ cannot be made to disappear from the problem at the expense of redefining the remaining model parameters (namely, $r, \rho, \nu, \epsilon,\beta,\lambda$) and then referring to a variant of the model  (with the redefined parameters). In effect, $\kappa$ represents a new degree of freedom in modelling.} Nonetheless, a scaling effect between $\kappa, \beta,\lambda$ at the level of consumption strategy is present.

Concretely, given $\beta$ and $\lambda$, define $\hat \beta := \kappa \beta$, $\hat \lambda := \kappa \lambda$, and $c^{(\beta,\lambda,\kappa)}_t:=c^*_t$ with $c^*_t$ the expression given in \eqref{eq:mfg-single-stock-optimal-consumption} for $\kappa \neq 0$ (and $\beta,\lambda$). Then $c^{(\beta,\lambda,\kappa)}$ can be rewritten as follows
\begin{align*}
    c^{(\beta,\lambda,\kappa)}_t = \kappa^{-1} c^{(\hat\beta,\hat\lambda,\kappa=1)}_t.
\end{align*}

It is important to remark that for $\kappa=1$ the consumption  $c^{(\hat\beta,\hat\lambda,\kappa=1)}$ is that of  \cite{lackersoret2020many}. 
By construction $\lambda>0$ but $\hat \lambda\in\bR$, on the other hand, for $\kappa \geqs 0$ the parameters $\hat \beta$, $\hat \lambda$ preserve the original sign of $\beta$ and  $\lambda$.

\subsubsection*{The market-risk relative consumption preference $\kappa$ under performance concerns: $\theta\neq 0$}

The agents are now synchronised on their consumption preference as they compete ($\kappa$ is the same for all agents, see Remark \ref{rem:open-question-after-nash-n-player}). From \eqref{eq:EIS-for-our-setting} the agent sees her consumption dynamics being re-scaled by competition ($\theta \neq 0$) and the market environment ($\kappa \neq 0)$. The agents can access their $\text{EIS}$,  \eqref{eq:EIS-for-our-setting}, by adjusting the risk-competition preferences (the EC depends on $\beta$ and $\lambda$ as implied by the agents' parameters).

\begin{figure}[htb!]
        \centering 
        \includegraphics[width=0.9\textwidth]{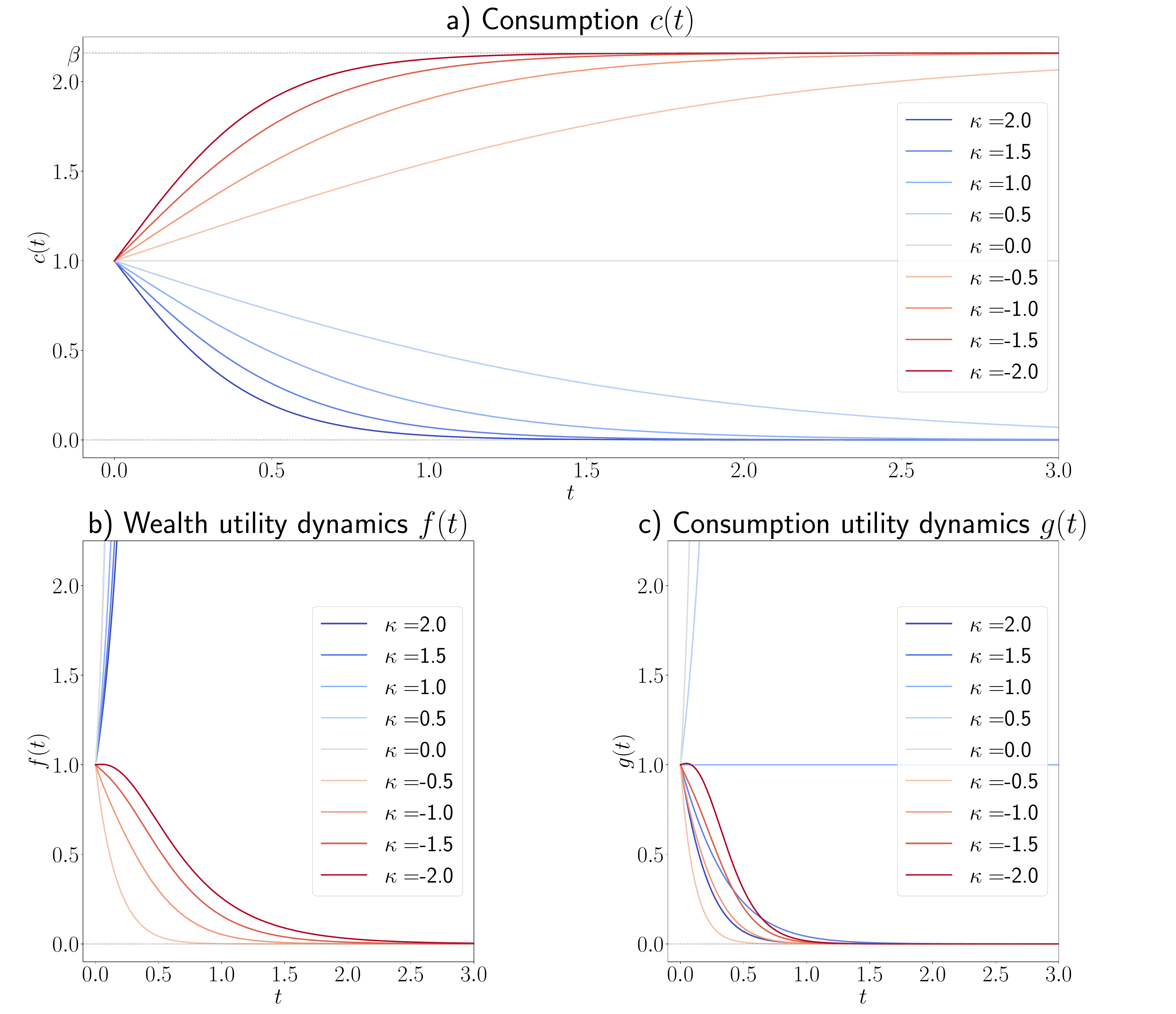}
        \caption{Plots of different elements from a single agent's optimization problem for $\kappa$ from $-2$ to $2$: a) Consumption dynamics $c(t)$ with lower asymptote at Zero and upper asymptote at $\beta$; b) Wealth utility dynamics $f(t)$; c) Consumption utility dynamics $g(t)$. The parameters are $\delta = 0.5,~ \theta = 0.6,~ \rho = 0.04,~r = 0.05,~\beta = 2.16,~\lambda = 1,~ \mu = 5,~ \sigma = 1$.
        }
        \label{fig:plots-of-f-g-for-different-kappas}
    \end{figure}

\textit{The ``market-risk relative consumption preference'' parameter $\kappa$.} 
This parameter appears in Assumption \ref{ass:form-of-g-n-player-example} and \ref{ass:form-of-U-V-mean-field-g-with-f} to weigh the utility of wealth versus consumption through $g(t)=f(t)^{1 - \kappa}$. It reflects the weight in time the environment assigns to the consumption process compared to the accumulation of wealth. In parallel with the risk tolerance $\delta$ of the standard utility $u_0(x) = \frac{1}{1-\frac1\delta} x^{1-\frac1\delta}$, we interpret $\kappa$ as a parameter of market risk, reflecting how the market is encouraging its agents to prefer consumption over wealth and vice versa.
Recalling for simplicity the no-competition EIS, $\textrm{EIS}^{(\theta = 0)}= \kappa\delta$, we can see that $\kappa$ appears as another dimension of risk, further adjusting each agent's consumption response to the deviations in the real interest rate.

Figure \ref{fig:plots-of-f-g-for-different-kappas} displays the effect of $\kappa$ on consumption and utility dynamics for the generic agent. 
When $\kappa > 0$, the agent's utility is more sensitive to deviations of wealth than to consumption (as $g(t) = f(t)^{1-\kappa}$) and two cases need to be distinguished: $\kappa \geqs 1$ and $0 < \kappa < 1$.
The market environment for large $\kappa$ makes the agents' utility from wealth and consumption inversely related. In this scenario, the percent increase in utility from wealth will correspond to a decrease in utility from wealth and vice versa, making the agent \emph{consumption-to-wealth averse}.
On the other hand, when $0 < \kappa < 1$ the environment pushes the agent to derive a moderate utility from consumption relative to the utility from wealth. From a percent increase or decrease in utility from consumption, the agent will obtain a much larger respective change in her utility from wealth.

If $\kappa < 0$ the general trend is reversed from that of $0 < \kappa < 1$. The changes in consumption utility 
have a larger weight relative to those of the wealth utility dynamics.
We refer to all sub-cases of $\kappa < 1$ as \emph{consumption-to-wealth tolerant}, indicating that the agents' utilities from the accumulation of wealth and consumption are aligned. Still, we emphasise that the quality of their mutual relation is subjected to the magnitude of $\kappa$. Finally, $\kappa = 0$ prescribes $f = g$, and the environment tells the agent to prefer wealth and consumption equally. Here, $c_t=\lambda$ for all $t\geqs 0$ (see \eqref{eq:mfg-single-stock-optimal-consumption}), in other words, the agent has no dynamic preference to deviate from  the effective relative perception of wealth with respect to consumption $\lambda$. 
Looking back at \eqref{eq:mfg-single-stock-optimal-consumption}, one can rewrite $c$ as
$$
c_t=
\bigg(\frac{1}{\beta} + \Big(\frac1{\lambda} - \frac1{\beta}\Big)e^{\kappa\beta t} \bigg)^{-1}
=
\bigg(\frac{1}{\beta}\Big(1 - e^{\kappa\beta t}\Big) + \frac1{\lambda}e^{\kappa\beta t} \bigg)^{-1}.
$$
Roughly, with $\kappa=0$, the market does not have the agent competing their $\lambda$ against their expected effective portfolio return $\beta$.

Figure \ref{fig:plots-of-c-for-various-parameters} displays the effect of $\kappa$ on the consumption map for different combinations of $\lambda,\beta$. When $\kappa\in (-1,1)$ the steady-state of consumption (the asymptote) is reached much later. 
    \begin{figure}[htb!]
        \centering 
        \includegraphics[width=0.9\textwidth]{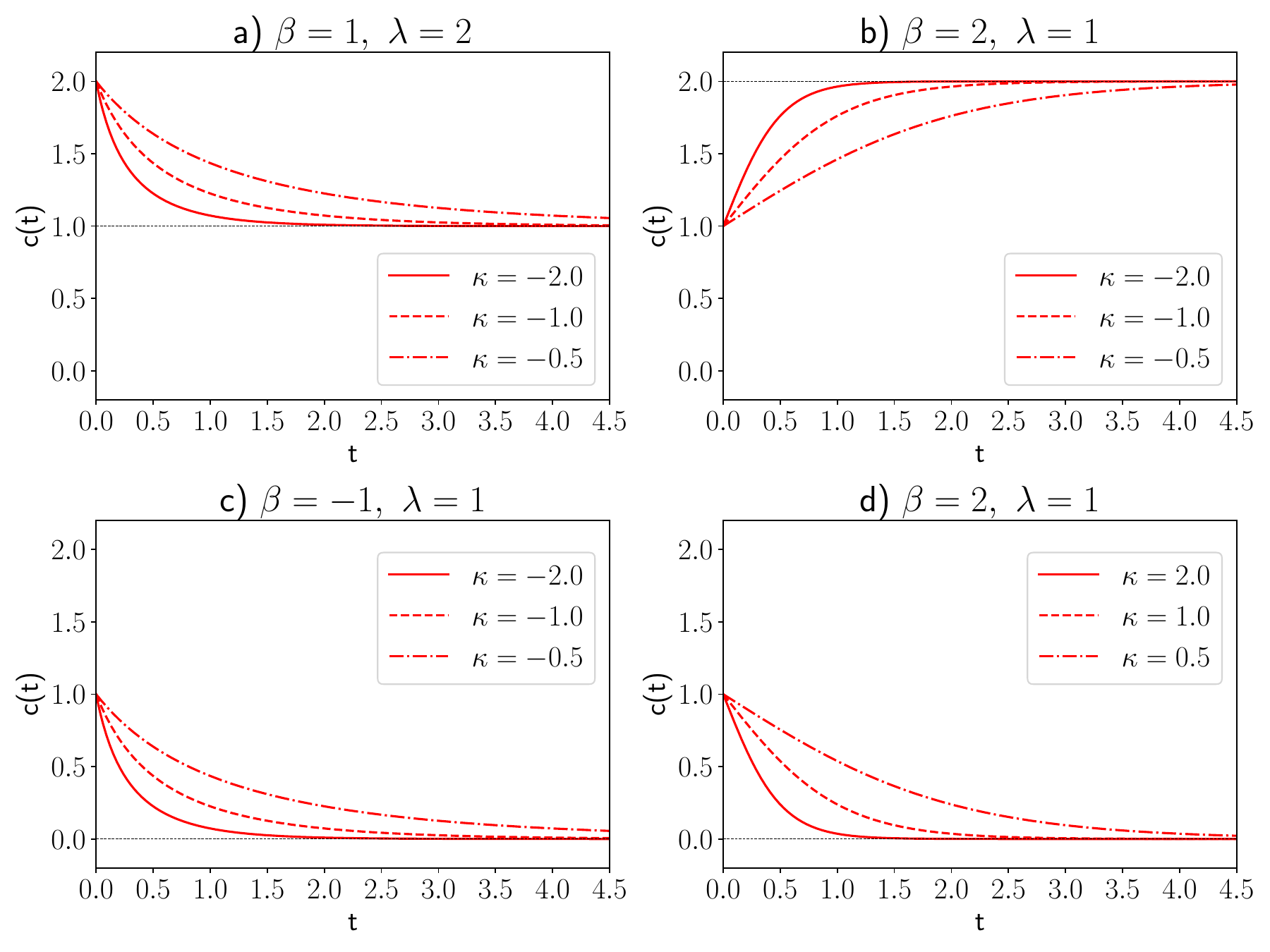}
        \caption{Plots of optimal $c$ of \eqref{eq:mfg-single-stock-optimal-consumption} for various combinations of parameters with a highlight on the monotonic behaviour of $\kappa \mapsto (c_t)_{t\geqs 0}$ by way in which the asymptote is reached.}
        \label{fig:plots-of-c-for-various-parameters}
    \end{figure}

In this model, a unique macro-economics cause-effect phenomenon of $\kappa$ is noteworthy. Assume a central planner that can control $\kappa$ then, given market parameters, by manipulating $\kappa$ the central planner can encourage or discourage consumption across the whole market. This is a well-known (Keynesian economics) policy for \textit{economic stimulus} in recession times. 

We briefly discuss the effect of the discounting factor $\rho$ and the riskless rate $r$. By looking into \eqref{eq:beta-lambda-mfg-single-stock}, one can notice the additive effect of $r$ and $\rho$ on the expected effective portfolio return. The sign of $\beta$ affects the direction and rate of the agent's consumption \eqref{eq:mfg-single-stock-optimal-consumption}. Thus,  by choosing the personal discount factor appropriately the agent can adjust for $\beta$, making it  larger/smaller, as to exceed the important thresholds of $\lambda$ and $0$ (that take place in $c^*$ \eqref{eq:mfg-single-stock-optimal-consumption} -- see Table \ref{table:consumptionSignAsMapofParameters}).

\textit{The finer interplay of $\kappa$ with $\beta$ and $\lambda$ on the consumption policy $c$.} 
We refer to \cite[Section 4]{lackersoret2020many} comprising the analysis for $\rho = r = 0$ and $\kappa = 1$. This reduction, upon inspection, extends to our case when $\kappa > 0$, thus we focus on the case where $\kappa < 0$, i.e., the market encourages the \emph{substitution} effect.
    
In terms of market behaviour, there are 4 cases to be distinguished: $\beta > \lambda$, $0 \leqs \beta < \lambda$, $\beta < 0 < \lambda$ and $\beta = \lambda$ and we summarise them in Table \ref{table:consumptionSignAsMapofParameters} (by monotonic behaviour, associated asymptotes, and admissibility). If $\beta = \lambda$ then $\kappa$ plays no role in $c$, but it does affect the utility map. We emphasise that the expected rate of return of wealth, ${\partial_t} \bE[\log X_t|\cF_0]$, follows the opposite direction of the monotonicity of $c$. 

First, $c_t$ is increasing in time, when $\beta > \lambda$. Indeed, when the relative perception of wealth is big, i.e., if $\epsilon \nearrow \infty$ then $\lambda \searrow 0^+$, the agent's consumption does not vanish and monotonically increases to $\beta$; this is in stark contrast with the case $\beta > \lambda$ under $\kappa > 0$, where $c_t \searrow 0^+$ monotonically as $t \to \infty$ (compare Figure \ref{fig:plots-of-c-for-various-parameters} b) and d).
From the beginning, the agent has a focus on wealth thus intends to consume less in the short-term (Figure \ref{fig:plots-of-c-for-various-parameters} b). However, in the long run, the agent follows the market preference towards consumption (see Figure \ref{fig:plots-of-f-g-for-different-kappas}, b) and c) for negative $\kappa$'s), and raises her proportion of consumption up to some asymptotically stable level $(=\beta)$. 
As the agent's wealth increases, she cannot resist the environment's push to spend more despite her preference for wealth. This particular situation mimics a ``keeping up with the Joneses'' \cite{Gali1994} behaviour given a population of ultra-rich agents: the effective utility from getting richer comes from increasing the level of consumption and is accelerated by competition and the intensive consumption environment.
    
The trend is similar to when $\beta < \lambda$, which is monotonically decreasing. Indeed, when $\epsilon \searrow 0^+,~\lambda \nearrow +\infty$, the agent prefers consumption over wealth (while the market pushes agents to consume $\kappa < 0$), so the agent consumes at a higher rate from the onset. However, we distinguish two sub-cases here. The agent with an expected effective portfolio return $\beta \geqs 0$, who seeks to stabilise her consumption and thus decrease it asymptotically up to the safe rate $\beta$ (Figure \ref{fig:plots-of-c-for-various-parameters} a).
When the agent chooses $\beta < 0$, she monotonically reduces her consumption rate to $0$, even if she had started from a higher level. That is the case when the agent is very risk-tolerant ($\delta > 1$) or very competitive ($\theta > \theta_{\textrm{crit}}$) or both, so the agent accumulates wealth even if she is deriving more utility-value from consumption over time (Figure \ref{fig:plots-of-c-for-various-parameters} c). 
    
Figure \ref{fig:plots-of-c-for-various-parameters} d) features the \emph{income} effect. The agent is pushed to asymptotically decrease her consumption to $0$, due to the environment promoting utility from wealth over utility from consumption -- see the positive $\kappa$ cases of Figure \ref{fig:plots-of-f-g-for-different-kappas} b) and c).
    
Finally, when $\kappa = 0$ or $\beta = \lambda$ the consumption $c_t$ is constant equal to $\lambda$. In the former scenario, $\kappa = 0$, the dynamics of the agent's preferences towards consumption and wealth is identical (captured by $f(t)=g(t)$ and thus $\text{EIS} = 0$). Hence, the agent's optimal consumption policy is the constant effective relative perception of wealth $\lambda$. 
Nonetheless, the agent's utility is affected by $\kappa$. All cases are neatly summarised in Table \ref{table:consumptionSignAsMapofParameters}. See also Figure \ref{fig:plots-of-monotonicity-c} in the supplementary material for the regions of monotonicity for $c_t$ as function of $\kappa,\delta,\theta$.

\begin{table}[htb!]
    \centering
    \caption{Behaviour of $(c_t)_{t\geqs 0}$ of \eqref{eq:mfg-single-stock-optimal-consumption} as a map of its parameters. Here $\searrow,\nearrow$ denote convergence from above/below respectively (as $t\to \infty$).
    Apparent is the asymmetry of cases for $\kappa < 0$ against $\kappa > 0$.}
    \label{table:consumptionSignAsMapofParameters}
    \begin{tabular*}{0.885\textwidth}{c|c|c|c|c|c}
    \toprule
    \multicolumn{3}{c|}{Parameters} 
    & 
    \multicolumn{3}{c}{Consumption $c$ of \eqref{eq:mfg-single-stock-optimal-consumption}}
    \\
    \midrule
        $\kappa$  & $\beta$ & $\lambda -\beta$  &\,  optim.~candidate \, &\,  admissibility\,  & $c_t$ as $t\to \infty$
         \\
        \midrule
            \multirow{5}{*}{$<0$} & \multirow{3}{*}{$>0$} & $>0$ & \multirow{5}{*}{\cmark} & \multirow{5}{*}{\cmark} & $\searrow \beta$ \\
            \cline{3-3} \cline{6-6}
            & & $<0$ & & & $\nearrow \beta$ \\
            \cline{3-3} \cline{6-6}
            & & $=0$ & & & constant throughout\\
            \cline{2-3} \cline{6-6}
            &$=0$ & $>0$ & & & $\searrow 0$\\
            \cline{2-3} \cline{6-6}
            &$<0$ & $>0$ & & & $\searrow 0$\\
        \midrule
            \multirow{5}{*}{$>0$} & 
            \multirow{3}{*}{$>0$} & $>0$ &\cmark & \xmark & finite-time blow-up\\
            \cline{3-6}
            & & $<0$ & \cmark & \cmark & $\searrow 0$ \\
            \cline{3-6}
            & & $=0$ & \cmark & \cmark & constant throughout \\
            \cline{2-6}
            &$=0$ & $>0$ & \cmark & \xmark & finite-time blow-up \\
            \cline{2-6}
            &$<0$ & $>0$ & \cmark & \cmark & finite-time blow-up \\
        \midrule
        $=0$ & any & any & \cmark & \cmark & constant throughout\\
        \bottomrule
    \end{tabular*}
        \hfill
\end{table}

\subsubsection*{The environment's influence on the agents} 

One may wonder about the environment's impact on the agent in the particular case of EIS being close to zero.
When the environment hardly distinguishes the long-run preference for consumption from the one for wealth ($\kappa \to 0$), the agent who wants to faster reach a constant level of consumption needs to adapt by changing her risk and competition parameters in order to increase the absolute value of $\beta$.
For example, assume further no-discounting and zero risk-free interest rate, e.g., $\rho= r =0$, a highly risk-tolerant market with $\bE[\delta] \to \infty$ and mutually independent types $\theta$ and $\delta$, hence from \eqref{eq:beta-lambda-mfg-single-stock} we have $\theta_{\textrm{crit}} = \bE[\theta]$. Then, the agent changes in the following ways: the agent avoids the competition but accepts more risk ($\theta \to 0,~\delta > 1$); or competes at the average market level and accepts even more risk ($\theta \to \bE[\theta],~\delta \gg 1$); or accepts being highly competitive, but risk-averse ($\theta > \bE[\theta],~ \delta < 1$). 

Lastly, we note that $\beta$, as a quadratic function of $\delta_{\text{eff}}$, is bounded from above by $\frac{\mu^2}{8\sigma^2}$  but is not bounded from below. That means that under a mild \emph{substitution} regime ($\kappa \to 0^-$), the rational agent cannot outperform the upper-bound effective rate of portfolio return $\beta$ (shown in Fig.\ref{fig:plots-of-f-g-for-different-kappas}a and Fig.\ref{fig:plots-of-c-for-various-parameters}b). Nevertheless, by adjusting the risk-competition parameters $(\delta,\theta)$, the agent can decrease her consumption to zero as fast as he wants to. Notably, the choice of $\rho$ helps the agent adjust the magnitude of $\beta$ and hence the consumption preferences.
    
At this point, the market highly influences the choices of the player. Suppose the agent wants to maintain a desired level of consumption. In that case, she can do so by re-setting her risk-competition preferences (and in accordance to remain optimal ``as a martingale'').

\section{Open questions}

From the construction we provided, here and in the much simpler case \cite{platonov2020forward}, it is still open how to tackle the full MFG generalisation to market models featuring random coefficients. Here the mean field aggregation approach created in \cite{LackerZariphopoulou2017} is not possible anymore and a new tool is needed. We refer to Remark \ref{rem:open-question-after-nash-n-player} regarding the open issue of allowing agent to choose different $\kappa$ parameters.

Generalising the dynamics of the forward performance utility map \eqref{eq:SDEforForwardUtilityField-consumption} or \eqref{eq:StructureQgeneral} outside Assumption \ref{ass:regularity-u-consumption-mfg} to a fully It\^o-dynamics and stochastic strategies is also open. A crucial tool for such would be a general It\^o-Wentzell-Lions chain rule as developed in \cite{Platonov2019ito} and would likely build along \cite{KarouiMrad2013} or \cite{MusielaZariphopoulou2010}; or alternatively, an approach similar to \cite{chong2018optimal} can be taken where the volatility of the FPP is exogenously postulated as the agent's preferences (Forward-Backward SDEs is the tool of choice there). It is also open exploring game competition in the setting of  \cite{chong2018optimal} either for the finite-player game or the MFG as in  \cite{geng2017passive,AnthropelosGengZariphopoulou2020,platonov2020forward}. As pointed in \cite[Section 5]{platonov2020forward}, many other questions can be posed in this context of mean field FPP, ranging from possible non-solvability \cite{FreiDosReis2011}, to risk-sharing pricing \cite{BielagkLionnetDosReis2017}, indifference-pricing \cite{leung2012forward,chong2019pricing}, 
ergodic problems \cite{GechunZariphopoulou2016ergodic}.
The cases and analysis of \cite{AnthropelosGengZariphopoulou2020} can be extrapolated to the MFG case as well.

A particular case of interest playing to the strengths of the forward performance process framework, which we do not explore here, is the \textit{dynamic model selection} problem. FPPs allow for a dynamic update of market parameters to accommodate a switch in the market environment or the agent's perception of risk, which the standard utility theory does not allow. \cite[Section 4.4]{platonov2020forward} carries over to the construction we have provided here -- such would allow the agent to update the market-risk relative consumption preference $\kappa$. An alternative view to this problem, exploiting FPPs and time-dependent risk parameters, can be found in \cite{strub2018evolution}. 

Lastly, a comparative study between the capabilities of forward utility maps versus Epstein-Zin preferences within the Kreps-Porteus recursive utilities is, to the best of our knowledge, an open question. As shown, the forward performance criteria features the crucial property of the recursive utility, and hence this work stands to span new economics studies on EIS \cite{aase2016recursive,cundy2018essays,thimme2017intertemporal}.

\section*{Acknowledgments}
The authors thank T.~Zariphopoulou (U.~of Texas, US), M.~Anthropelos (U.~of Piraeus, GR), M.~Mrad (U.~Paris 13, FR) and D.~Lacker (Columbia U., US) for the helpful discussions.




\begin{thebibliography}{54}
\providecommand{\natexlab}[1]{#1}
\providecommand{\url}[1]{\texttt{#1}}
\expandafter\ifx\csname urlstyle\endcsname\relax
  \providecommand{\doi}[1]{doi: #1}\else
  \providecommand{\doi}{doi: \begingroup \urlstyle{rm}\Url}\fi

\bibitem[Aase(2016)]{aase2016recursive}
K.~K. Aase.
\newblock Recursive utility using the stochastic maximum principle.
\newblock \emph{Quantitative Economics}, 7\penalty0 (3):\penalty0 859--887,
  2016.

\bibitem[Anthropelos et~al.(2020)Anthropelos, Geng, and
  Zariphopoulou]{AnthropelosGengZariphopoulou2020}
M.~Anthropelos, T.~Geng, and T.~Zariphopoulou.
\newblock Competition in fund management and forward relative performance
  criteria.
\newblock ResearchGate preprint, DOI:10.13140/RG.2.2.20712.90889, Nov 2020.

\bibitem[Avanesyan et~al.(2020)Avanesyan, Shkolnikov, and
  Sircar]{avanesyan2020construction}
L.~Avanesyan, M.~Shkolnikov, and R.~Sircar.
\newblock Construction of a class of forward performance processes in
  stochastic factor models, and an extension of {W}idder's theorem.
\newblock \emph{Finance Stoch.}, 24\penalty0 (4):\penalty0 981--1011, 2020.
\newblock ISSN 0949-2984.
\newblock \doi{10.1007/s00780-020-00436-1}.

\bibitem[Azar(2018)]{Azar2018-nexus-eis}
S.~A. Azar.
\newblock {The Nexus Between the Elasticity of Intertemporal Substitution and
  the Coefficient of Relative Risk Aversion}.
\newblock \emph{International Journal of Financial Research}, 9\penalty0
  (3):\penalty0 98--102, July 2018.
\newblock \doi{10.5430/ijfr.v9n3p98}.
\newblock URL \url{https://ideas.repec.org/a/jfr/ijfr11/v9y2018i3p98-102.html}.

\bibitem[Bank and Riedel(2001)]{bank2001existence}
P.~Bank and F.~Riedel.
\newblock Existence and structure of stochastic equilibria with intertemporal
  substitution.
\newblock \emph{Finance Stoch.}, 5\penalty0 (4):\penalty0 487--509, 2001.
\newblock ISSN 0949-2984.
\newblock \doi{10.1007/s007800100048}.

\bibitem[Benth et~al.(2001)Benth, Karlsen, and Reikvam]{benth2001optimal}
F.~E. Benth, K.~H. Karlsen, and K.~Reikvam.
\newblock Optimal portfolio management rules in a non-{G}aussian market with
  durability and intertemporal substitution.
\newblock \emph{Finance Stoch.}, 5\penalty0 (4):\penalty0 447--467, 2001.
\newblock ISSN 0949-2984.
\newblock \doi{10.1007/s007800000032}.

\bibitem[Berrier and Tehranchi(2011)]{berrier2011forward}
F.~Berrier and M.~Tehranchi.
\newblock Forward utility of investment and consumption.
\newblock \emph{preprint}, 2\penalty0 (9):\penalty0 12, 2011.

\bibitem[Bielagk et~al.(2017)Bielagk, Lionnet, and dos
  Reis]{BielagkLionnetDosReis2017}
J.~Bielagk, A.~Lionnet, and G.~dos Reis.
\newblock Equilibrium pricing under relative performance concerns.
\newblock \emph{SIAM J. Financial Math.}, 8\penalty0 (1):\penalty0 435--482,
  2017.
\newblock ISSN 1945-497X.
\newblock \doi{10.1137/16M1082536}.

\bibitem[Bismuth and Gu\'eant(2019)]{bismuthgueant2019}
A.~Bismuth and O.~Gu\'eant.
\newblock It's all relative: mean field game extensions of {M}erton's problem.
\newblock Presentation at DEA Krakow, work forthcoming, Sep 2019.
\newblock URL
  \url{https://www.oliviergueant.com/uploads/4/3/0/9/4309511/relative_merton_slides.pdf}.

\bibitem[Brennan(1975)]{brennan1975optimal}
M.~J. Brennan.
\newblock The optimal number of securities in a risky asset portfolio when
  there are fixed costs of transacting: Theory and some empirical results.
\newblock \emph{Journal of Financial and Quantitative analysis}, pages
  483--496, 1975.

\bibitem[Campani and Garcia(2019)]{campani2019approximate}
C.~H. Campani and R.~Garcia.
\newblock Approximate analytical solutions for consumption/investment problems
  under recursive utility and finite horizon.
\newblock \emph{The North American Journal of Economics and Finance},
  48:\penalty0 364--384, 2019.

\bibitem[Campbell and Mankiw(1991)]{CampbellMankiw1991}
J.~Y. Campbell and N.~G. Mankiw.
\newblock The response of consumption to income: a cross-country investigation.
\newblock \emph{European economic review}, 35\penalty0 (4):\penalty0 723--756,
  1991.
\newblock ISSN 0014-2921.
\newblock \doi{10.1016/0014-2921(91)90033-F}.

\bibitem[Carmona and Delarue(2018)]{carmona2018probabilistic}
R.~Carmona and F.~Delarue.
\newblock \emph{Probabilistic Theory of Mean Field Games with Applications
  I-II}.
\newblock Springer, 2018.

\bibitem[Chong(2019)]{chong2019pricing}
W.~F. Chong.
\newblock Pricing and hedging equity-linked life insurance contracts beyond the
  classical paradigm: The principle of equivalent forward preferences.
\newblock \emph{Insurance: Mathematics and Economics}, 88:\penalty0 93--107,
  2019.

\bibitem[Chong and Liang(2018)]{chong2018optimal}
W.~F. Chong and G.~Liang.
\newblock Optimal investment and consumption with forward preferences and
  uncertain parameters.
\newblock \emph{arXiv preprint arXiv:1807.01186}, 2018.

\bibitem[Chong et~al.(2019)Chong, Hu, Liang, and
  Zariphopoulou]{GechunZariphopoulou2016ergodic}
W.~F. Chong, Y.~Hu, G.~Liang, and T.~Zariphopoulou.
\newblock An ergodic {BSDE} approach to forward entropic risk measures:
  representation and large-maturity behavior.
\newblock \emph{Finance Stoch.}, 23\penalty0 (1):\penalty0 239--273, 2019.
\newblock ISSN 0949-2984.
\newblock \doi{10.1007/s00780-018-0377-3}.

\bibitem[Cundy(2018)]{cundy2018essays}
L.~D. Cundy.
\newblock Essays on the elasticity of intertemporal substitution, 2018.
\newblock {P}hD dissertation.

\bibitem[Deng et~al.(2020)Deng, Su, and Zhou]{deng2020relative}
C.~Deng, X.~Su, and C.~Zhou.
\newblock Relative wealth concerns with partial information and heterogeneous
  priors.
\newblock \emph{arXiv preprint arXiv:2007.11781}, 2020.

\bibitem[dos Reis and Platonov(2019)]{Platonov2019ito}
G.~dos Reis and V.~Platonov.
\newblock It{\^o}-{W}entzell-{L}ions formula for measure dependent random
  fields under full and conditional measure flows.
\newblock \emph{arXiv preprint arXiv:1910.01892}, 2019.

\bibitem[dos Reis and Platonov(2021{\natexlab{a}})]{platonov2020forward}
G.~dos Reis and V.~Platonov.
\newblock Forward utilities and mean-field games under relative performance
  concerns.
\newblock In C.~Bernardin, F.~Golse, P.~Gon{\c c}alves, V.~Ricci, and A.~J.
  Soares, editors, \emph{From Particle Systems to Partial Differential
  Equations. ICPS 2019, ICPS 2018, ICPS 2017}, pages 227--251. Springer
  Proceedings in Mathematics \& Statistics, 2021{\natexlab{a}}.

\bibitem[dos Reis and
  Platonov(2021{\natexlab{b}})]{platonov2021forward-published}
G.~dos Reis and V.~Platonov.
\newblock Forward utilities and mean-field games under relative performance
  concerns.
\newblock In C.~Bernardin, F.~Golse, P.~Gon{\c c}alves, V.~Ricci, and A.~J.
  Soares, editors, \emph{From Particle Systems to Partial Differential
  Equations. ICPS 2019, ICPS 2018, ICPS 2017}, pages 227--251. Springer
  Proceedings in Mathematics \& Statistics, 2021{\natexlab{b}}.

\bibitem[Duffie and Epstein(1992)]{duffie1992stochastic}
D.~Duffie and L.~G. Epstein.
\newblock Stochastic differential utility.
\newblock \emph{Econometrica: Journal of the Econometric Society}, pages
  353--394, 1992.

\bibitem[El~Karoui and Mrad(2013)]{KarouiMrad2013}
N.~El~Karoui and M.~Mrad.
\newblock An exact connection between two solvable {SDE}s and a nonlinear
  utility stochastic {PDE}.
\newblock \emph{SIAM J. Financial Math.}, 4\penalty0 (1):\penalty0 697--736,
  2013.
\newblock \doi{10.1137/10081143X}.

\bibitem[El~Karoui et~al.(2018)El~Karoui, Hillairet, and
  Mrad]{elkaroui2018consistent}
N.~El~Karoui, C.~Hillairet, and M.~Mrad.
\newblock Consistent utility of investment and consumption: a forward/backward
  {SPDE} viewpoint.
\newblock \emph{Stochastics}, 90\penalty0 (6):\penalty0 927--954, 2018.
\newblock ISSN 1744-2508.
\newblock \doi{10.1080/17442508.2018.1457676}.

\bibitem[El~Karoui et~al.(2019)El~Karoui, Hillairet, and
  Mrad]{KarouiHillairetMrad2019}
N.~El~Karoui, C.~Hillairet, and M.~Mrad.
\newblock Construction of an aggregate consistent utility, without {P}areto
  optimality. {A}pplication to long-term yield curve modeling.
\newblock In \emph{Frontiers in stochastic analysis---{BSDE}s, {SPDE}s and
  their applications}, volume 289 of \emph{Springer Proc. Math. Stat.}, pages
  169--199. Springer, Cham, 2019.
\newblock \doi{10.1007/978-3-030-22285-7_6}.

\bibitem[Epstein and Zin(1989)]{epstein1989substitution}
L.~G. Epstein and S.~E. Zin.
\newblock Substitution, risk aversion, and the temporal behavior of consumption
  and asset returns: A theoretical framework.
\newblock \emph{Econometrica}, 57\penalty0 (4):\penalty0 937--969, 1989.
\newblock ISSN 00129682, 14680262.
\newblock URL \url{http://www.jstor.org/stable/1913778}.

\bibitem[Espinosa and Touzi(2015)]{EspinosaTouzi2015}
G.-E. Espinosa and N.~Touzi.
\newblock {O}ptimal {I}nvestment under {R}elative {P}erformance {C}oncerns.
\newblock \emph{Mathematical Finance}, 25\penalty0 (2):\penalty0 221--257,
  2015.
\newblock ISSN 1467-9965.
\newblock \doi{10.1111/mafi.12034}.

\bibitem[Frei and dos Reis(2011)]{FreiDosReis2011}
C.~Frei and G.~dos Reis.
\newblock {A} financial market with interacting investors: does an equilibrium
  exist?
\newblock \emph{Mathematics and financial economics}, 4\penalty0 (3):\penalty0
  161--182, 2011.
\newblock ISSN 1862-9679.
\newblock \doi{10.1007/s11579-011-0039-0}.

\bibitem[Fu et~al.(2020)Fu, Su, and Zhou]{fu2020mean}
G.~Fu, X.~Su, and C.~Zhou.
\newblock Mean field exponential utility game: A probabilistic approach.
\newblock \emph{arXiv preprint arXiv:2006.07684}, 2020.

\bibitem[Gali(1994)]{Gali1994}
J.~Gali.
\newblock Keeping up with the {J}oneses: Consumption externalities, portfolio
  choice, and asset prices.
\newblock \emph{Journal of Money, Credit and Banking}, pages 1--8, 1994.

\bibitem[Geng and Zariphopoulou(2017)]{geng2017passive}
T.~Geng and T.~Zariphopoulou.
\newblock Passive and competitive investment strategies under relative forward
  performance criteria.
\newblock \emph{Available at SSRN 2870040}, 2017.

\bibitem[Hall(1988)]{hall1988-eis}
R.~E. Hall.
\newblock Intertemporal substitution in consumption.
\newblock \emph{Journal of Political Economy}, 96\penalty0 (2):\penalty0
  339--357, 1988.
\newblock ISSN 00223808, 1537534X.
\newblock URL \url{http://www.jstor.org/stable/1833112}.

\bibitem[Henderson and Hobson(2007)]{henderson2007horizon}
V.~Henderson and D.~Hobson.
\newblock Horizon-unbiased utility functions.
\newblock \emph{Stochastic processes and their applications}, 117\penalty0
  (11):\penalty0 1621--1641, 2007.

\bibitem[Hindy et~al.(1992)Hindy, Huang, and Kreps]{hindy1992intertemporal}
A.~Hindy, C.-F. Huang, and D.~Kreps.
\newblock On intertemporal preferences in continuous time: The case of
  certainty.
\newblock \emph{Journal of Mathematical Economics}, 21\penalty0 (5):\penalty0
  401--440, 1992.

\bibitem[Kacperczyk et~al.(2005)Kacperczyk, Sialm, and
  Zheng]{kacperczyk2005industry}
M.~Kacperczyk, C.~Sialm, and L.~Zheng.
\newblock On the industry concentration of actively managed equity mutual
  funds.
\newblock \emph{The Journal of Finance}, 60\penalty0 (4):\penalty0 1983--2011,
  2005.

\bibitem[K{\"a}llblad(2020)]{kallblad2020black}
S.~K{\"a}llblad.
\newblock Black's inverse investment problem and forward criteria with
  consumption.
\newblock \emph{SIAM Journal on Financial Mathematics}, 11\penalty0
  (2):\penalty0 494--525, 2020.

\bibitem[Karatzas et~al.(1997)Karatzas, Lehoczky, Sethi, and
  Shreve]{karatzas1997explicit}
I.~Karatzas, J.~P. Lehoczky, S.~P. Sethi, and S.~E. Shreve.
\newblock Explicit solution of a general consumption/investment problem.
\newblock In \emph{Optimal Consumption and Investment with Bankruptcy}, pages
  21--56. Springer, 1997.

\bibitem[Kreps and Porteus(1978)]{krepsporteus1978}
D.~M. Kreps and E.~L. Porteus.
\newblock Temporal resolution of uncertainty and dynamic choice theory.
\newblock \emph{Econometrica}, 46\penalty0 (1):\penalty0 185--200, 1978.
\newblock ISSN 00129682, 14680262.
\newblock URL \url{http://www.jstor.org/stable/1913656}.

\bibitem[Lacker and Soret(2020)]{lackersoret2020many}
D.~Lacker and A.~Soret.
\newblock Many-player games of optimal consumption and investment under
  relative performance criteria.
\newblock \emph{Math. Financ. Econ.}, 14\penalty0 (2):\penalty0 263--281, 2020.
\newblock ISSN 1862-9679.
\newblock \doi{10.1007/s11579-019-00255-9}.

\bibitem[Lacker and Zariphopoulou(2019)]{LackerZariphopoulou2017}
D.~Lacker and T.~Zariphopoulou.
\newblock Mean field and {$n$}-agent games for optimal investment under
  relative performance criteria.
\newblock \emph{Math. Finance}, 29\penalty0 (4):\penalty0 1003--1038, 2019.
\newblock ISSN 0960-1627.
\newblock \doi{10.1111/mafi.12206}.

\bibitem[Leung et~al.(2012)Leung, Sircar, and Zariphopoulou]{leung2012forward}
T.~Leung, R.~Sircar, and T.~Zariphopoulou.
\newblock Forward indifference valuation of american options.
\newblock \emph{Stochastics An International Journal of Probability and
  Stochastic Processes}, 84\penalty0 (5-6):\penalty0 741--770, 2012.

\bibitem[Li et~al.(2020)Li, Li, and Liang]{li2020game}
J.~Li, W.~Li, and G.~Liang.
\newblock A game theoretical approach to homothetic robust forward investment
  performance processes in stochastic factor models.
\newblock \emph{Available at SSRN 3601075}, 2020.

\bibitem[Matoussi and Mrad(2020)]{matoussi2020dynamic}
A.~Matoussi and M.~Mrad.
\newblock Dynamic utility and related nonlinear {SPDE} driven by l{\'e}vy
  noise.
\newblock \emph{HaL preprint hal-03025475}, Nov 2020.

\bibitem[Merton(1969)]{merton1969lifetime}
R.~C. Merton.
\newblock Lifetime portfolio selection under uncertainty: {T}he continuous-time
  case.
\newblock \emph{The review of Economics and Statistics}, pages 247--257, 1969.

\bibitem[Musiela and Zariphopoulou(2006)]{musiela2006investments}
M.~Musiela and T.~Zariphopoulou.
\newblock Investments and forward utilities.
\newblock \emph{preprint}, 2006.

\bibitem[Musiela and Zariphopoulou(2009)]{MusielaZariphopoulou2009}
M.~Musiela and T.~Zariphopoulou.
\newblock Portfolio choice under dynamic investment performance criteria.
\newblock \emph{Quant. Finance}, 9\penalty0 (2):\penalty0 161--170, 2009.
\newblock ISSN 1469-7688.
\newblock URL \url{https://doi.org/10.1080/14697680802624997}.

\bibitem[Musiela and
  Zariphopoulou(2010{\natexlab{a}})]{MusielaZariphopoulou2010}
M.~Musiela and T.~Zariphopoulou.
\newblock Stochastic partial differential equations and portfolio choice.
\newblock In \emph{Contemporary quantitative finance}, pages 195--216.
  Springer, Berlin, 2010{\natexlab{a}}.
\newblock \doi{10.1007/978-3-642-03479-4-11}.

\bibitem[Musiela and
  Zariphopoulou(2010{\natexlab{b}})]{MusielaZariphopoulou2010-space-time-monotone}
M.~Musiela and T.~Zariphopoulou.
\newblock Portfolio choice under space-time monotone performance criteria.
\newblock \emph{SIAM J. Financial Math.}, 1\penalty0 (1):\penalty0 326--365,
  2010{\natexlab{b}}.
\newblock ISSN 1945-497X.
\newblock \doi{10.1137/080745250}.

\bibitem[Samuelson(1969)]{samuelson1969lifetime}
P.~A. Samuelson.
\newblock Lifetime portfolio selection by dynamic stochastic programming.
\newblock \emph{The review of economics and statistics}, pages 239--246, 1969.

\bibitem[Strub and Zhou(2018)]{strub2018evolution}
M.~S. Strub and X.~Y. Zhou.
\newblock Evolution of the {A}rrow-{P}ratt measure of risk-tolerance for
  predictable forward utility processes.
\newblock \emph{Available at SSRN 3276638}, 2018.

\bibitem[Thimme(2017)]{thimme2017intertemporal}
J.~Thimme.
\newblock Intertemporal substitution in consumption: A literature review.
\newblock \emph{Journal of Economic Surveys}, 31\penalty0 (1):\penalty0
  226--257, 2017.

\bibitem[Van~Nieuwerburgh and Veldkamp(2009)]{van2009information}
S.~Van~Nieuwerburgh and L.~Veldkamp.
\newblock Information immobility and the home bias puzzle.
\newblock \emph{The Journal of Finance}, 64\penalty0 (3):\penalty0 1187--1215,
  2009.

\bibitem[Weil(1989)]{weil1989equity}
P.~Weil.
\newblock The equity premium puzzle and the risk-free rate puzzle.
\newblock \emph{Journal of Monetary Economics}, 24\penalty0 (3):\penalty0
  401--421, 1989.
\newblock ISSN 0304-3932.
\newblock \doi{https://doi.org/10.1016/0304-3932(89)90028-7}.
\newblock URL
  \url{https://www.sciencedirect.com/science/article/pii/0304393289900287}.

\bibitem[Zitkovi{\'c}(2009)]{Zitkovic2009}
G.~Zitkovi{\'c}.
\newblock A dual characterization of self-generation and exponential forward
  performances.
\newblock \emph{Ann. Appl. Probab.}, 19\penalty0 (6):\penalty0 2176--2210,
  2009.
\newblock ISSN 1050-5164.
\newblock \doi{10.1214/09-AAP607}.

\end{thebibliography}



\newpage
\appendix

\section{Proof of Theorem \ref{theo:nPlayerForwardNashGame-consumption}}
We prove Theorem \ref{theo:nPlayerForwardNashGame-consumption} in full detail.
\label{supsec:proof-N-game}
\begin{proof}

\emph{Step 1. Finding the investment strategy.}

First, we deal with the investment policy.
Injecting the condition $U_x/U_{xx}=-\delta_ix$ in \eqref{eq:optimstrategy-consumption} leads to the system ($i\in\{1,\dots,n\}$)
\begin{align*}
\pi^{i,*}_t 
&=
\frac1{{\nu_i^2+\sigma_i^2 \big(1+\frac{\theta_i(1-\delta_i)}{n-1}\big) }}
\Big(\theta_i \sigma_i (1-\delta_i)\frac n{n-1} \overline{\sigma\pi_t}  +  \mu_i \delta_i \Big), \quad \text{where} \quad \overline{\sigma\pi_t} =\frac1n\sum_{k=1}^n  \sigma_{k}\pi^{k,*}_t.
\end{align*}
The last identity expresses $\pi^{i,*}$ as a function of the unknown $\overline{\sigma\pi_t}$. To determine it we multiply both sides of the $\pi^{i,*}$ expression by $\sigma_i$ and average over $i\in\{1,\dots,n\}$. This yields the solvability condition
\begin{align}
\label{eq:pisigma-nplayers}
\overline{\sigma\pi_t}
=
\overline{\sigma\pi_t}
\psi^\sigma_n
+
\varphi^\sigma_n
\quad \Leftrightarrow\quad
\overline{\sigma\pi}
=\frac{\varphi^\sigma_n}{1-\psi^\sigma_n}\quad \textrm{as long as }\quad \psi^\sigma_n\neq 1.
\end{align}
Plugging the expression $\overline{\sigma\pi}$ into that for $\pi^{i,*}$ yields the result. That the optimal strategies are constant is now obvious and one finds that the corresponding $\eta$ (see \eqref{eq:rho-n-player}) is time-independent with $\eta_t^i = \eta_i$. 

If $\psi^\sigma_n=1$, then there exists no Nash equilibrium.

\emph{Step 2. Finding the explicit form of $\eta_i$.}

Just like for $\overline {\sigma\pi_t}$, we obtain an expression for $\overline {\mu\pi_t}=\frac1n\sum_{i=1}^k\mu_k\pi_t^{k,*}$ by multiplying $\pi^{i,*}$ by $\mu_i$ on both sides and averaging over $i$. We have 
\begin{align}
\label{eq:pimu-nplayers}
\overline {\mu\pi_t } = \frac n {n-1}\frac{\varphi^\sigma_n}{1-\psi^\sigma_n} \psi^\mu_n + \phi^\mu_n, 
\qquad
\overline{\mu\pi_t}^{(-i)} 
	= \frac n{n-1} \overline {\mu\pi_t } - \frac1{n-1}\mu_i\pi^i_t, 
\end{align}
where $\varphi^\mu_{n}$, $\psi^\mu_{n}$ are defined as 
\begin{equation*}
\varphi^\mu_{n} 
=\frac{1}{n}\sum_{k=1}^{n} \delta_{k} \frac{\mu^2_{k}}{\nu_{k}^{2}+\sigma_{k}^{2}(1+\frac{\theta_{k}(1-\delta_k)}{n-1})} 
, \qquad
\psi^\mu_n 
= \frac{1}{n-1}\sum_{k=1}^{n}\theta_{k}(1-\delta_k)\frac{\mu_k \sigma_{k}}{\nu_{k}^{2}+\sigma_{k}^{2}(1+\frac{\theta_{k}(1-\delta_k)}{n-1})}. 
\end{equation*} 
Similarly, defining 
\begin{align}
\label{eq:pinu-nplayers}
\overline{(\nu\pi_t)^2}=\frac1{n}\sum_{k=1}^n (\nu_k\pi^k_t)^2
\quad\Rightarrow \quad
\overline{(\nu\pi_t )^2}
=
\frac{1}{n} \sum_{k=1}^n 
\Big(\dfrac{\nu_k\theta_k \sigma_k \cdot \frac n{n-1} \cdot \frac{\varphi^\sigma_n}{1-\psi^\sigma_n}  +  \nu_k\mu_k \delta_k }{{\nu_k^2+\sigma_k^2 \big(1+\frac{\theta_k(1-\delta_k)}{n-1}\big) }}\Big)^2,
\end{align} 
together with $\overline{(\nu\pi_t)^2}^{(-i)}=\frac {n}{n-1} \overline {(\nu\pi_t)^2} - \frac1{n-1} (\nu_i\pi^i_t)^2$. 
Finally $\overline{\Sigma \pi^2_t} = \overline{(\nu\pi_t)^2} + \overline{(\sigma\pi_t)^2}$, and like for \eqref{eq:pinu-nplayers} we have 
\begin{align}
\label{eq:piSigma-nplayers}
    \overline{(\sigma\pi_t)^2} & = \frac{1}{n} \sum_{k=1}^n \bigg(\dfrac{\sigma_k\theta_k \sigma_k  \frac n{n-1} \frac{\varphi^\sigma_n}{1-\psi^\sigma_n}  +  \sigma_k\mu_k \delta_k }{{\nu_k^2+\sigma_k^2 \big(1+\frac{(1-\delta_k)\theta_k}{n-1}\big) }}\bigg)^2 
    \\ \nonumber 
    & \qquad  \text{ with }\ 
    \widehat{\Sigma \pi^2_t} = \frac{n}{n-1}\overline{\Sigma \pi^2_t} - \frac1{n-1}(\nu_i^2+\sigma_i^2)(\pi^i_t)^2.
\end{align}
Replacing these expressions in that of the constant $\eta^i_t = \eta_i$ in Equation \eqref{eq:lambda-consumption} of Section \ref{sec:FPPinitapowerrisk} one obtains the sought expression (in the Theorem's statement).

\emph{Step 3. Finding the consumption strategy and the relative performance utilities.}

The system of Equations \eqref{eq:system-equations-n-player-consumption} under Hypothesis \ref{ass:form-of-g-n-player}, i.e., $g_i(t) = f_i(t)^{1-\kappa}$, becomes
\begin{align}
\label{eq:system-equations-n-player-consumption-rewritten}
\begin{cases}
    c^i_t = \epsilon_i^{-\delta_i}\big(\tilde c_t^{(-i)}\big)^{\theta_i(1-\delta_i)} f_i(t)^{-\kappa\delta_i}, \\
    f^\prime_i(t) + \Big(\eta_i + \theta_i \big(1-\frac1{\delta_i}\big)\bar{c}_t^{(-i)}\Big) f_i(t) + \frac{\epsilon_i^{-\delta_i}}{\delta_i} \big(\tilde c_t^{(-i)}\big)^{\theta_i(1-\delta_i)}  f_i(t)^{1-\kappa\delta_i} = 0,
\end{cases}
\end{align}
where $\eta$ is given by \eqref{eq:rho-n-player}.
We mainly follow the machinery of the proof \cite[{Theorem 2.2}]{lackersoret2020many} to obtain the closed form solution.
We repeat these arguments and emphasise the notable discrepancies.
Substituting the LHS of the first equation of \eqref{eq:system-equations-n-player-consumption-rewritten} into the second, we obtain the linear ODE which solves as 
\begin{align*}
    f_i(t) = \exp \left(-\int_0^t \left(\eta_i + \theta_i\left(1-\frac1{\delta_i}\right)\bar{c}_s^{(-i)} + \frac1{\delta_i}c^i_s \right) ds\right).
\end{align*}
Now plugging it back into the first equation of \eqref{eq:system-equations-n-player-consumption-rewritten} we obtain
\begin{align*}
   c^i_t\exp\Big(-\kappa\int_0^t c^i_s ds\Big) &=\epsilon_i^{-\delta_i} \big(\tilde c_t^{(-i)}\big)^{\theta_i(1-\delta_i)} e^{-\kappa\delta_i\eta_it}\exp \Big(-\kappa (1-\delta_i)\theta_i\int_0^t \big( \bar{c}_s^{(-i)}\big) ds\Big). 
\end{align*}
After rewriting it with respect to $\tilde c_t = \Big(\prod_{k = 1}^n c^k_t\Big)^{\frac1n}$ and $\bar c_t = \frac1n\sum_{k=1}^n c^k_t$, we get
\begin{align}
    \label{eq:solving-nash-eq-consumption}
    c^i_t\exp\Big(-\kappa\int_0^t c^i_s ds\Big) &=\epsilon_i^{-\frac{\delta_i}{1 + \frac{\theta_i}{n-1}(1-\delta_i)}} \tilde c_t^{\frac{n}{n-1}\frac{\theta_i(1-\delta_i)}{1 + \frac{\theta_i}{n-1}(1-\delta_i)}} e^{\frac{-\kappa\eta_i\delta_i}{1 + \frac{\theta_i}{n-1}(1-\delta_i)}t}\\
    \nonumber
    &\qquad \qquad \times\exp \Big(-\kappa\frac{n}{n-1}\frac{\theta_i(1-\delta_i)}{1 + \frac{\theta_i}{n-1}(1-\delta_i)}\int_0^t \bar{c}_s ds\Big).
\end{align}
We take the geometric average of Equation \eqref{eq:solving-nash-eq-consumption} over $i = 1,\dots ,n$ to obtain
\begin{align*}
    \tilde{c}_t \exp\Big(-\kappa \int_0^t \bar{c}_sds\Big)  = \Big(\widetilde{\epsilon^{\delta}}\Big)^{-1} e^{\kappa\overline{\eta \delta}t} \bigg(\tilde{c}_t \exp\Big(-\kappa \int_0^t \bar{c}_sds\Big)\bigg)^{\overline{\theta(1-\delta)}},
\end{align*}
where
\begin{align*}
    & \widetilde{\epsilon^{\delta}} = \Bigg(\prod_{k = 1}^n \epsilon_k^{\frac{\delta_k}{1 + \frac{\theta_k}{n-1}(1-\delta_k)}}\Bigg)^{\frac1n}, 
    \quad 
    \overline{\eta\delta} = \frac1n \sum_{k = 1}^n \frac{\eta_i \delta_i}{1 + \frac{\theta_k}{n-1}(1-\delta_k)}, 
    \\
    & \qquad \textrm{ and }\quad 
    \overline{\theta(1-\delta)} = \frac1{n-1}\sum_{k = 1}^n \frac{\theta_k(1-\delta_k)}{1 + \frac{\theta_k}{n-1}(1-\delta_k)}.
\end{align*}

Hence we obtain that
\begin{align}
    \label{eq:consistency-geom-aver-c-n-player}
    \tilde{c}_t\exp\Big(-\kappa \int_0^t \bar{c}_sds\Big) = \Big(\widetilde{\epsilon^{\delta}}\Big)^{\frac{1}{\overline{\theta(1-\delta)}-1}} e^{\frac{-\kappa\overline{\eta\delta}}{\overline{\theta(1-\delta)}-1}t}.
\end{align}
Using the previous equation we rewrite \eqref{eq:solving-nash-eq-consumption} as 
\begin{align}
    \label{eq:solving-nash-eq-consumption-final-form}
   c^i_t\exp\Big(-\kappa\int_0^t c^i_s ds\Big) &= \lambda_i e^{-\kappa \beta_it},
\end{align}
where
\begin{align*}
    \lambda_i &= \epsilon_i^{-\frac{\delta_i}{1 + \frac{\theta_i}{n-1}(1-\delta_i)}} \Big(\widetilde{\epsilon^{\delta}}\Big)^{\frac{n}{n-1}\frac{\theta_i(1-\delta_i)}{{(\overline{\theta(1-\delta)}-1) (1 + \frac{\theta_i}{n-1}(1-\delta_i))}}},\\
    \beta_i &= \frac1{1 + \frac{\theta_i}{n-1}(1-\delta_i)}\bigg(\frac{n}{n-1}\frac{\theta_i(1-\delta_i)}{\overline{\theta(1-\delta)}-1} \overline{\eta\delta}-{\eta_i\delta_i}\bigg).
\end{align*}
Now consider two distinct cases $\kappa \neq 0$ and $\kappa = 0$.

\emph{Case 1: Let $\kappa \neq 0$.} Integrating \eqref{eq:solving-nash-eq-consumption-final-form} from $0$ to $t$ and taking logarithms we get
\begin{align}
    \label{eq:nash-equilibrium-integral-form-consumption-n-players}
    \kappa\int_0^t c_s^i ds = 
    \begin{cases}
        -\log\Big(1+ \frac{\lambda_i}{\beta_i}\big(e^{-\kappa\beta_it}-1\big)\Big) , & \beta_i \neq 0, \\
        -\log(1 - \lambda_i\kappa t), & \beta_i = 0.
    \end{cases}
\end{align}
Finally differentiating \eqref{eq:nash-equilibrium-integral-form-consumption-n-players} with respect to $t$, we obtain
\begin{align}
    \label{eq:nash-equilibrium-c-n-player}
    c_t^i =  
    \begin{cases}
        \bigg(\frac{1}{\beta_i} + \Big(\frac1{\lambda_i} - \frac1{\beta_i}\Big)e^{\kappa\beta_it} \bigg)^{-1}, & \beta_i \neq 0,\\
        \big(-\kappa  t + \frac1{\lambda_i}\big)^{-1}, & \beta_i = 0.
    \end{cases}
\end{align}

By direct inspection it is obvious that for certain combinations of parameters, namely $\kappa > 0,~ \beta_i < \lambda_i$ and $\kappa > 0,~\beta_i = 0$, the optimal consumption $c$ in not continuous and can even be negative. These cases are not admissible. Admissibility against parameter combinations is summarized in Table \ref{table:consumptionSignAsMapofParameters}.

Now, from the first line of \eqref{eq:system-equations-n-player-consumption-rewritten}, we get that
\begin{align}
    \label{eq:f-final-form-n-player}
    f_i(t) = \Big(\big(c_t^i\big)^{\frac1{\delta_i}}\big(\tilde{c}_t^{(-i)}\big)^{\theta_i(\frac1{\delta_i}-1)}\epsilon_i\Big)^{-\frac1{\kappa}} 
    \quad \text{and} \quad 
    g_i(t) = f_i(t)^{1-\kappa}.
\end{align}

\emph{Case 2: Let $\kappa = 0$.} Then \eqref{eq:solving-nash-eq-consumption-final-form} immediately yields $c_t^i = \lambda_i$ which is the same result if one  sets $\kappa=0$ in \eqref{eq:nash-equilibrium-c-n-player}.
\end{proof}

\section{Proof of Theorem \ref{theo:MFG-solution-consumption}}
\label{supsec:proofMFgame}

We prove Theorem \ref{theo:MFG-solution-consumption} in full detail.
\begin{proof}
We proceed stepwise in order to construct the constant mean field-equilibrium. To that end we must solve ii)-iii) in Definition \ref{def:MFG-Forward-problem-consumption} for given processes $\overline X,\overline \Gamma$  associated to some $(\pi,c)\in \cA_{\textrm{MF}}$. Condition iv) of the MF-equilibrium allows us to focus only on processes of the form $\overline X_t = \exp\bE[\log X_t| \cF^B_t]$ and $\overline \Gamma_t=\exp\bE[\log c_t| \cF^B_t]$ where $X$ solves \eqref{def:X-MFG-consumption} for some strategy $(\pi,c) \in \cA_{MF}$.

\emph{Step 0. The average wealth process.} To solve the above problem given $(\overline X_t)_{t\geqs 0}$ it suffices to restrict ourselves to processes $(\overline X_t)_{t\geqs 0}$ satisfying $\overline X_t = \exp\bE[\log X^\pi_t| \cF^B_t],~\bP$-a.s. 
We then have via It\^o's formula and the arguments from \cite{LackerZariphopoulou2017} $\bP$-a.s.
\begin{align*}
\nonumber
\overline X_t 
& 
= \exp \bE[\log X_t| \cF^B_t]
\\
&
=  \exp\bE\Big[ \log\xi + \int_0^t \big(\mu \pi_s - \frac12 \pi_s^2(\nu^2+\sigma^2)\big) ds 
\\
&\hspace{4cm}+ \int_0^t\nu \pi_s dW_s + \int_0^t\sigma \pi_s dB_s - \int_0^t c_s ds\Big| \cF^B_t\Big]
\\
&
= \exp \Big[\overline{\log\xi} + \int_0^t \big(\overline{\mu \pi_s}-\frac12\overline{\Sigma \pi_s^2}\big)  ds + \int_0^t \overline{\sigma \pi_s} dB_s - \int_0^t\overline{c}_sds\Big]\\
&
= \overline\xi+  \Big( \int_0^t\eta\overline X_s ds +\int_0^t\overline{\sigma \pi_s}\overline X_s dB_s - \int_0^t \bar{c}_s\overline X_s ds \Big),
\end{align*}
where, for consistency of notation with respect to the previous section, we denote
\begin{align*}
\eta &:= \overline {\mu\pi_s} - \frac12(\overline{\Sigma\pi_s^2} - \overline{\sigma\pi_s}^2), \ \
\overline \xi:=\exp\bE[ \log\xi],
\ \
\overline{\mu \pi_s}:=\bE[\mu \pi_s] 
,\ \
\overline{\sigma \pi_s}:=\bE[ \sigma \pi_s], \quad \bar{c} = \bE[c].
\end{align*}
Hence, for $(\pi,c) \in \cA^{\textrm{MF}}$ we can define the process $Z^{\pi,c} = X^{\pi,c}  \overline{X^{\pi,c}}^{-\theta}$. By It\^o's formula we derive its SDE dynamics as   
\begin{align*}
\frac{d Z^{\pi,c}_t}{Z^{\pi,c}_t} 
&= \big(\mu \pi_t-\theta \overline{\mu \pi_t} + \frac{\theta}2\overline{\Sigma\pi_t^2} + \frac{\theta^2}2 \overline{\sigma\pi_t}^2 -\theta\sigma\pi_t \overline{\sigma\pi_t} \big)dt + \nu \pi_t d W_t +\big(\sigma \pi_t-\theta \overline{\sigma \pi_t} \big)dB_t,\\
&\qquad-(c_t-\theta\bar{c}_t)dt,
\quad Z^{\pi,c}_0=\xi (\overline \xi)^{-\theta}.
\end{align*}
We proceed to solve the MFG Forward performance problem of Definition \ref{def:MFG-Forward-problem-consumption} with its help.

Applying It\^o's formula to $U(Z^{\pi,c}_t,t)$ yields
\begin{align}
\label{eq:generalSDEafterItoWentzell-MFG-consumption}
\nonumber
 d Q(Z^{\pi,c}_t ,t) 
&= U_t(Z^{\pi,c}_t ,t)dt + U_x(Z^{\pi,c}_t ,t) d Z^{\pi,c}_t + \frac12 U_{xx}(Z^{\pi,c}_t ,t) d \langle Z^{\pi,c}_t \rangle + V(Z^{\pi,c}_t ,t)dt
\\ \nonumber
& 
=\Big[U_t(Z^{\pi,c}_t ,t)   
+ U_x(Z^{\pi,c}_t ,t) \big(\mu \pi_t - \theta \overline{\mu \pi}+ \frac{\theta}2\overline{\Sigma\pi^2} + \frac{\theta^2}2 \overline{\sigma\pi}^2 -\theta\sigma\pi_t \overline{\sigma\pi}\big) Z^{\pi,c}_t\\
&
\qquad +\frac12 U_{xx}(Z^{\pi,c}_t ,t) \Big( (\nu \pi_t )^2 + \big(\sigma \pi_t - \theta \overline{ \sigma \pi } \big)^2\Big)(Z^{\pi,c}_t)^2\Big]dt
\\ \nonumber
&
\qquad
+ U_x(Z^{\pi,c}_t ,t) \nu\pi_t Z^{\pi,c}_t dW_{t} 
+ U_x(Z^{\pi,c}_t ,t)\big(\sigma \pi_t- \theta \overline{ \sigma \pi} \big)Z^{\pi,c}_t dB_t\\
\nonumber
&
\qquad
+ U_x(Z^{\pi,c}_t ,t)(c_t - \theta \bar{c}_t)dt + V(\hat{c}_tZ^{\pi,c}_t,t)dt,
\end{align}
with $U(Z^{\pi,c}_0 ,0)=U(\xi\big(\bar \xi)^{-\theta},0\big)= \frac1{1-\frac1{\delta}}\big(\xi(\bar \xi)^{-\theta}\big)^{1-\frac1{\delta}} $ and using that $B,W$ are i.i.d.

\emph{Step 1. The candidate best responses strategies $\pi^*,c^*$.} As before, the process $U(Z^{\pi,c}_t,t)$ becomes a Martingale at the optimum $\pi$. Direct computations using first order conditions ($\partial_{\pi} \textrm{``drift''}=\partial_{c} \textrm{``drift''}=0$) yield
\begin{align}
    \nonumber
    &\begin{cases}
        0 + U_x \cdot \big( \mu -0 -\theta\sigma\overline{\sigma \pi_t}\big)Z^{\pi,c}_t + \frac12 U_{xx} \Big(2 \pi \nu^2 + 2\big(\sigma \pi_t- \theta \overline{\sigma \pi_t} \big) \sigma \Big)\big(Z^{\pi,c}_t\big)^2 = 0,\\
        -U_x(Z^{\pi,c}_t,t) Z^{\pi,c}_t + V_x(\hat{c}_t Z^{\pi,c}_t,t)\frac{Z^{\pi,c}_t}{\big(\overline{c}^{(-i)}_t\big)^{\theta_i}} = 0,
    \end{cases}
    \\
    \label{eq:best-responses-mfg}
     & \qquad \qquad \qquad\Rightarrow\quad 
    \begin{cases}
        \pi_t &= \frac1{\nu^2+\sigma^2}\Big(\theta \sigma \overline{\sigma\pi_t} +  (\mu -\theta\sigma\overline{\sigma \pi_t})\dfrac{U_x}{U_{xx}Z^{\pi,c}_t}\Big),
        \\
        c_t &= \dfrac{(V_x)^{-1}\Big(U_x(Z^{\pi,c}_t,t) \tilde{c}_t^{\theta},t\Big)\tilde{c}_t^{\theta}}{Z^{\pi,c}_t}.
    \end{cases}
\end{align} 
Now we inject into the first equation the CRRA constraint $U_x/U_{xx}=-\delta x$
and use Hypothesis \ref{ass:form-of-U-V-mean-field-g-with-f} to obtain
\begin{align*}
\pi_t =\frac1{\nu^2+\sigma^2}\Big(\theta \sigma \overline{\sigma\pi_t} +  (\mu -\theta\sigma\overline{\sigma \pi_t}) \delta\Big)
\quad\textrm{and}\quad
        c_t = \tilde c_t^{\theta(1-\delta)} {f(t)}^{-\delta\kappa}.
\end{align*}
By inspection, it is clear that $\pi^*$ is a $\cF_0^{\textrm{MF}}$-measurable RV which is independent of time and is well defined as long as $\overline{\sigma\pi}$ is finite. The derivation of the closed form of the optimal consumption needs further work and is carried out further below.


\emph{Step 2. The optimality of the strategy.} 
In contrast to the $n$-player optimisation the mean field game is defined with reference to a pair of average processes $\overline X_t$ and $\overline \Gamma_t$ against which the equilibrium is defined through a fixed-point stationarity identity. We provide a verification procedure similar to that in \cite[Proof of Theo.~2]{platonov2021forward-published}.
The original constant strategy $\pi$ is a MF-equilibrium if and only if for all $t\geqs 0,~\bP$-a.s.
\begin{align*}
    &\begin{cases}
        \bE[\log X^{\pi,c}_t| \cF^B_t] = \bE[\log X^{\pi^*,c^*}_t| \cF^B_t],\\
        \bE[\log c_t| \cF^B_t] = \bE[\log c^*_t| \cF^B_t],
    \end{cases}\\ 
    \Leftrightarrow \quad
    &\begin{cases}
        \overline{\log\xi} + \int_0^t\big(\overline{\mu \pi_s} - \frac12 \overline{\Sigma \pi_s^2} \big)ds + \int_0^t\overline{\sigma \pi_s} dB_s - \int_0^t\bar{c}_sds 
        \\
        \hspace{4cm}=
        \overline{\log\xi} + \int_0^s\big(\overline{\mu \pi^*_s}-\frac12\overline{\Sigma (\pi_s^*)^2}\big) ds + \int_0^t\overline{\sigma \pi_s^*} dB_s-\int_0^t\overline{c^*_s}ds,\\
        \tilde{c}_t = \tilde{c}^*_t,
\end{cases}
\end{align*}
where we denote $\tilde{c}_t := \exp\bE[\log c_t| \cF^B_t] = \exp\bE[\log c^*_t| \cF^B_t] =: \tilde{c}^*_t$.
After taking expectations in the first equation it follows that $\pi$ is a MF-equilibrium if and only if the following three conditions hold $\bP$-a.s.
\begin{align}
\label{eq:equilibrium-big-condition-SM}
    \begin{cases}
        \overline{\sigma \pi_t}
        =\overline{\sigma \pi_t^*},
        \\
        \int_0^t \big(\overline{\mu \pi_s} -\frac12\overline{\Sigma \pi_s^2} \big) ds\ - \int_0^t \overline{c_s}ds  =\int_0^t\big(\overline{\mu \pi_s^*} -\frac12\overline{\Sigma (\pi_s^*)^2}\big) ds - \int_0^t \overline{c^*_s}ds,\\
        \tilde{c}_t  = \tilde{c}^*_t.
    \end{cases}
\end{align}
Using \eqref{eq:best-responses-mfg} (with $U_x/U_{xx}=-\delta x$ replaced in) one derives (using the expressions $\varphi^\sigma,\psi^\sigma$)
\begin{align*}
\sigma \pi_t^* & =  \theta(1-\delta)\frac{ \sigma^2}{\nu^2+\sigma^2} \overline{\sigma \pi_t} +  \delta \frac{\mu\sigma }{\nu^2+\sigma^2}
\qquad  \Rightarrow  \quad 
\overline{\sigma \pi^*}
= \overline{\sigma \pi_t} \psi^\sigma + \varphi^\sigma.
\end{align*}
Using that $\overline{\sigma \pi_t}=\overline{\sigma \pi_t^*}$ yields solvability if $\psi^\sigma=\bE\big[\theta(1-\delta)\frac{ \sigma^2}{\nu^2+\sigma^2}\big]\neq 1$. Thus
\begin{align}
\label{eq:sigma-pi-pverline}
    \overline{\sigma \pi^*}=\overline{\sigma \pi}=\frac{\varphi^\sigma}{1-\psi^\sigma}=\textrm{constant},
\end{align}
and the $\pi^*$ expression of \eqref{eq:mfg-equilibrium-pi-c} follows. To exploit the next condition, we solve PDE \eqref{eq:MF-SPDE-consumption} under Hypothesis \ref{ass:form-of-U-V-mean-field-g-with-f}. Together with the optimal candidate consumption we have
\begin{align}
    \label{eq:mfg-solving-equilibrium-system-SM}
    \begin{cases}
        c^*_t = \epsilon^{-\delta}(\tilde c_t)^{\theta(1-\delta)} f_i(t)^{-\kappa\delta}, \\
        f^\prime(t) + \Big(\chi - \theta \big(1-\frac1{\delta}\big)\big(\overline{\mu\pi_t} -  \frac12\overline{\Sigma \pi_t^2} - \bar{c}_t \big)\Big) f(t) + \frac{\epsilon^{-\delta}}{\delta} (\tilde c_t)^{\theta(1-\delta)}  f(t)^{1-\kappa\delta} = 0,
    \end{cases}
\end{align}
with
\begin{align*}
    \chi = \Big(1-\frac1{\delta}\Big) \bigg(\frac{\delta\big(\mu-\theta\sigma\overline{\sigma\pi^*}(1-\frac{1}{\delta})\big)^2}{2(\nu^2+\sigma^2)} + \frac{\theta^2}{2} \big(\overline{\sigma\pi^*}\big)^2 \big(1-\frac1{\delta}\big) - {r (1-\theta)} \bigg) - {\rho}.
\end{align*}
Plugging the first equation of \eqref{eq:mfg-solving-equilibrium-system-SM} into the second one, we solve for $f$ to obtain 
\begin{align*}
    f(t) = \exp\Big(- \int_0^t \Big(\chi - \theta\big(1-\frac1{\delta}\big) \big(\overline{\mu\pi_s} -  \frac12\overline{\Sigma \pi_s^2} - \bar{c}_s \big) + \frac1\delta c^*_s \Big)ds\Big).
\end{align*}
Now we substitute it back into the first equation of \eqref{eq:mfg-solving-equilibrium-system-SM} to get
\begin{align*}
\nonumber
    c^*_t \exp\big(-\kappa\int_0^tc_s^*ds\big)
    &= \epsilon^{-\delta}(\tilde c_t)^{\theta(1-\delta)}e^{\chi\delta\kappa t}
    \\ 
    &\qquad \times \exp\Big(\theta(1-\delta)(\kappa-1)\int_0^t \big(\overline{\mu\pi_s} -  \frac12\overline{\Sigma \pi_s^2} - \bar{c}_s \big)ds\Big).
\end{align*}
Now we substitute according to second equilibrium identity of \eqref{eq:equilibrium-big-condition-SM} to obtain
\begin{align}
    \label{eq:mfg-solving-c-step2-SM}
    & c^*_t \exp\big(-\kappa\int_0^tc_s^*ds\big)
    = \epsilon^{-\delta}(\tilde c_t)^{\theta(1-\delta)}e^{\eta\delta\kappa t} \exp\big(-\theta(1-\delta)\kappa\int_0^t \overline{c_s^*}ds\big),
\end{align}
where
\begin{align*}
    \eta_t = \chi - \theta\big(1-\frac1\delta\big)\bigg(\overline{\mu\pi_t^*} -  \frac12\overline{\Sigma (\pi_t^*)^2}\bigg).
\end{align*}
Taking the logarithm, expectation and exponent on both sides of \eqref{eq:mfg-solving-c-step2-SM} we get
\begin{align*}
    \tilde{c}^*_t \exp\big(-\kappa\int_0^t\overline{c_s^*}ds\big)= \big(\widetilde{\epsilon^{\delta}}\big)^{-1}e^{\kappa\overline{\eta\delta}t} \Big(\tilde c_t\exp\big(-\kappa\int_0^t \overline{c_s^*}ds\big)\Big)^{\overline{\theta(1-\delta)}},
\end{align*}
where $\widetilde{\epsilon^{\delta}} = \exp\bE[\delta\log \epsilon]$, $\overline{\eta\delta} = \bE[\eta \delta]$ and $\overline{\theta(1-\delta)} = \bE[ \theta(1-\delta)]$.
Using the last equilibrium identity,  $\tilde{c}_t = \tilde{c}^*_t$, we rewrite the expression as
\begin{align*}
    \tilde{c}^*_t \exp\big(-\kappa\int_0^t\overline{c_s^*}ds\big)= \widetilde{\epsilon^{\delta}}^{\frac1{\overline{\theta(1-\delta)}-1}} e^{-\frac{\overline{\eta\delta}}{\overline{\theta(1-\delta)}-1}\kappa t}.
\end{align*}
Plugging it back into \eqref{eq:mfg-solving-c-step2-SM} we obtain
\begin{align*}
    c^*_t \exp\big(-\kappa\int_0^tc_s^*ds\big) = \lambda e^{-\kappa \beta t},
\end{align*}
where
\begin{align*}
     \lambda&=\epsilon^{-\delta}\big(e^{\bE[\delta\log \epsilon]}\big)^{\frac{\theta(1-\delta)}{\bE[ \theta(1-\delta)]-1}}, 
    \qquad 
    \beta = \frac{\theta(1-\delta)}{\bE[ \theta(1-\delta)]-1}\bE[\eta \delta] -\eta\delta.
\end{align*}
The same arguments used in the proof of Theorem \ref{theo:nPlayerForwardNashGame-consumption} yield
\begin{align*}
    c^*_t =  
    \begin{cases}
        \bigg(\frac{1}{\beta} + \Big(\frac1{\lambda} - \frac1{\beta}\Big)e^{\kappa\beta t} \bigg)^{-1}, & \beta \neq 0,\\
        \big(-\kappa t + \frac1{\lambda}\big)^{-1}, & \beta = 0.
    \end{cases}
\end{align*}
As in Theorem \ref{theo:nPlayerForwardNashGame-consumption}, one finds that for certain combinations of parameters, namely $\kappa > 0,~ \beta < \lambda$ and $\kappa > 0,~\beta = 0$, the optimal consumption $c$ in not continuous and even can be negative. These cases are not admissible and admissibility against parameter combinations is summarized in Table \ref{table:consumptionSignAsMapofParameters}.

Finally, from the first line of \eqref{eq:mfg-solving-equilibrium-system-SM}, we get that
\begin{align*}
    f(t) = \Big(\big(c_t\big)^{\frac1{\delta}}\big(\tilde{c}_t\big)^{\theta(\frac1{\delta}-1)}\epsilon\Big)^{-\frac1{\kappa}} \quad \text{and thus} \quad g(t) = \Big(\big( c_t\big)^{\frac1{\delta}}\big(\tilde{c}_t\big)^{\theta(\frac1{\delta}-1)}\epsilon\Big)^{\frac{\kappa-1}{\kappa}}.
\end{align*}
Now we are left to conclude using the closed forms of $\overline{\mu\pi_t^*}$ and $\overline{\Sigma(\pi_t^*)^2}$.
First, we have
\begin{align}
\label{eq:ConsistencyExpressions-alphasigma-alphamu}
\overline{\mu \pi_t^*} 
=  \frac{\varphi^\sigma}{1 - \psi^\sigma} \psi^\mu + \varphi^\mu =\textrm{constant},
\end{align}
using $\psi^\mu,\varphi^\mu$.
Finally, we find the expression for $\overline{\Sigma\pi^2}$. Multiplying separately the $\pi^*$ of \eqref{eq:mfg-equilibrium-pi-c} by $\sigma$ and $\nu$, squaring, taking expectation and summing the results, we have
\begin{align}
    \label{eq:ConsistencyExpressions-Sigmapi}
    \overline{\Sigma\pi^2} = \bE \bigg[ \frac1{\nu^2+\sigma^2}\Big( \theta (1-\delta)\sigma \frac{\varphi^\sigma}{1-\psi^\sigma}  +  \mu \delta \Big)^2 \bigg].
\end{align}
The explicit form of the expression $\eta$ follows by injecting these identities in \eqref{eq:rho-mfg}.

We now address the non-solvability. If in the last equation of \eqref{eq:ConsistencyExpressions-alphasigma-alphamu} one has $\psi^\sigma=1$ and $\varphi^\sigma\neq 0$ then the equation has no solution and hence no strong MF-equilibrium exists.  The case  $\psi^\sigma=1$ and $\varphi^\sigma= 0$ is impossible. Since $\mu>0$ and $\delta>0$ by hypothesis, it implies that $\sigma=0$ and hence that $\psi^\sigma=0 $ contradicting the condition $\psi^\sigma=1$.

\emph{Step 3. The MFG forward performance process dynamics.} Injecting the consistency PDE \eqref{eq:MF-SPDE-consumption} in the expression for  $d U(Z^{\pi,c}_t ,t)$ given into \eqref{eq:generalSDEafterItoWentzell-MFG-consumption} yields 
\begin{align*}
    &
    d Q(Z^{\pi,c}_t ,t) 
    \\
    &
    = 
    U_x(Z^{\pi,c}_t ,t)  \Big(\nu\pi_tdW_{t} 
    + \big(\sigma \pi_t- \theta\frac{\varphi^\sigma}{1- \psi^\sigma} \big)dB_t\Big)Z^{\pi,c}_t
    \\
    &
    \quad \quad +
    \frac12 U_{xx}(Z^{\pi,c}_t ,t) \frac{1}{(\nu^2+\sigma^2)}  \bigg| \pi_t (\nu^2+\sigma^2)
    -\Big( \theta (1-\delta)\sigma  \frac{\varphi^\sigma}{1- \psi^\sigma} + \mu \delta \Big)\bigg|^2(Z^{\pi,c}_t)^2dt
    \\
    &
    \quad \quad + V(\hat c_t Z^{\pi,c}_t, t) - U_x(Z^{\pi,c}_t,t)c_tZ^{\pi,c}_t - \widetilde V(U_x(Z^{\pi,c}_t,t),t).
    \end{align*}
\end{proof}

\section{Regions of monotonicity of optimal consumption}

We enhance Fig.~3 (case $\kappa=1$) of the earlier contribution \cite{lackersoret2020many} to the context of our work and comment further on the finer interplay of $\kappa$ with $\beta$ and $\lambda$ on the consumption policy $c$ (see end of Section \ref{sec:examplesInterpretations}). 

In Figure \ref{fig:plots-of-monotonicity-c}, we have two pictures of the regions of monotonicity of $c_t$ for $\kappa>0$ and $\kappa<0$. We can see that having $\kappa$ across the region given by $\delta=1$ symmetrically reverses the direction of monotonicity for $c$. The region of consumption in the plot having a constant consumption regime has a constant color (we do not mark such level curves apart from the region boundaries).

    \begin{figure}[hbt!]
        \centering 
        \includegraphics[width=\textwidth]{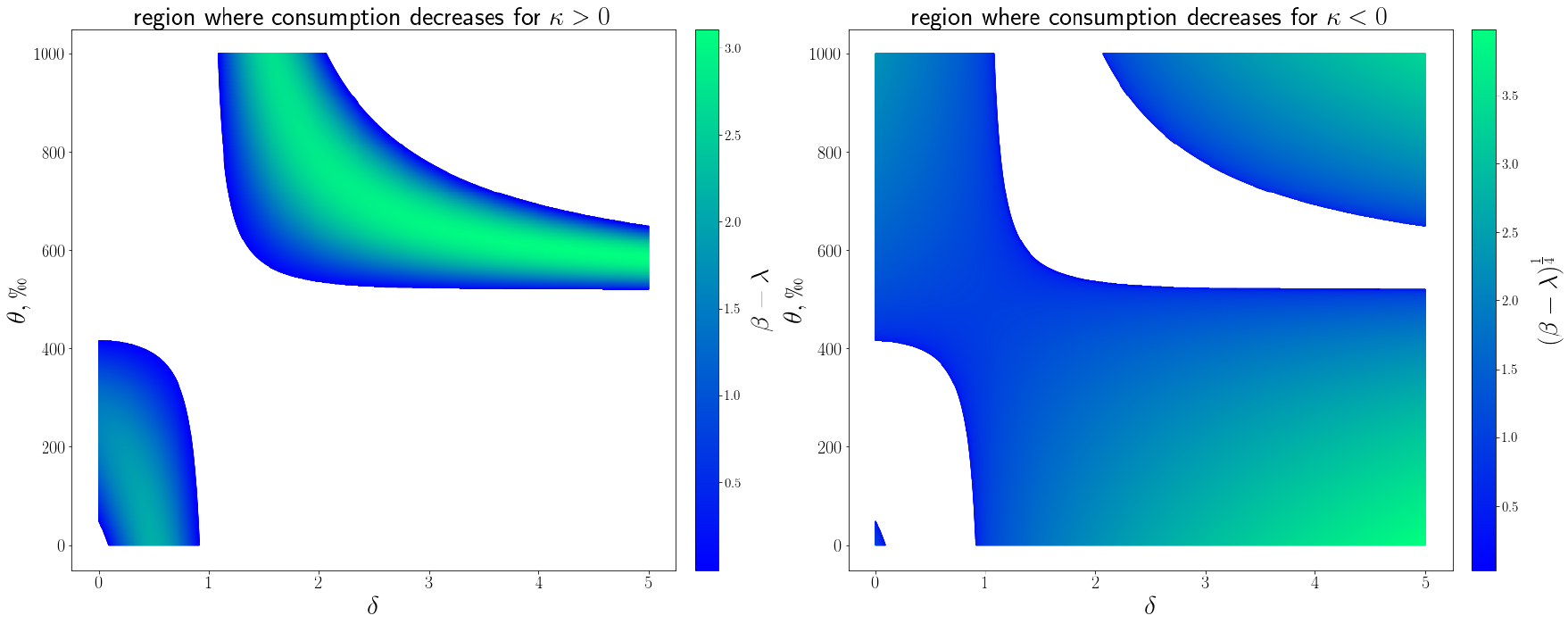}
        \caption{The regions of monotonicity for $c_t$ for $\kappa>0$ and $\kappa<0$ as function of $\delta$  and $\theta$ (in \textperthousand), the agent lying inside (outside) the coloured region decreases (increases) consumption rate over time. The agents on the border consume at a constant rate. The colour gradient relates to the speed of monotonicity characterized by $\beta-\lambda$ or a function of it. The set of parameters is taken from \cite[Fig.~3 (case $\kappa=1$)]{lackersoret2020many}, namely $\mu = 5,~ \sigma = 1,~ \epsilon = 1,~ \bE[\log \epsilon] = 0,~ \bE[\theta (1-\delta)] = -1.6,~ \bE[\delta] = 5,~ \theta_{\text{crit}} = 0.52,~{\rho = r = 0}$. 
        }
        \label{fig:plots-of-monotonicity-c}
    \end{figure}


%
%
%

\end{document}